\newcommand{\Set}[2]{\left\{ #1 \;\middle\vert\; #2 \right\}}
\newcommand{\myqedsymbol}{\rule{2mm}{2mm}}
\newcommand{\Circle}{\sigma}
\definecolor{blue25}{rgb}{0.0,0,0.1}
\newcommand{\emphi}{\textit}
\newcommand{\A}{\EuScript{A}}
\renewcommand{\Re}{{\rm I\!\hspace{-0.025em} R}}
\newcommand{\brc}[1]{\left\{ {#1} \right\}}
\newcommand{\pth}[2][\!]{#1\left({#2}\right)}
\newcommand{\query}{q}
\renewcommand{\Re}{\mathbb{R}}
\newcommand{\barX}[1]{\overline{#1}}
\newcommand{\bp}{\barX{p}}
\newcommand{\wpi}{\widehat{\pi}}
\newcommand{\D}{\EuScript{D}}
\newcommand {\mm}[1] {\ifmmode{#1}\else{\mbox{\(#1\)}}\fi}
\newcommand{\dir}[1]       {\mm{\rm d}{#1}}
\def\P{\EuScript{P}}
\def\T{\EuScript{T}}
\def\T{\EuScript{T}}
\newcommand{\reals}{\mathbb{R}}
\newcommand{\atgen}{\symbol{'100}}
\def\eps{\varepsilon}
\def\dist{d}
\def\Sup{\mathop{\mathrm{Sup}}}
\newcommand{\PVD}{\VorChar_{\Pr}}
\newcommand{\VorChar}{\EuScript{V}}
\newcommand{\NZVD}{\ensuremath{\VorChar_{\neq 0}}\xspace}
\newcommand{\NZVorCell}{\ensuremath{\mathrm{cell}_{\ne 0}}\xspace}
\newcommand{\Term}[1]{\textsf{#1}}
\newcommand{\NN}{\Term{NN}\xspace}
\newcommand{\spread}{\lambda}
\newcommand{\Vor}{\Term{Vor}\xspace}
\newcommand{\PNN}{\Term{PNN}\xspace}
\newcommand{\NZNN}{\ensuremath{\NN_{\ne 0}}\xspace}
\renewcommand{\th}{th\xspace}
\newcommand{\pdf}{\textsf{pdf}\xspace}
\newcommand{\cdf}{\textsf{cdf}\xspace}
\def\polylog{\mathop{\mathrm{polylog}}}
\def\eps{\varepsilon}
\def\Int{\mathop{\mathrm{int}}}
\def\index{\text{data structure}}
\def\anindex{\text{a \index}}
\def\Pr{\mathrm{Pr}}
\def\E{\mathsf{E}}
\newcommand{\remove}[1]{}
\newcommand{\si}[1]{#1}
\newcommand{\disk}{D}
\newcommand{\diskA}{E}
\newcommand{\diskW}{W}
\newcommand{\HLinkShort}[2]{\hyperref[#2]{#1\ref*{#2}}}
\newcommand{\HLink}[2]{\hyperref[#2]{#1~\ref*{#2}}}
\newcommand{\HLinkPage}[2]{\hyperref[#2]{#1~\ref*{#2}%
      $_\text{p\pageref{#2}}$}}
\newcommand{\HLinkPageOnly}[1]{\hyperref[#1]{Page~\refpage*{#1}%
      $_\text{p\pageref{#1}}$}}
\newcommand{\HLinkSuffix}[3]{\hyperref[#2]{#1\ref*{#2}{#3}}}
\newcommand{\HLinkPageSuffix}[3]{\hyperref[#2]{#1\ref*{#2}%
      #3$_\text{p\pageref{#2}}$}}%
\providecommand{\eqlab}[1]{}%
\renewcommand{\eqlab}[1]{\label{equation:#1}}
\newcommand{\Eqref}[1]{\HLinkSuffix{Eq.~(}{equation:#1}{)}}
\newcommand{\figlab}[1]{\label{fig:#1}}
\newcommand{\figref}[1]{\HLink{Figure}{fig:#1}}
\newcommand{\lemlab}[1]{\label{lemma:#1}}
\newcommand{\lemref}[1]{\HLink{Lemma}{lemma:#1}}%
\newcommand{\thmlab}[1]{{\label{theo:#1}}}
\newcommand{\thmref}[1]{\HLink{Theorem}{theo:#1}}
\newcommand{\seclab}[1]{\label{sec:#1}}
\newcommand{\secref}[1]{\HLink{Section}{sec:#1}}
\newcommand{\corlab}[1]{\label{cor:#1}}
\newcommand{\corref}[1]{\HLink{Corollary}{cor:#1}}
\newcommand{\DiskSet}{\EuScript{D}}%
\newcommand{\IntervalChar}{\EuScript{I}}%
\newcommand{\IntervalX}[1]{\IntervalChar\pth{#1}}%
\newcommand{\radiusX}[1]{\mathsf{r}\pth{#1}}%
\newcommand{\cardin}[1]{\left| {#1} \right|}
\newcommand{\constA}{\xi}
\newcommand{\ceil}[1]{\left\lceil {#1} \right\rceil}
\newcommand{\Wtau}{\diskW_{\ominus \tau}}
\def\qq{\bar q}
\def\ranX{\mathrm{X}}
\def\QQ{\bar Q}
\newcommand{\PankajThanks}[1]{\thanks{
      Department of Computer Science; %
      Duke University; %
      Durham, NC, 27708, USA; %
      {\tt pankaj\atgen{}cs.duke.edu}; %
      {\tt \url{http://www.cs.duke.edu/\string~pankaj/}}. %
      {#1}
   }%
}
\newcommand{\BorisThanks}[1]{%
   \thanks{%
      Polytechnic School of Engineering, New York University; %
      New-York City, NY, USA; %
      {#1}
   }
}
\newcommand{\JeffThanks}[1]{%
   \thanks{%
      School of Computing; %
      The University of Utah;
      {#1}
   }
}
\newcommand{\KeThanks}[1]{%
   \thanks{%
      Hong Kong University of Science and Technology;
      {#1}
   }
}
\newcommand{\WuzhouThanks}[1]{%
   \thanks{%
      Apple Inc.
      {#1}
   }
}
\newcommand{\SarielThanks}[1]{\thanks{Department of Computer Science;
      University of Illinois; 201 N. Goodwin Avenue; Urbana, IL,
      61801, USA; {\tt sariel\atgen{}illinois.edu}; {\tt
         \url{http://sarielhp.org/}.} #1}}
\theoremstyle{plain}%
\newtheorem{theorem}{Theorem}[section]
\newtheorem{lemma}[theorem]{Lemma}
\newtheorem{corollary}[theorem]{Corollary}
\theoremstyle{plain}%
\newtheorem*{remark:unnumbered}[theorem]{Remark}%
\theoremstyle{nonumberplain}%
\newtheorem{proof}{Proof:}%
\newcommand{\IncludeGraphics}[2][]{%
   \typeout{}%
   \typeout{Graphics: #2}%
   \typeout{\ includegraphics[#1]{#2}}%
   \includegraphics[#1]{#2}
   \typeout{}%
}
\begin{document}

\title{Nearest-Neighbor Searching Under Uncertainty II%
   \footnote{%
      A preliminary version of this article appeared in
      \textit{Proceedings of the ACM Symposium on Principles of
         Database Systems} (PODS), 2013. The title
      ``\textit{Nearest-Neighbor Searching Under Uncertainty I}'' has
      been reserved for the journal version of
      ~\cite{aesz-nnsuu-12}. %
      Most of the work on this paper was done while W. Zhang was at
      Duke University.%
   }%
}%
   
\author{%
   Pankaj~K.~Agarwal%
   \PankajThanks{Research on this paper was partially supported by NSF
      under grants CCF-09-40671, CCF-10-12254, CCF-11-61359, and
      IIS-14-08846.}%
   \and%
   Boris Aronov%
   \BorisThanks{Research on this paper has been partially supported by
      NSF grants CCF-08-30691, CCF-11-17336, and CCF-12-18791, and by
      \si{NSA MSP} Grant H98230-10-1-0210.%
   }%
   \and%
   Sariel Har-Peled%
   \SarielThanks{Research on this paper was partially supported by NSF
      grants CCF-09-15984 and CCF-12-17462.}%
   \and%
   Jeff M.  Philips%
   \JeffThanks{The research on this paper was partially supported by
      NSF CCF-1350888, IIS-1251019, and ACI-1443046. }%
   \and%
   Ke Yi%
   \KeThanks{%
      K.~Yi is supported by \si{HKRGC} under grants \si{GRF}-621413
      and \si{GRF}-16211614.  }%
   \and%
   Wuzhou Zhang%
   \WuzhouThanks{Research on this paper was partially supported by NSF
      under grants CCF-09-40671, CCF-10-12254, CCF-11-61359, and
      IIS-14-08846.}%
}%

\maketitle
\begin{abstract}
    Nearest-neighbor search, which returns the nearest neighbor of a
    query point in a set of points, is an important and widely studied
    problem in many fields, and it has wide range of applications.  In
    many of them, such as sensor databases, location-based services,
    face recognition, and mobile data, the location of data is
    imprecise. We therefore study nearest-neighbor queries in a
    probabilistic framework in which the location of each input point
    is specified as a probability distribution function.  We present
    efficient algorithms for
    \begin{inparaenum}[(i)]
        \item computing all points that are nearest neighbors of a
        query point with nonzero probability; and
        \item estimating the probability of a point being the nearest
        neighbor of a query point, either exactly or within a
        specified additive error.
    \end{inparaenum}
\end{abstract}

\section{Introduction}
Nearest-neighbor search is a fundamental problem in data management.
It has applications in such diverse areas as spatial databases,
information retrieval, data mining, pattern recognition, etc.  In its
simplest form, it asks for preprocessing a set $S$ of $n$ points in
$\Re^d$ into $\anindex$ so that given any query point $q$, the nearest
neighbor ($\NN$) of~$q$ in~$S$ can be reported efficiently.  This
problem has been studied extensively in database, machine learning,
and computational geometry communities, and is now relatively well
understood.  However, in some of the applications mentioned above,
data are imprecise and are often modeled as probabilistic
distributions.  This has led to a flurry of research activities on
query processing over probabilistic data, including the \NN problem;
see~\cite{a-mmud-09,drs-pddd-09} for surveys on uncertain data, and
see, e.g.,~\cite{cxycs-uvdvd-10,ls-aiap-07} for application scenarios
of \NN search under uncertainty.

Despite many efforts devoted to the probabilistic \NN problem, it
still lacks a theoretical foundation. Specifically, not only are we
yet to understand its complexity (is the problem inherently more
difficult than on precise data?), but we also lack efficient
algorithms to solve it. Furthermore, existing solutions all use
heuristics without nontrivial performance guarantees.  This paper
addresses some of these issues.

\subsection{Problem definition}

An \emphi{uncertain} point $P$ in $\Re^2$ is represented as a
continuous probability distribution defined by a probability density
function (\textsf{p{d}f}) $f_P\colon \Re^2 \rightarrow \Re_{\geq 0}$;
$f_P$ may be a parametric \pdf such as a uniform distribution or a
Gaussian distribution, or may be a non-parametric \pdf such as a
histogram\footnote{If the location of data is precise, we call it
   \emph{certain}.  The probabilistic model we use is often called the
   \emphi{locational model}, where the location of an uncertain point
   follows the given distribution.  This is to be contrasted with the
   \emphi{existential model}, where each point has a precise location
   but it appears with a given probability.}. %
The \emphi{uncertainty region} of~$P$ (or the \emphi{support}
of~$f_P$) is the set of points for which $f_P$ is positive, i.e.,
$\Sup$ $f_P = \brc{x \in \Re^2 \mid f_P(x) > 0}$.  We assume $P$ has a
bounded uncertainty region: if $f_P$ is Gaussian, we work with the
truncated Gaussian, as in~\cite{bsi-estkp-08,ccmc-pvecn-08}. We also
consider the case where $P$ is represented as a discrete distribution
defined by a finite set $P = \brc{ p_1, \ldots, p_k} \subset \Re^2$
along with a set of \emph{location probabilities}
$\{w_1, \ldots, w_k\} \subset (0, 1]$, where
$w_i = \Pr[P \text{ is } p_i]$ and $\sum_{i = 1}^k w_i = 1$; and we
say that $P$ has a discrete distribution of \textit{description
   complexity} $k$.  Let $\P = \{P_1, \ldots, P_n\}$ be a set of $n$
uncertain points in~$\Re^2$, and let $\dist(\cdot, \cdot)$ be the
Euclidean distance.

Fix a point $q \in \Re^2$ and an integer $i \in \{1,\ldots, n\}$.  We
define $\pi_i(q) = \pi(P_i, q)$ to be the probability of $P_i\in \P$
being the nearest neighbor of $q$, referred to as the
\emphi{quantification probability} of $q$ (for $P_i$).  Next, let
$g_{q, i}$ be the \pdf of the distance between $q$ and $P_i$. That is,
\begin{align*}
  g_{q, i}(r) = \Pr[ r \leq \dist(q, P_i) \leq r + \dir r ]  / \dir r.
\end{align*}
See \figref{1} for an example of $g_{q, i}$. Let
$G_{q,i}(r) = \int_0^r g_{q,i}(r') \dir r'$ denote the cumulative
distribution function (\cdf) of the distance between $q$ and
$P_i$. Note that if $P_i$ is the NN of $q$ and $\dist(P_i,q)=r$ then
$\dist(P_j, q) > r$ for all $j \ne i$. Therefore $\pi_i(q)$ can be
expressed as follows:
\begin{align}
  \eqlab{definition} \pi_i(q) = \int_0^{\infty} g_{q, i}(r)
  \prod_{j\neq i} \bigl(1 - G_{q,j}(r)\bigr) \dir r.
\end{align}

If $P_i$'s are represented by discrete distributions, then
\Eqref{definition} can be rewritten as follows:
\begin{align}
  \eqlab{definition2} \pi_i(q) = \sum_{p_{is} \in P_i} w_{is}
  \prod_{j \neq i} \Bigl(1 - G_{q, j}\bigl(\dist(p_{is}, q)\bigr)
  \Bigr),
\end{align}
where
\begin{math}%
    \displaystyle%
    G_{q, j}(r) = \sum_{\dist(p_{jt}, q) \leq r} w_{jt}.
\end{math}

Given a set $\P$ of $n$ uncertain points, the \emphi{probabilistic
   nearest neighbor} ($\PNN$) problem is to preprocess $\P$ into
$\anindex$ so that, for any given query point $q$, we can efficiently
return all pairs $(P_i, \pi_i(q))$ with $\pi_i(q)>0$.

\begin{figure}
    \centering
    \begin{tabular}{ccc}
      \IncludeGraphics[width=0.24\textwidth]{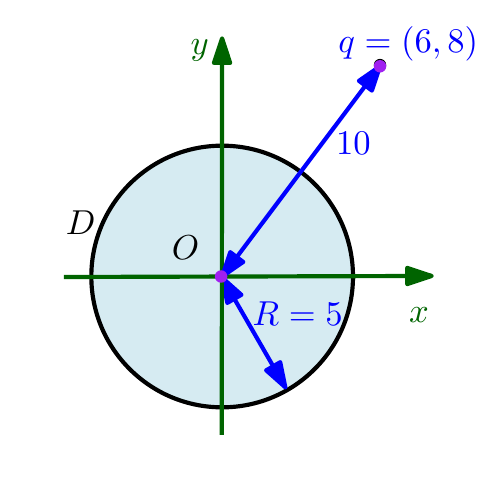}
      &\hspace*{15mm}&
                       \IncludeGraphics[width=0.39\textwidth]%
                       {figs/distance_distribution_square} \\
      && \\
      \small (a) && \small (b)
    \end{tabular}
    \caption{(a) $P_i$ is represented by a uniform distribution
       defined on a disk $D$ of radius $R = 5$ and centered at
       origin~$O$, $q = (6, 8)$; (b) $g_{q, i}(r)$, the \pdf of the
       distance function between $q$ and $P_i$.}%
    \figlab{1}
\end{figure}

Usually, the \PNN problem is divided into the following two
subproblems, which are often considered separately.

\paragraph{Nonzero \NN{}s.} %
The first subproblem is to find all the $P_i$'s with $\pi_i(q)>0$
without computing the actual quantification probabilities, i.e., to
find
\begin{align*}
  \NZNN(q, \P) = \{P_i \mid \pi_i(q) > 0\}.
\end{align*}
If the point set $\P$ is obvious from the context, we drop the
argument $\P$ from $\NZNN(q, \P)$, and write it as $\NZNN(q)$. Note
that $\NZNN(q)$ depends (besides $q$) only on the uncertainty regions
of the uncertain points, but not on the actual \pdf's.

A possible approach to compute nearest neighbors is to use Voronoi
diagrams.  For example, the standard Voronoi diagram of a set of
certain points in $\Re^2$ is the planar subdivision so that all points
in the same face have the same nearest neighbor.  In our case, we
define the \emphi{nonzero Voronoi diagram}, denoted by $\NZVD(\P)$, to
be the subdivision of $\Re^2$ into maximal connected regions such that
$\NZNN(q)$ is the same for all points $q$ within each region. That is,
for a subset $\T \subseteq \P$, let
\begin{align}%
  \eqlab{NZVorCell}%
  \NZVorCell(\T) = \{ q \in \Re^2 \mid \NZNN(q) = \T \}.
\end{align}

Although there are $2^n$ subsets of $\P$, we will see below that only
a small number of them have nonempty Voronoi cells.  The planar
subdivision $\NZVD(\P)$ is induced by all the nonempty
$\NZVorCell(\T)$'s for $\T \subseteq \P$. The \emph{(combinatorial)
   complexity} of $\NZVD(\P)$ is the total number of vertices, edges,
and faces in $\NZVD(\P)$.  The complexity of the Voronoi diagram is
often regarded as a measure of the complexity of the corresponding
nearest-neighbor problem.

In this paper, we study the worst-case complexity of $\NZVD(\P)$ and
how it can be efficiently constructed.  In addition, once we have
$\NZVD(\P)$, it can be preprocessed into a point-location structure to
support \NZNN queries in logarithmic time.

\paragraph{Computing quantification probabilities.} %

The second subproblem is to compute the quantification probability
$\pi_i(q)$ for a given $q$ and $P_i$.  Exact values of these
probabilities are often unstable --- a far away point can affect these
probabilities --- and computing them requires complex $n$-dimensional
integration (see~\Eqref{definition}), which is often expensive. As
such, we resort to computing $\pi_i(q)$ approximately within a given
additive error tolerance $\eps\in(0, 1)$.  More precisely, we aim at
returning a value $\wpi_i(q)$ such that
$|\pi_i(q) - \wpi_i(q)| \le \eps$.

\subsection{Previous work}

\paragraph{Nonzero \NN{}s.} %

\cite{se-gvdus-08} showed that if the uncertainty regions of the
points in $\P$ are disks, then the complexity of $\NZVD(\P)$ is
$O(n^4)$ (though they did not use this term explicitly); they did not
offer any lower bound.  If one only considers those cells of
$\NZVD(\P)$ in which $\NZNN(q)$ contains only one uncertain point
$P_i$, i.e., only $P_i$ has a non-zero probability of being the $\NN$
of $q$, they showed that the complexity of these cells is $O(n)$.
Note that for such a cell, we always have $\pi_i(q)= 1$ for any~$q$ in
the cell, so they form the \emph{guaranteed Voronoi diagram}. Probably
unaware of the work by \cite{se-gvdus-08}, \cite{cxycs-uvdvd-10}
proved an exponential upper bound for the complexity of the nonzero
Voronoi diagram, which they referred to as UV-diagram.

The nonzero Voronoi diagram is not the only way to find the nonzero
\NN{}s.  \cite{ckp-qidmo-04} designed a branch-and-prune solution
based on the $R$-tree.  Recently, ~\cite{zcmrz-vbnns-13} proposed to
combine the nonzero Voronoi diagram with R-tree-like bounding
rectangles. These methods do not provide any nontrivial performance
guarantees.

\paragraph{Computing quantification probabilities.} %
Computing the quantification probabilities has attracted much
attention in the database community. \cite{ckp-qidmo-04} used
numerical integration, which is quite expensive. \cite{ccmc-pvecn-08}
and \cite{bekmr-nppas-11} proposed some filter-refinement methods to
give upper and lower bounds on the quantification
probabilities. \cite{kkr-pnnqu-07} took a random sample from the
continuous distribution of each uncertain point to convert it to a
discrete one, so that the integration becomes a sum, and they
clustered each sample to further reduce the complexity of the query
computation.  \cite{dymtv-psqeu-05} considered the problem of
reporting points $P_i$ for which $\pi_i(q)$ exceeds some given
threshold.  We note that these methods are best-effort based: they do
not always give the $\eps$-error that we aim at --- how tight the
resulting bounds are depends on the data.

\paragraph{Other variants of the problem.} %

The \PNN problem we focus on in this paper is the most commonly
studied version of the problem, but many variants and extensions have
been considered.

Besides using the quantification probability, one can also consider
the expected distance from a query point $q$ to an uncertain point,
and return the one minimizing the expected distance as the nearest
neighbor; this was studied by \cite{aesz-nnsuu-12}. This \NN
definition is easier since the expected distance to each uncertain
point can be computed separately, whereas the quantification
probability involves the interaction among all uncertain points.
However, the expected nearest neighbor is not a good indicator under
large uncertainty (see \cite{ytxpz-snnsu-10} for details).

Instead of returning only the nearest neighbor, one can ask to return
the $k$ nearest neighbors in a ranked order (the $k$\NN problem).  If
we use expected distance, the ranking of points is straightforward,
namely, rank them in a non-decreasing order of the expected distance
from the query point.  However, when quantification probabilities are
considered, many different criteria for ranking the results are
possible, leading to different problem variants~\cite{jcly-srqpd-11}.

Various combinations of these extensions have been studied in the
literature; see, e.g., \cite{bsi-estkp-08,%
   cccx-eptkn-2009,kcs-cppop-14,%
   ls-aiap-07,ytxpz-snnsu-10}.

\subsection{Our results}
The main results of this paper are the following:
\begin{compactenum}[\quad(i)]
    \item[(i)] A $\Theta(n^3)$ bound on the combinatorial complexity
    $\NZVD(\P)$, an improved quadratic bound on the complexity of
    $\NZVD(\P)$ for a special case, and an efficient randomized
    algorithm for computing $\NZVD(\P)$;
    \item[(ii)] Near-linear-size data structures for answering $\NZNN$
    queries in polylogarithmic or sublinear time;
    \item[(iii)] Efficient data structures for computing the
    quantification probabilities of a query point approximately.
\end{compactenum}
We now describe these results in more detail:

\paragraph{Nonzero Voronoi diagrams.} %
We first study (in \secref{NZVD}) the complexity of
$\NZVD(\P)$. Suppose the uncertainty region of each $P_i \in \P$ is a
disk and $\dist(\cdot, \cdot)$ is the $L_2$ metric. We show that
$\NZVD(\P)$ has $O(n^3)$ complexity, and that this bound is tight in
the worst case even if all uncertainty-region disks have the same
radius.  This significantly improves the bound in \cite{se-gvdus-08}
and closes the problem. We also show that the $O(n^3)$ bound holds for
a much larger class of uncertainty regions, namely, even if each
uncertainty region is a semialgebraic set of constant description
complexity; see \secref{NZVD} for the definition of a semialgebraic
set.

If the disks are pairwise disjoint and the ratio of their radii is at
most $\spread$, then the complexity of $\NZVD(\P)$ is
$O(\spread n^2)$, and we prove a lower bound of $\Omega(n^2)$. Again,
this bound holds for a larger class of uncertainty regions.

We show that if each point in $\P$ has a discrete distribution of
description complexity at most $k$, then $\NZVD(\P)$ has $O(k n^3)$
complexity.

We present a randomized, output-sensitive algorithm for computing
$\NZVD(\P)$ in $O(n^2\log n + \mu)$ expected time, where $\mu$ is the
complexity of $\NZVD(\P)$. $\P$ can be preprocessed into a
point-location structure of size $O(\mu)$ that can answer an $\NZNN$
query in $O(\log n + t)$ time, where $t$ is the output
size~\cite{bcko-cgaa-08}.

\paragraph{Answering $\NZNN$ queries.} %
Since the complexity of $\NZVD(\P)$ can be cubic in the worst case, in
\secref{indexing_schemes} we present near-linear size data structures
for answering $\NZNN$ queries efficiently.  In particular, if the
uncertainty region of each point is a disk then an $\NZNN$ query can
be answered in $O(\log n + t)$ time using $O(n\polylog(n))$ space,
where $t$ is the output size.  If each point of $\P$ has a discrete
distribution of size at most $k$, then an $\NZNN$ query can be
answered in $O(N^{1/2}\log^3 N + t)$ time using $O(N\log^2 N)$ space,
where $t$ is the output size and $N=nk$.  These results rely on
geometric data structures for answering simplex range queries and
their variants; see~\cite{a-rs-16} for a recent survey.

\paragraph{Computing quantification probabilities.} %

Next, in \secref{sec:quanprob}, we focus our attention on computing
quantification probabilities for a query point $q$, i.e., reporting
the values of $\pi_i(q)$ for all $P_i$'s for which $\pi_i(q) > 0$.  We
begin in \secref{exact}, by describing a data structure that can
compute quantification probabilities exactly if each $P_i$ has a
discrete distribution of size at most $k$. Its size is $O(N^4)$ and it
can return all $t$ positive quantification probabilities for a query
point in time $O(\log N+t)$, where $N=nk$ as above. Since computing
quantification probabilities is expensive even for points with
discrete distributions, we mostly focus on computing them
approximately.

We present two data structures for approximating the quantification
probabilities efficiently. The first (see \secref{Monte:Carlo}) is a
Monte-Carlo algorithm for estimating $\pi_i(q)$ for any $P_i$ and $q$
within additive error $\eps$ with probability at least $1 - \delta$,
for parameters $\eps, \delta \in (0, 1)$.  We argue that if each
uncertain point has a discrete distribution of size at most $k$, then
we can estimate $\pi_i(q)$ within additive error $\eps$ with
probability at least $1 - \delta$ by using
$s_{\eps, \delta}= O((1/\eps^2) \log (N/\delta))$ random
instantiations of $\P$. (Note that there are at most $1/\eps$ $P_i$'s
for which $\pi_i(q) > \eps$.)  Consequently, we can preprocess $\P$
into $\anindex$ of size $O((n/\eps^2) \log (N/\delta))$ so that for
any query point $q \in \Re^2$, $\pi_i(q)$ for all $P_i$'s can be
estimated within additive error $\eps$ in
$O((1/\eps^2)\log (N/\delta)\log n)$ time, with probability at least
$1-\delta$.  The algorithm explicitly computes the estimates of
$\pi_i(q)$'s for at most $s_{\eps, \delta}$ points and sets the
estimate to~$0$ for the rest of the points.  We also show that this
approach works even if the distribution of each $P_i$ is continuous,
by approximating a continuous distribution with a discrete one. A key
observation is that it suffices to sample a polynomial number of
points from the distribution of each $P_i$ to ensure that the error in
the quantification probability is at most $\eps$.

Next, in \secref{SpiralSearch} we describe a deterministic algorithm
for computing $\pi_i(q)$ approximately if each point has a discrete
distribution of size at most $k$.  We show that $\P$ can be
preprocessed into $\anindex$ of $O(N)$ size so that for any
$q \in \reals^2$ and for any $\eps \in (0, 1)$, $\pi_i(q)$, for all
$i \in \{1, \ldots, n\}$, can be computed with additive error at most
$\eps$ in $O(\rho k \log (\rho/\eps) + \log N)$ time, where $\rho$ is
the ratio of the largest to the smallest location probabilities over
all possible locations of points in $\P$. We show that there are at
most $m(\rho,\eps)= \rho k \ln (\rho/\eps)+k-1$ points of $\P$ for
which $\pi_i(q) > \eps$. The algorithm explicitly estimates $\pi_i(q)$
for at most $m(\rho,\eps)$ points and sets the estimate to $0$ for the
rest of the points.  

\section{Nonzero probabilistic Voronoi diagram}%
\seclab{NZVD} %

Let $\P$ be a set of $n$ uncertain points as described earlier. We
analyze the combinatorial structure of $\NZVD(\P)$ and describe
algorithms for constructing it. We first consider the case when the
distribution of each point is continuous and then consider the
discrete case.

\subsection{Continuous case} %
\seclab{VD:continuous}

For simplicity, we first assume that the uncertainty region of each
$P_i$ is a circular disk $D_i$ of radius $r_i$ centered at $c_i$.

We first observe that the structure of $\NZVD(\P)$ does not depend on
the actual \pdf of $P_i$'s.  What really matters is the uncertainty
region $D_i$. More precisely, for each $1 \leq i \leq n$ and for
$\query \in \Re^2$, let
\begin{align*}
  &\Delta_i(\query) = \max_{p \in D_i} \dist(\query, p)
    = \dist(\query, c_i) + r_i,\\
    %
  \quad
  &\delta_i(\query) = \min_{p \in D_i} \dist(\query, p) =
    \max\{\dist(\query, c_i) - r_i, 0\}
\end{align*}
be the maximum and minimum possible distance, respectively, from
$\query$ to $P_i$.

The following lemma, whose proof is straightforward, characterizes the
structure of $\NZVD(\P)$.
\begin{lemma}%
    \lemlab{NZobservation}
    For a point $q \in \Re^2$, a point $P_i \in \P$ belongs to
    $\NZNN(q, \P)$ if and only if
    \begin{align*}
      \delta_i(q) < \Delta_j(q) \text{ for all } 1 \leq j\neq i \leq
      n.
    \end{align*}
\end{lemma}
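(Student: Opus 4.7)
The plan is to unpack the integral formula~\Eqref{definition}: the claim is equivalent to the assertion that $\pi_i(q) > 0$ if and only if every factor in the integrand is simultaneously nonzero on a set of radii of positive measure. I would prove the two directions separately.

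For the ``if'' direction, assume $\delta_i(q) < \Delta_j(q)$ for every $j \neq i$, and set $R = \min_{j \neq i} \Delta_j(q)$, so $R > \delta_i(q)$. Pick any $r^\star \in (\delta_i(q), R)$ and a small $\eta > 0$ with $r^\star + \eta < R$ and $r^\star + \eta \le \Delta_i(q)$. On $[r^\star, r^\star + \eta]$, the distance \cdf $G_{q,i}$ is strictly increasing: under the standing convention $\Sup f_{P_i} = D_i$, the annular strip $\{p \in D_i : r^\star \le \dist(q,p) \le r^\star + \eta\}$ has positive planar measure, so $G_{q,i}(r^\star + \eta) - G_{q,i}(r^\star) > 0$, and $g_{q,i}$ is strictly positive on a subset of $[r^\star, r^\star + \eta]$ of positive measure. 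Simultaneously, for every $j \neq i$ and every $r$ in this interval we have $r < \Delta_j(q)$, whence $1 - G_{q,j}(r) > 0$. Substituting into~\Eqref{definition} then yields a strictly positive integrand on a set of positive measure, so $\pi_i(q) > 0$ and $P_i \in \NZNN(q, \P)$.

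For the ``only if'' direction I would argue the contrapositive. Suppose $\delta_i(q) \ge \Delta_j(q)$ for some $j \neq i$. For $r < \delta_i(q)$, the distance from $q$ to every point of $D_i$ exceeds $r$, so $g_{q,i}(r) = 0$. For $r \ge \delta_i(q) \ge \Delta_j(q)$, the distance from $q$ to $P_j$ is at most $\Delta_j(q) \le r$ almost surely, so $G_{q,j}(r) = 1$ and the factor $1 - G_{q,j}(r)$ annihilates the product in~\Eqref{definition}. Either way the integrand vanishes almost everywhere, giving $\pi_i(q) = 0$.

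The only genuinely delicate step is the strict monotonicity of $G_{q,i}$ on an open subinterval of $(\delta_i(q), \Delta_i(q))$; I expect this to be the main obstacle only in a nominal sense, since it is a direct consequence of the assumption that $f_{P_i}$ is positive throughout $D_i$. With that observation in hand, each direction reduces to reading off the endpoints of the support of $g_{q,i}$ and of $1 - G_{q,j}$, and the analogous argument in the discrete case goes through word-for-word using~\Eqref{definition2}, with sums replacing integrals.
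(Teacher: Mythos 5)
Your argument is correct. The paper declares this lemma's proof ``straightforward'' and omits it entirely, so there is no paper proof to compare against; your proof supplies the natural missing details by reading off positivity or vanishing of the integrand in~\Eqref{definition} from the supports of $g_{q,i}$ and of $1 - G_{q,j}$. Two small points worth making explicit: (a) in the forward direction you implicitly need $\delta_i(q) < \Delta_i(q)$ so that the interval $[r^\star, r^\star+\eta]$ can be chosen inside $(\delta_i(q),\min\{R,\Delta_i(q)\})$, and this holds because $D_i$, being the support of a continuous density, is a disk of positive radius; (b) the independence of the $P_j$'s is what lets you conclude positivity of $\pi_i(q)$ from positivity of each factor, and this is already baked into the product form of~\Eqref{definition}. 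Your remark that the discrete case transfers is also right, with the strict monotonicity of $G_{q,i}$ on an interval replaced by the observation that the closest location $p_{is^\star}$ of $P_i$ carries positive weight $w_{is^\star}$, and that for each $j$ the farthest location of $P_j$ lies strictly beyond $\delta_i(q)$, so $G_{q,j}(\delta_i(q)) < 1$.
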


Let $\Delta\colon \Re^2 \rightarrow \Re$ denote the \emphi{lower
   envelope}%
\footnote{The \emphi{lower envelope}, $L_F$, of a set $F$ of functions
   is their pointwise minimum, i.e., $L_F(x) = \min_{f\in F} f(x)$.
   The \emphi{upper envelope}, $U_F$, of $F$ is the pointwise maximum,
   i.e., $U_F(x) = \max_{f\in F} f(x)$.} %
of $\Delta_1, \ldots$, $\Delta_n$; that is, for any
$\query \in \Re^2$,
\begin{align*}
  \Delta(\query) = \min_{1\leq i \leq n} \Delta_i(\query).
\end{align*}

The projection of the graph of $\Delta(x)$ onto the $xy$-plane is the
additive-weighted Voronoi diagram of the points $c_1, \ldots, c_n$,
where the weight of $c_i$ is $r_i$, and the weighted distance from
$\query$ to $c_i$ is $\dist(\query, c_i) + r_i$, for $i=1,\ldots, n$.
Let $\mathbb{M}$ denote this planar subdivision. It has linear
complexity and each of its edges is a hyperbolic arc;
see~\cite{ab-gdt-86}. \lemref{NZobservation} implies that, for any
$\query \in \Re^2$,
\begin{align}%
  \eqlab{NZNN:obs}%
  \NZNN(\query, \P) = \brc{P_i \mid \delta_i(\query) <
  \Delta(\query)}.
\end{align}

\begin{figure}
    \centering
    \begin{tabular}{ccc}
      \IncludeGraphics[page=1,width=0.25\linewidth]%
      {\si{figs/nznn_observation}}
      & \hspace{15mm} &%
                        \IncludeGraphics[page=2,width=0.25\linewidth]%
                        {\si{figs/nznn_observation}}
    \end{tabular}%
    \caption{$\P = \{P_1, \ldots, P_5\}$, $\Delta(x) = \Delta_1(x)$,
       $\NZNN(x, \P) = \{P_1, P_2, P_3\}$,
       $\Delta(x') = \Delta_1(x')$, $\NZNN(x', \P) = \{P_1, P_2\}$,
       and $x'$ lies on an edge of $\NZVD(\P)$.}
    \figlab{NZNN:observation}
\end{figure}

See \figref{NZNN:observation}. It also implies that, as we move $x$
continuously in $\Re^2$, $\NZNN(x, \P)$ remains the same until
$\delta_i(x)$, for some $1\leq i \leq n$, becomes equal to $\Delta(x)$
(e.g., $x'$ in \figref{NZNN:observation}).  This observation was made
in earlier papers as well; see,
e.g.~\cite{ccmc-pvecn-08,ckp-qidmo-04}.  Using this observation we can
now characterize $\NZVD(\P)$.

For $i=1,\ldots, n$, let
$\gamma_i = \{ x \in \Re^2 \mid \delta_i(x) = \Delta(x)\}$ be the zero
set of the function $\Delta(x) - \delta_i(x)$. Set
$\Gamma = \{\gamma_1, \ldots, \gamma_n\}$.

The curve $\gamma_i$ partitions the plane into two open regions:
$\Delta(x) < \delta_i(x)$ and $\Delta(x) > \delta_i(x)$. By
\Eqref{NZNN:obs}, $P_i \in \NZNN(x, \P)$ for all points $x$ inside the
latter region and for none of the points $x$ inside the former
region. It is well known that, for any fixed $j \neq i$,
$\gamma_{ij} = \{ x \in \Re^2 \mid \delta_i(x) = \Delta_j(x)\}$ is a
hyperbolic curve~\cite{ab-gdt-86}. The curve $\gamma_i$ is composed of
pieces of $\gamma_{ij}$, for $j \neq i$. We refer to the endpoints of
these pieces as \emphi{breakpoints} of $\gamma_i$. They are the
intersection points of $\gamma_i$ with an edge of $\mathbb{M}$ and
correspond to points $q$ such that the disk of radius $\Delta(q)$
centered at $q$ touches (at least) two disks of $\D$ from inside,
touches $D_i$ from outside, and does not contain any disk of $\D$ in
its interior. See \figref{Two:Inside:One:Outside}.  Formally, we say
that a disk $D_1$ touches a disk $D_2$ from the \emphi{outside}
(resp.\ \emphi{inside}) if
$\partial D_1 \cap \partial D_2 \neq \emptyset$ and
$\Int D_1 \cap \Int D_2 = \emptyset$ (resp.\
$\Int D_2 \subseteq \Int D_1$).

\begin{figure}[t]
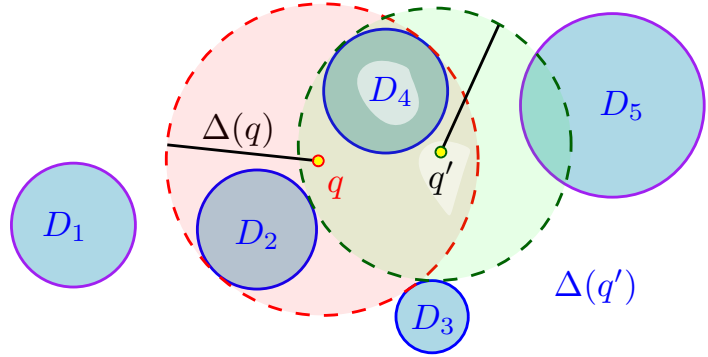

    \centering
    {\IncludeGraphics[width=0.5\linewidth]{figs/\si{two_inside_one_outside}}}
    \caption{The point $q$ is a break point of $\gamma_3$ and $q'$ is
       an intersection point of $\gamma_2$ and $\gamma_3$.}
    \figlab{Two:Inside:One:Outside}
\end{figure}

\begin{lemma}%
    \lemlab{breakpoints}%
    The curve $\gamma_i$, $1 \leq i \leq n$, has at most $2n$
    breakpoints, and it can be computed in $O(n\log n)$ time.
\end{lemma}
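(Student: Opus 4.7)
The plan is to parameterize $\gamma_i$ in polar coordinates centered at~$c_i$ and recognize it as the lower envelope of $n-1$ conic arcs that share a common focus, to which a Davenport--Schinzel bound then applies.

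\textbf{Star-shapedness.} I would first show that $R_i := \{q \in \Re^2 \mid P_i \in \NZNN(q)\}$ is star-shaped with respect to $c_i$. By \lemref{NZobservation}, for $q \notin D_i$, $q \in R_i$ iff $d(q, c_i) - d(q, c_j) < r_i + r_j$ for every $j \neq i$, and for $q \in D_i$ the condition $\delta_i(q) = 0 < \Delta_j(q)$ holds trivially. Along the ray from $c_i$ in direction $\theta$, parameterized by arclength $s \ge 0$ with $q(s) = c_i + s\hat{\theta}$, the derivative of $s - d(q(s), c_j)$ equals $1 - \cos\varphi \in [0,2]$, where $\varphi$ is the angle between $\hat{\theta}$ and $q(s)-c_j$. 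Hence each constraint is satisfied on an initial subinterval $[0, s_j(\theta))$ (possibly the entire ray), and their intersection $R_i \cap \mathrm{ray}_\theta = [0,\min_{j \neq i} s_j(\theta))$ is again an initial interval.

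\textbf{Polar envelope.} Consequently, in polar coordinates at $c_i$, the curve $\gamma_i$ is the graph $r = \min_{j \neq i} f_j(\theta)$, where $f_j(\theta) := s_j(\theta)$ is defined on an angular arc $\Theta_j \subseteq [0, 2\pi)$ and parameterizes the hyperbola $\gamma_{ij}$. Since $\gamma_{ij}$ is a conic with $c_i$ as a focus, each $f_j$ admits the standard focal form $r = p_j / (1 + e_j \cos(\theta - \alpha_j))$. Equating any two such focal forms reduces to an equation of the type $A\cos\theta + B\sin\theta = C$, which has at most two solutions on $[0, 2\pi)$; hence the family $\{f_j\}_{j \neq i}$ is pairwise $2$-intersecting, and breakpoints of $\gamma_i$ correspond exactly to transitions of the minimum between distinct $f_j$'s (equivalently, intersections of $\gamma_i$ with edges of $\mathbb{M}$).

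\textbf{Bound and algorithm.} Applying the Davenport--Schinzel bound $\lambda_2(m) \le 2m-1$ to the lower envelope of the $n-1$ partial functions $\{f_j\}$, together with an accounting of the at-most-two domain endpoints contributed by each $\Theta_j$, yields at most $2n$ breakpoints on $\gamma_i$. Each $f_j$ is extracted in $O(1)$ time from $c_i, c_j, r_i, r_j$, and the lower envelope of $n-1$ such conic arcs is computed by the standard divide-and-conquer procedure in $O(n\log n)$ time, from which the arcs of $\gamma_i$ and their breakpoints are read off directly. The main obstacle is sharpening the immediate $O(n)$ Davenport--Schinzel bound down to the clean constant $2n$: this requires careful bookkeeping of how the endpoints of the partial angular domains $\Theta_j$ interact with the periodic $\theta$-axis and with the pairwise crossing structure, so that the leading constant does not blow up.
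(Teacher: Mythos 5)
Your proof takes essentially the same route as the paper's: view $\gamma_i$ in polar coordinates centered at $c_i$ as the lower envelope of the curves $\gamma_{ij}$, observe that each pair of these intersects at most twice, and invoke the Davenport--Schinzel machinery for both the linear complexity bound and the $O(n\log n)$ divide-and-conquer construction. You additionally flesh out two facts the paper merely asserts (that each ray from $c_i$ meets $\gamma_{ij}$ at most once, via your star-shapedness/monotonicity argument, and that pairwise intersections are at most two, via the shared-focus conic form), and the subtlety you honestly flag about pinning down the exact constant $2n$ for partially-defined arcs on a circular parameter is glossed over by the paper in much the same way, so the two arguments are at the same level of rigor there.
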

\begin{proof}
    Let $\Gamma_i = \{ \gamma_{ij} \mid j \neq i, 1 \leq j \leq n \}$.
    It can be verified that a ray emanating from $c_i$ intersects the
    hyperbolic curve $\gamma_{ij}$, for any $j\neq i$, in at most one
    point, so $\gamma_{ij}$ can be viewed as the graph of a function
    in polar coordinates with $c_i$ as the origin. That is, let
    $\gamma_{ij}\colon [0, 2\pi) \rightarrow \Re_{\geq 0}$, where
    $\gamma_{ij}(\theta)$ is the distance from~$c_i$ to~$\gamma_{ij}$
    in direction $\theta$. Then, $\gamma_i$ is the lower envelope of
    $\Gamma_i$. Since each pair of curves in $\Gamma_i$ intersects at
    most twice, a well-known result on lower envelopes implies that
    $\gamma_i$ has at most $2n$ breakpoints, and that it can be
    computed in $O(n\log n)$ time~\cite{sa-dsstg-95}. See
    \figref{gammai} for an example.
\end{proof}

Let $\EuScript{A}(\Gamma)$ denote the planar subdivision induced by
$\Gamma$: its vertices are the breakpoints of $\gamma_i$'s and the
intersection points of two curves in $\Gamma$, its edges are the
portions of $\gamma_i$'s between two consecutive vertices, and its
cells are the maximal connected regions of the plane that do not
intersect any curve of $\Gamma$. We refer to vertices, edges, and
cells of $\EuScript{A}(\Gamma)$ as its 0-, 1-, and 2-dimensional
\emphi{faces}.

\begin{figure*}
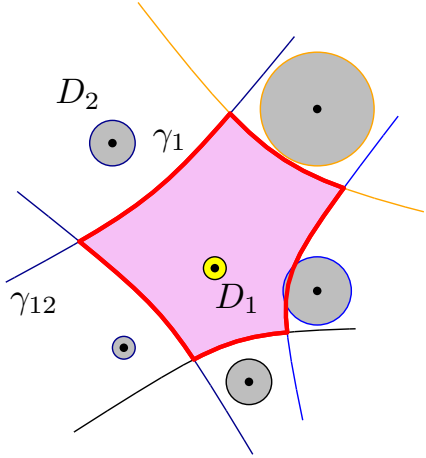

    \centering
    \IncludeGraphics[width=0.30\textwidth]%
    {\si{figs/vor_cell}}
    \caption{An example of $\gamma_1$.}
    \figlab{gammai}
\end{figure*}

For a face $\phi$ (of any dimension), and for any two points
$x , y \in \phi$, the sets $\{P_i \mid \delta_i(x) < \Delta(x)\}$ and
$\{P_j \mid \delta_j(y) < \Delta(y)\}$ are the same; we denote this
set by $\P_{\phi}$. Furthermore, if $x, y$ lie in two neighboring
faces $\phi$ and $\phi'$, respectively, then
$\P_{\phi} \neq \P_{\phi'}$. The following lemma is an immediate
consequence of \Eqref{NZNN:obs}.

\begin{lemma}
    For all points $x$ lying in a face $\phi$
    of~$\EuScript{A}(\Gamma)$, $\NZNN(x, \P) = \P_{\phi}$.
\end{lemma}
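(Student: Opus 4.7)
The plan is to derive the lemma immediately from \Eqref{NZNN:obs} together with the definition of $\P_{\phi}$ stated in the paragraph preceding the lemma. By \Eqref{NZNN:obs}, for any point $x \in \Re^2$,
\[
  \NZNN(x, \P) = \{P_i \mid \delta_i(x) < \Delta(x)\}.
\]
Fix a face $\phi$ of $\EuScript{A}(\Gamma)$ and an arbitrary point $x \in \phi$. By definition, $\P_\phi$ is exactly the set $\{P_i \mid \delta_i(x) < \Delta(x)\}$, so the equality $\NZNN(x, \P) = \P_\phi$ falls out at once, provided this set really is independent of the choice of $x \in \phi$ (so that the notation $\P_\phi$ is well-defined).

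To justify that well-definedness, I would unwind the construction of $\EuScript{A}(\Gamma)$. Each curve $\gamma_i \in \Gamma$ is the zero set of the continuous function $\Delta - \delta_i$, so the sign of $\Delta(x) - \delta_i(x)$ is constant on every connected component of $\Re^2 \setminus \gamma_i$, and in particular on every face of $\EuScript{A}(\Gamma)$ that is disjoint from $\gamma_i$. If, on the other hand, $\phi$ is an edge or vertex that lies on $\gamma_i$, then $\delta_i(x) = \Delta(x)$ throughout $\phi$, so the strict inequality $\delta_i(x) < \Delta(x)$ fails uniformly on $\phi$ and $P_i$ is consistently excluded from $\P_\phi$. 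In either case, for each fixed $i$ the predicate ``$\delta_i(x) < \Delta(x)$'' has the same truth value at every point of $\phi$, so $\{P_i \mid \delta_i(x) < \Delta(x)\}$ is literally the same set of indices for every $x \in \phi$.

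There is no serious obstacle here; the lemma is essentially a bookkeeping step that packages the pointwise characterization \Eqref{NZNN:obs} together with the face decomposition induced by $\Gamma$. The only point that requires a moment's attention is the handling of the lower-dimensional faces that may lie on one or more $\gamma_i$, and this is cleanly resolved by the strictness of the inequality defining $\NZNN$.
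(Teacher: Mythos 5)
Your proof is correct and follows essentially the same route as the paper: both treat the lemma as an immediate consequence of \Eqref{NZNN:obs} together with the fact that the sign of $\Delta - \delta_i$ is constant on each face of $\EuScript{A}(\Gamma)$. Your explicit treatment of the lower-dimensional faces lying on some $\gamma_i$ (where the strict inequality fails uniformly) is a reasonable elaboration of a point the paper leaves implicit.
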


For a subset $\T \subseteq \P$, let $\NZVorCell(\T)$ be as defined in
\Eqref{NZVorCell}.  An immediate corollary of the above lemma is:
\noindent%
\begin{corollary}
    \corlab{NZVD}
    \begin{inparaenum}[(i)]
        \item For any $\T \subseteq \P$,
        $\NZVorCell(\T) \neq \emptyset$ if and only if there is a face
        $\phi$ of $\EuScript{A}(\Gamma)$ with $\T = \P_{\phi}$.

        \item The planar subdivision $\EuScript{A}(\Gamma)$ coincides
        with $\NZVD(\P)$.
    \end{inparaenum}
\end{corollary}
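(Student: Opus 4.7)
The plan is to derive both parts as essentially immediate consequences of the preceding lemma (that $\NZNN(x,\P)=\P_\phi$ for every point $x$ in a face $\phi$ of $\EuScript{A}(\Gamma)$), combined with the two facts observed just before it: the set $\P_\phi$ is well-defined (independent of the representative $x \in \phi$), and for any two neighboring faces $\phi,\phi'$ of $\EuScript{A}(\Gamma)$ one has $\P_\phi \neq \P_{\phi'}$.

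For part (i), I would unfold definitions in both directions. For the ``if'' direction, suppose a face $\phi$ satisfies $\T=\P_\phi$. Choose any $x$ in the relative interior of $\phi$; the preceding lemma gives $\NZNN(x,\P)=\P_\phi=\T$, so $x\in\NZVorCell(\T)$ and the cell is nonempty. For the ``only if'' direction, pick any $x\in\NZVorCell(\T)$; since $\EuScript{A}(\Gamma)$ is a subdivision of $\Re^2$, the point $x$ lies in some face $\phi$, and the preceding lemma together with the definition of $\NZVorCell(\T)$ forces $\T=\NZNN(x,\P)=\P_\phi$.

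For part (ii), I would verify that the two subdivisions have the same cells. The preceding lemma shows that each $2$-face $\phi$ of $\EuScript{A}(\Gamma)$ is a connected region on which $\NZNN(\cdot,\P)$ is constantly equal to $\P_\phi$, so $\phi$ is contained in some cell of $\NZVD(\P)$. Conversely, by the adjacency fact $\P_\phi\neq\P_{\phi'}$ for neighboring faces, one cannot enlarge $\phi$ across any bounding curve of $\Gamma$ without changing $\NZNN$; hence $\phi$ is already a \emph{maximal} connected region of constant $\NZNN$, which is precisely what the definition of a cell of $\NZVD(\P)$ requires. The $1$- and $0$-dimensional faces then coincide since they are exactly the common boundaries of the $2$-cells in both subdivisions.

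There is no real obstacle here; the only subtlety is invoking the exact definition of $\NZVD(\P)$ as the subdivision into \emph{maximal} connected regions of constant $\NZNN$, so that the neighboring-face inequality $\P_\phi\neq\P_{\phi'}$ is precisely what is needed to prevent merging $2$-cells of $\EuScript{A}(\Gamma)$ across shared edges. Once this is pointed out, both claims reduce to a short definitional verification.
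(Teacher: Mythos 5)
Your proof is correct and matches the paper's (implicit) reasoning: the paper states the corollary as "an immediate corollary of the above lemma" with no written proof, and your argument is precisely the definitional unpacking of that lemma together with the neighboring-face inequality $\P_\phi \neq \P_{\phi'}$ that the paper records just before it.
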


We now bound the complexity of $\EuScript{A}(\Gamma)$ and thus of
$\NZVD(\P)$.

\begin{theorem}%
    \thmlab{continuous1}%
    Let $\P = \{P_1, \ldots, P_n \}$ be a set of $n$ uncertain points
    in $\Re^2$ whose uncertainty regions are disks. Then $\NZVD(\P)$
    has $O(n^3)$ complexity. Moreover, it can be computed in
    $O(n^2\log n + \mu)$ expected time, where $\mu$ is the complexity
    of $\NZVD(\P)$.
\end{theorem}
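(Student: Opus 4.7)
The plan is to bound the complexity of $\NZVD(\P) = \EuScript{A}(\Gamma)$ by counting its vertices and then invoking Euler's formula; by \corref{NZVD}(ii), it suffices to analyze $\EuScript{A}(\Gamma)$. Vertices come in two flavors: (a) breakpoints of individual curves $\gamma_i$, of which there are at most $2n$ per curve by \lemref{breakpoints}, giving $O(n^2)$ in total; and (b) proper crossings between distinct curves $\gamma_i$ and $\gamma_j$, which I will count by a triple-charging argument.

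For (b), at any crossing point $q \in \gamma_i \cap \gamma_j$ I have $\delta_i(q) = \delta_j(q) = \Delta(q)$, and since $\Delta(q) = \min_k \Delta_k(q)$, there is an index $k \notin \{i,j\}$ (in the nondegenerate case) with $\Delta_k(q) = \delta_i(q) = \delta_j(q)$. For a fixed triple $(i,j,k)$, each of the two equalities $\delta_i(q) = \Delta_k(q)$ and $\delta_j(q) = \Delta_k(q)$ defines a hyperbolic branch, and two such branches cross in $O(1)$ points; so each triple contributes $O(1)$ crossings, yielding $O(n^3)$ crossings overall. Summing (a) and (b), and observing that every vertex of $\EuScript{A}(\Gamma)$ has bounded degree, a standard Euler-formula argument shows that the number of edges and faces is also $O(n^3)$, establishing the combinatorial bound.

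For the algorithm, I first compute the additive-weighted Voronoi diagram $\mathbb{M}$, and hence the lower envelope $\Delta$, in $O(n\log n)$ time~\cite{ab-gdt-86}. Next, for each $i$ I compute $\gamma_i$ in $O(n\log n)$ time via \lemref{breakpoints}, for $O(n^2\log n)$ total work, producing a collection of $m = O(n^2)$ hyperbolic arcs. Finally, I build the arrangement $\EuScript{A}(\Gamma)$ by the randomized-incremental-construction algorithm for arrangements of constant-description-complexity arcs (e.g.~\cite{sa-dsstg-95}); since any two such arcs cross in $O(1)$ points, the Clarkson--Shor analysis gives expected running time $O(m\log m + I)$, where $I$ is the total number of pairwise crossings. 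Because $I \le \mu$, this simplifies to $O(n^2\log n + \mu)$.

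The main obstacle is step (b): I must verify that the two hyperbolic branches defined by $\delta_i(q) = \Delta_k(q)$ and $\delta_j(q) = \Delta_k(q)$ meet in only $O(1)$ points, and handle the degenerate situations where $q$ lies on $\partial D_i$ (so $\delta_i(q) = 0$) or where several indices simultaneously realize $\Delta(q)$; a general-position perturbation argument should absorb these cases without changing the asymptotic count. A minor secondary point is confirming that the Clarkson--Shor bookkeeping overhead is absorbed in $O(m\log m)$ even though each $\gamma_i$ is already partially preprocessed as a lower envelope.
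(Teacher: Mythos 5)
Your proposal is correct and follows essentially the same route as the paper: bound vertices only (edges and faces follow from Euler's formula and bounded degree after a perturbation to general position), count $O(n^2)$ breakpoints via \lemref{breakpoints}, charge each crossing of $\gamma_i$ and $\gamma_j$ to a triple $(i,j,k)$ where $D_k$ realizes $\Delta$, and use that the two hyperbolic branches $\delta_i=\Delta_k$ and $\delta_j=\Delta_k$ cross in $O(1)$ points; the algorithmic part likewise computes $\Gamma$ in $O(n^2\log n)$ time and then builds the arrangement by randomized incremental construction. The ``obstacle'' you flag about the $O(1)$ intersection bound and degeneracies is exactly what the paper also handles by the standard perturbation argument and the standard hyperbola-branch intersection fact, so there is no real gap.
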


\begin{proof}
    Using a standard perturbation argument (see,
    e.g.,~\cite{sa-dsstg-95}), it suffices to bound the complexity of
    $\NZVD(\P)$ when the disks corresponding to the uncertainty
    regions of the points of $\P$ are in general position, so we can
    assume that the degree of every vertex in $\NZVD(\P)$ is
    constant. Since $\NZVD(\P)$ is a planar subdivision and the degree
    of every vertex is constant, the number of edges and cells in
    $\NZVD(\P)$ is proportional to the number of its vertices. Hence,
    it suffices to bound the number of vertices. Let
    $\Gamma = \{\gamma_1, \ldots, \gamma_n \}$ be the set of curves as
    defined above. By \lemref{breakpoints}, each $\gamma_i$ has $O(n)$
    breakpoints, so there are a total of $O(n^2)$ breakpoints. We
    claim that each pair of curves $\gamma_i$ and $\gamma_j$ intersect
    $O(n)$ times --- each such intersection point corresponds to a
    point $v \in \Re^2$ such that the disk of radius $\Delta(v)$
    centered at $v$ touches $D_i$ and $D_j$ from the outside and
    another disk $D_k$ of $\D$, the one realizing the value of
    $\Delta(v)$, from the inside (e.g., $q'$ in
    \figref{Two:Inside:One:Outside}). For a fixed $k$, we show that
    there are at most two points $v$ such that
    $\delta_i(v) = \delta_j(v) = \Delta_k(v)$. Note that
    $\delta_i(v) = \Delta_k(v)$ represents either an empty set or one
    hyperbolic branch, and the same holds for
    $\delta_j(v) = \Delta_k(v)$. Two such hyperbolic branches
    intersect at most twice, implying that
    $\delta_i(v) = \delta_j(v) = \Delta_k(v)$ contributes at most two
    vertices. Hence, the number of vertices in $\NZVD(\P)$ is
    $O(n^3)$, as claimed.

    By \lemref{breakpoints}, $\Gamma$ can be computed in
    $O(n^2\log n)$ time.  The planar subdivision $\A(\Gamma)$ can be
    computed in $O(n\log n + \mu)$ expected time using randomized
    incremental method \cite{as-aa-00}, where $\mu$ is the complexity
    of $\NZVD(\P)$. Hence $\NZVD(\P)$ can be computed in
    $O(n^2\log n + \mu)$ expected time.
\end{proof}

The above argument is quite general and extends to a large class of
uncertainty regions. In particular, a two-dimensional
\emphi{semialgebraic set} is a subset of $\Re^2$ obtained from a
finite number of sets of the form $\{x\in\reals^2\mid g(x)\ge 0\}$,
where $g$ is a bivariate polynomial with real coefficients, by Boolean
operations (union, intersection, and complement).  A semialgebraic set
has \emph{constant description complexity} if the number of
polynomials defining the set as well as the maximum degree of these
polynomials is a constant. For example, a polygon with constant number
of edges and a region defined by a constant number of quadratic arcs
are semialgebraic sets of constant description complexity.

Suppose the uncertainty region of each point in $\P$ is a
semialgebraic set of constant description complexity, denoted by
$\sigma_i$.

The analysis for the case of disks shows that a vertex of $\NZVD(\P)$
is the center of a disk that touches uncertainty regions of three
different points. Fix a triple of uncertainty regions
$\sigma_i, \sigma_j, \sigma_k$. Since they are semialgebraic sets of
constant complexity, there are only $O(1)$ disks that are tangent to
$\sigma_1, \sigma_2$, and $\sigma_3$ simultaneously.  Therefore,
$\NZVD(\P)$ has $O(n^3)$ vertices, which in view of the above
discussion implies that $\NZVD(\P)$ has $O(n^3)$ combinatorial
complexity. Assuming an extension of the real RAM model of computation
in which the roots of polynomials of constant degree can be computed
exactly in $O(1)$ time, the randomized algorithm described above can
be extended to this case as well. Omitting further details, we
conclude the following:

\begin{theorem}%
    \thmlab{continuous-semi}%
    Let $\P = \{P_1, \ldots, P_n \}$ be a set of $n$ uncertain points
    in $\Re^2$ whose uncertainty regions are semialgebraic sets of
    constant description complexity. Then $\NZVD(\P)$ has $O(n^3)$
    complexity. Moreover, it can be computed in $O(n^2\log n + \mu)$
    expected time, where $\mu$ is the complexity of $\NZVD(\P)$.
\end{theorem}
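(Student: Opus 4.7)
The plan is to mimic the proof of \thmref{continuous1}, replacing the disk-specific primitives by their analogues for semialgebraic uncertainty regions and then invoking general bounds from real algebraic geometry.

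First I would generalize the distance functions by setting $\Delta_i(q) = \max_{p \in \sigma_i} \dist(q, p)$ and $\delta_i(q) = \min_{p \in \sigma_i} \dist(q, p)$. Since each $\sigma_i$ is a compact semialgebraic set of constant description complexity, both $\Delta_i$ and $\delta_i$ are continuous, piecewise algebraic functions on $\Re^2$ of constant complexity. \lemref{NZobservation} and \Eqref{NZNN:obs} carry over verbatim, so setting $\Delta = \min_i \Delta_i$ we still have $\NZNN(q, \P) = \brc{P_i \mid \delta_i(q) < \Delta(q)}$. Define $\gamma_i = \{x \in \Re^2 \mid \delta_i(x) = \Delta(x)\}$ and $\Gamma = \{\gamma_1, \ldots, \gamma_n\}$. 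The analogue of \corref{NZVD} then identifies the planar subdivision $\A(\Gamma)$ with $\NZVD(\P)$.

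Next I would bound $|\A(\Gamma)|$ by counting vertices; by a standard perturbation argument we may assume all vertex degrees are constant, so the number of edges and faces is proportional to the number of vertices. Vertices come in two flavors. (a) Intersections of two curves $\gamma_i$ and $\gamma_j$: such a point $v$ satisfies $\delta_i(v) = \delta_j(v) = \Delta_k(v)$ for some index $k$ realizing $\Delta$ at $v$; geometrically, the closed disk of radius $\delta_i(v)$ centered at $v$ is tangent from the outside to $\sigma_i$ and $\sigma_j$ while its radius simultaneously coincides with the farthest-point distance to $\sigma_k$. For each triple $(i,j,k)$, the fact that $\sigma_i,\sigma_j,\sigma_k$ are semialgebraic of constant description complexity yields only $O(1)$ such tangent disks, since the conditions translate into a system of two polynomial equations of constant degree in the coordinates of $v$ (apply Bezout). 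Summing over triples gives $O(n^3)$ vertices of this type. (b) Breakpoints on a single $\gamma_i$: each pairwise curve $\gamma_{ij} = \{x \mid \delta_i(x) = \Delta_j(x)\}$ is semialgebraic of constant description complexity and any two of them meet in $O(1)$ points, so $\gamma_i$ is the lower envelope of $n-1$ such curves and has complexity $\lambda_s(n)$ for a constant $s$ by the theory of Davenport--Schinzel sequences \cite{sa-dsstg-95}; summing over $i$ contributes $O(n \lambda_s(n))$ vertices, absorbed into $O(n^3)$.

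For the algorithmic part, under the extended real-RAM assumption stated in the paper, each $\gamma_i$ can be computed in $O(n \log n)$ time as the lower envelope of its $n-1$ constituent curves, yielding $\Gamma$ in $O(n^2 \log n)$ time; the arrangement $\A(\Gamma)$ is then built by the randomized incremental construction for arrangements of constant-description-complexity curves in $O(n^2 \log n + \mu)$ expected time \cite{as-aa-00}. The main obstacle I expect is verifying the per-triple $O(1)$ tangency bound cleanly: for disks it was an explicit check on a pair of hyperbolic branches, whereas here one must argue that the loci $\{\delta_i = \delta_j\}$ and $\{\delta_i = \Delta_k\}$ are themselves semialgebraic curves of constant description complexity, so that Bezout applies to their intersection, and that general position rules out positive-dimensional intersection components. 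Once that is granted, the rest is a direct transcription of the disk-case argument.
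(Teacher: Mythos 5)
Your proposal takes essentially the same route as the paper: identify vertices of $\NZVD(\P)$ as centers of disks tangent to three uncertainty regions, observe that for each triple of constant-description-complexity semialgebraic sets there are only $O(1)$ tritangent disks, and sum over triples to get $O(n^3)$. The algorithmic part (compute each $\gamma_i$, build $\A(\Gamma)$ by randomized incremental construction under an extended real-RAM model) is also what the paper does, and both you and the paper leave the semialgebraic verification details implicit.

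One small point where you diverge, and where the paper is actually cleaner: you separate breakpoints of a single $\gamma_i$ from intersections of two curves $\gamma_i \cap \gamma_j$, and for breakpoints you invoke a lower-envelope/Davenport--Schinzel bound of $\lambda_s(n)$ per curve. This is both unnecessary and slightly imprecise. It is unnecessary because a breakpoint of $\gamma_i$ is a point $v$ with $\delta_i(v) = \Delta_j(v) = \Delta_{j'}(v)$, i.e., the center of a disk tangent to $\sigma_i$, $\sigma_j$, $\sigma_{j'}$; the same per-triple $O(1)$ argument covers it, and you never need a sharper per-curve bound to reach $O(n^3)$. It is imprecise because for general semialgebraic uncertainty regions the curve $\gamma_i$ is the boundary of $\bigcap_{j} \{x : \delta_i(x) < \Delta_j(x)\}$, and there is no obvious coordinate system in which the $\gamma_{ij}$'s are graphs of functions, so $\gamma_i$ is not literally a lower envelope and the bound $\lambda_s(n)$ does not follow directly from the cited Davenport--Schinzel theory (the disk case used a polar-coordinate trick around $c_i$ that does not generalize). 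The paper avoids this by treating all vertex types uniformly through the tritangent-disk characterization, which you should also do. With that adjustment your argument is sound and matches the paper.
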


\begin{figure}[t]
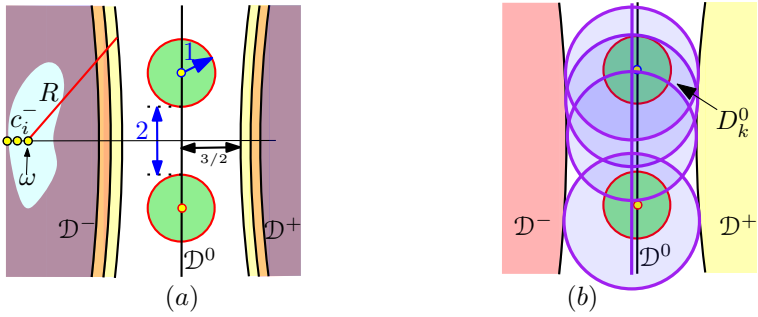

    \centering
    \IncludeGraphics[page=1]%
    {figs/lower_bound_n_3}%
    \hspace{25mm}%
    \IncludeGraphics[page=2]%
    {figs/lower_bound_n_3}
    \vspace{-3mm}
    \caption{(a) $\Omega(n^3)$ lower bound construction with $m = 3$;
       only some disks are drawn. (b) Illustration of the proof. }
    \figlab{n:3:lower:bound:construction}
\end{figure}
\begin{theorem}
    There exists a set $\P$ of $n$ uncertain points whose uncertainty
    regions are disks such that $\NZVD(\P)$ has $\Omega(n^3)$
    vertices.
\end{theorem}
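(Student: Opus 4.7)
The plan is to build an explicit $\P$ and count vertices triple by triple. Let $m = \lfloor n/3 \rfloor$ and partition $\P = \A \cup \B \cup \C$ into three groups of size $m$, denoted $\A = \{A_1, \ldots, A_m\}$, $\B = \{B_1, \ldots, B_m\}$, $\C = \{C_1, \ldots, C_m\}$. Recall from the proof of \thmref{continuous1} that every vertex $v$ of $\NZVD(\P)$ is the center of a disk of radius $\Delta(v)$ externally tangent to two disks of $\P$ and internally tangent to a third disk which realizes $\Delta(v)$, with no other disk of $\P$ strictly inside. The goal is to arrange the three groups so that for every triple $(A_i, B_j, C_k)$ there is at least one such witness $v_{ijk}$, which would yield $\Omega(m^3) = \Omega(n^3)$ distinct vertices.

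Geometrically, I would fix a small working region $W$ about the origin and proceed in two steps. First, place the disks of $\C$ as nearly-congruent disks of large radius $R$ with centers equally spaced on a small circle about the origin, so that every $C_k$ covers $W$ and the additive-weighted Voronoi diagram of $\C$ decomposes $W$ into $m$ angular sectors $\Omega_1, \ldots, \Omega_m$, with $\Delta$ realized by $C_k$ on $\Omega_k$. Second, place the disks of $\A \cup \B$ with centers far from $W$, at distance roughly $L \gg R$, and with radii tuned so that on all of $W$ one has $\delta_{A_i}(v), \delta_{B_j}(v) \approx \Delta_{C_k}(v) \approx \tfrac{3}{2}R$, while $\Delta_{A_i}(v), \Delta_{B_j}(v) \approx 2L - \tfrac{3}{2}R$, which is much larger than any value of $\Delta$ realized by $\C$ on $W$. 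This scale balance ensures that on $W$ the minimum in $\Delta$ is realized by the appropriate $C_k$, so $v \in \Omega_k$ implies $\Delta(v) = \Delta_{C_k}(v)$.

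For a fixed triple $(A_i, B_j, C_k)$, the system $\delta_{A_i}(v) = \delta_{B_j}(v) = \Delta_{C_k}(v)$ is the intersection of the perpendicular bisector of the centers of $A_i$ and $B_j$ with a hyperbolic branch of the form $|v - a_i| - |v - c_k| = \mathrm{const}$, giving at most two solutions by the same argument used in the upper bound proof. Placing the centers $a_i, b_j$ at generic angular positions on a circle of radius $L$, a constant fraction of the $m^3$ triples will produce a solution $v_{ijk}$ inside the sector $\Omega_k$; at such a $v_{ijk}$, the witness disk has radius $\approx \tfrac{3}{2}R$ and lies in a bounded neighborhood of $W$, so the scale balance prevents any other disk of $\P$ from lying strictly inside it, and $v_{ijk}$ is a genuine vertex of $\NZVD(\P)$.

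The main obstacle I expect is the counting step: verifying that a positive constant fraction of triples really do produce a witness inside the correct sector $\Omega_k$. The cleanest route is probably a highly symmetric placement --- for instance, centers on regular $m$-gons with rationally independent rotations across the three groups --- for which the sweep of the hyperbolic branches across $\Omega_1, \ldots, \Omega_m$ can be computed directly; a small generic perturbation then removes any accidental coincidences while preserving the vertex count. Once the count is established, the theorem follows because $m = \Theta(n)$.
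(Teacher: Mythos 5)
Your high-level picture is right --- every vertex comes from a disk externally tangent to two input disks and internally tangent to a third, and you want to manufacture $\Omega(n^3)$ such tangency events by making each triple from a three-way partition produce one. But the specific layout you describe has a counting defect that is fatal, not just an obstacle to be smoothed away by perturbation.

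Concretely: you place the centers $a_i, b_j$ of $\A\cup\B$ on a common circle of radius $L$ about the origin, and tune all radii so that $\delta_{A_i},\delta_{B_j}\approx\tfrac{3}{2}R$ on $W$; this forces $r_{A_i}\approx r_{B_j}\approx L-\tfrac{3}{2}R$ for all $i,j$. Hence the bisector curve $\beta_{ij}=\{\delta_{A_i}=\delta_{B_j}\}=\{\dist(v,a_i)-\dist(v,b_j)=r_{A_i}-r_{B_j}\}$ is a hyperbola branch whose offset from the origin is of order $|r_{A_i}-r_{B_j}|$, which is tiny; since $a_i,b_j$ are equidistant from the origin, every $\beta_{ij}$ passes essentially through the origin. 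Meanwhile the regions $\Omega_1,\ldots,\Omega_m$ are angular sectors whose common apex is also (approximately) the origin. A near-line through the apex of a fan of $m$ wedges meets only two of those wedges away from the apex. So for each fixed pair $(i,j)$ you can only realize witnesses for $O(1)$ values of $k$, not $\Omega(m)$, and the total is $O(m^2)=O(n^2)$. Moreover the required equation $\delta_{A_i}(v)=\Delta_{C_k}(v)$ confines the witness to a single $O(1)$-degree curve at radius $\Theta(R)$ from the origin, so the intersection $\beta_{ij}\cap\{\delta_{A_i}=\Delta_{C_k}\}$ yields only $O(1)$ candidate points per pair $(i,j)$ total, across all $k$ simultaneously --- the ``constant fraction of $m^3$ triples'' you hope for cannot materialize in this rotationally symmetric layout.

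The paper sidesteps exactly this issue by collapsing two of the three families. Its $\D^-$ and $\D^+$ are each $m$ disks of the same radius $R$ whose centers differ only by a microscopic translation $\omega=1/n^2$; as a result \emph{all} $m^2$ bisectors $b_{ij}$ of $(D^-_i,D^+_j)$ are essentially the same vertical line near the $y$-axis. The third family $\D^0$ is then spread \emph{linearly along that common line}, and the existence of a tangency for each $(i,j,k)$ follows from a one-dimensional continuity argument sweeping the witness disk along $b_{ij}$. In other words: to get $\Omega(n^3)$ you need all the pencils of externally tangent disks (indexed by pairs $(i,j)$) to be nearly identical, so that each pencil can collect a witness for every $C_k$; spreading $\A$ and $\B$ around a circle destroys exactly this property. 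If you want to salvage your construction, bundle $\A$ and $\B$ tightly (near-coincident centers), and spread $\C$ as small disks along the resulting common bisector direction --- at which point you have essentially reconstructed the paper's configuration.
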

\begin{proof}
    Assume that $n = 4m$ for some $m \in \mathbb{N}^+$. We choose two
    parameters $R = 8n^2$ and $\omega = 1/n^2$. We construct three
    families of disks: $\D^- = \{D^-_1,\ldots, D^-_m\}$,
    $\D^+ = \{D^+_1,\ldots, D^+_m\}$, and
    $\D^0 = \{D^0_1,\ldots, D^0_{2m}\}$.  The radius of all disks in
    $\D^- \cup \D^+$ is $R$ and their centers lie on the $x$-axis; the
    radius of all disks in $\D^0$ is 1 and their centers lie on the
    $y$-axis. More precisely, for $1\leq i, j\leq m$, the center of
    $D^-_i$ is $c^-_i = (-R - 3/2 - (i-1)\omega, 0)$ and the center of
    $D^+_j$ is $c^+_j = (R + 3/2 + (j-1)\omega, 0)$, and for
    $1\leq k \leq 2m$, the center of $D^0_k$ is $(0, 4(k - m) - 2)$.
    See \figref{n:3:lower:bound:construction}(a).

    We claim that for every triple $i, j, k$ with $1 \leq i, j \leq m$
    and $1\leq k \leq 2m$, there are two disks each of which touches
    $D^-_i$ and $D^+_j$ from the outside and $D^0_k$ from the inside
    and does not contain any disk of $\D^-\cup \D^+ \cup \D^0$ in its
    interior. See \figref{n:3:lower:bound:construction}(b).

    Fix such a triple $i, j, k$. Since the radii of $D^-_i$ and
    $D^+_j$ are the same, the locus $b_{ij}$ of the centers of disks
    that simultaneously touch $D^-_i$ and $D^+_j$ from the outside is
    the bisector of their centers, i.e., $b_{ij}$ is the vertical line
    $x = (x(c^-_i) + x(c^+_j))/2 = (j-i)\omega/2$. Let $\sigma_{ij}$
    denote the intersection point of $b_{ij}$ and the $x$-axis;
    $\sigma_{ij} = (\frac{1}{2}(j-i)\omega, 0)$. A point on $b_{ij}$
    can be represented by its $y$-coordinate; we will not distinguish
    between the two. For $y$-value $a$, let $\xi_a$ be the disk
    centered at $a$ and simultaneously touching $D^-_i$ and $D^+_j$
    from the outside. The radius of $\xi_a$ is
    \begin{align*}
      \| a - c^-_i\| - R%
      =%
      \sqrt{a^2 + \|c^-_i - \sigma_{ij}\|^2} - R 
      = %
      \sqrt{a^2 + \pth[]{R + 3/2 + \pth[]{\frac{i+j}{2} -
      1}\omega}^2} - R.
    \end{align*}

    The radius of $\xi_a$ is thus at least $3/2$, and for
    $a \in [-4m, 4m]$, it is at most 2 (using the fact that
    $R\geq 8n^2$ and $\omega = 1/n^2$). Hence, for $a \in [-4m, 4m]$,
    $\xi_a$ contains at most one disk of $\D^0$ in its interior, and
    obviously $\xi_a$ does not contain any disk of $\D^- \cup \D^+$ in
    its interior.

    Let $a_k = 4(k - m) - 2$. Then, the disk $\xi_{a_k}$ contains
    $D^0_k$ in its interior because the distance between the centers
    of $D^0_k$ and $\xi_{a_k}$ is at most $m\omega \leq 1/(4n)$, the
    radius of $D^0_k$ is 1, and the radius of $\xi_{a_k}$ is at least
    $3/2$. On the other hand, the disk $\xi_a$ for $a = a_k \pm 2$
    does not contain $D^0_k$ in its interior because the radius of
    $\xi_a$ is at most 2 and the distance between the center of
    $D^0_k$ and $\xi_a$ is at least 2. Therefore, by a continuity
    argument, there is a value $a^+ \in [a_k, a_k + 2]$ at which
    $\xi_{a^+}$ touches $D^0_k$ from the inside. Similarly, there is a
    value $a^- \in [a_k-2, a_k]$ at which $\xi_{a^-}$ touches $D^0_k$
    from the inside.

    This proves the claim that there are two disks touching $D^-_i$
    and $D^+_j$ from the outside and $D^0_k$ from the inside and not
    containing any disk of $\D^-\cup \D^+ \cup \D^0$ in its
    interior. In other words, each triple $i,j,k$ contributes two
    vertices to $\NZVD(\P)$. Hence, $\NZVD(\P)$ has $\Omega(n^3)$
    vertices.
\end{proof}

Next, we show that the maximum complexity of $\NZVD(\P)$ is
$\Omega(n^3)$ even if the uncertainty regions of points in~$\D$ are
disks of the same radius.

\begin{theorem}
    There exists a set $\P$ of $n$ uncertain points, whose uncertainty
    regions are disks of the same radius, for which $\NZVD(\P)$ has
    $\Omega(n^3)$ vertices.
\end{theorem}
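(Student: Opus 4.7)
The plan is to modify the previous $\Omega(n^3)$ construction so that all disks share a common radius $R$. The earlier construction crucially relied on the $\D^0$-disks being much smaller than those of $\D^-\cup\D^+$ (radius $1$ versus radius $R$), so that each witness disk had radius only slightly above $3/2$ and hence contained exactly one $D^0_k$. To reproduce this ``just-barely-contains-one'' property when all radii equal $R$, I place $\D^-$ and $\D^+$ at horizontal distance $L=2R+\tau$ from the $y$-axis for a small $\tau>0$, and space the $\D^0$-disks by $4\tau$ along the $y$-axis; the witness disk at any candidate vertex then has radius only $R+\Theta(\tau)$, just enough to hold one $D^0_l$ tangentially.

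Concretely, with $n=4m$, take $\tau = R/n^2$ and an auxiliary $\omega \ll \tau$, and place the $n$ disks of radius $R$ with centers
\begin{align*}
    c^-_i &= (-L,\,(i-1)\omega),\ L = 2R+\tau,\quad i=1,\ldots,m;\\
    c^+_j &= (L,\,(j-1)\omega),\quad j=1,\ldots,m;\\
    c^0_l &= (0,\,h_l),\ h_l = 4\tau(l - m - \tfrac12),\quad l=1,\ldots,2m.
\end{align*}
I claim that each triple $(i,j,l)$ produces two witness disks $\xi^\pm_{ijl}$, each tangent to $D^-_i,D^+_j$ from outside and to $D^0_l$ from inside, containing no other disk of $\P$.

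At the center $v$ of such a witness disk, the tangency conditions force $v$ to lie on the perpendicular bisector of $c^-_i c^+_j$ (a line $O(\omega)$-close to $x=0$) and to satisfy $\|v-c^-_i\| = 2R + \|v-c^0_l\|$. Writing $v=(0,y)$, the system reduces to $\sqrt{L^2+y^2} - |y-h_l| = 2R$, which, since $\tau\ll R$ and $|h_l|\leq 4\tau m = R/n\ll R$, has two solutions $y^\pm = h_l \pm \tau + O(\tau^2/R)$. Both solutions lie in the $\D^0$-Voronoi strip $|y-h_l|<2\tau$, so $c^0_l$ is the unique closest $\D^0$-center. The witness radius is $r = R + \tau + O(\tau^2/R)$, so $r < R + 3\tau \leq R + \|v-c^0_{l'}\|$ for every $l'\ne l$, ruling out containment of any other $\D^0$-disk; and $\|v-c^-_l\|,\|v-c^+_l\|\approx 2R$ gives $\|v-c^-_l\|+R\approx 3R\gg r$, ruling out containment of any $\D^-\cup\D^+$-disk. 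Hence each $\xi^\pm_{ijl}$ is a valid witness.

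Distinct triples produce distinct vertices: the $y$-coordinate is essentially determined by $l$, and the $\omega$-perturbation makes the bisector distinct for each pair $(i,j)$, giving distinct $x$-coordinates. This yields $2 \cdot m^2 \cdot 2m = n^3/16 = \Omega(n^3)$ vertices of $\NZVD(\P)$. The main obstacle is keeping the linearized tangency analysis accurate uniformly across $l$; the scaling $\tau = R/n^2$ bounds $|h_l|\leq R/n$, ensuring both $\tau/R$ and $|h_l|/R$ stay below $1/n$, comfortably in the regime where $y^\pm \approx h_l \pm \tau$ holds to leading order.
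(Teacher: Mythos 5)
Your construction is correct in spirit and takes a genuinely different route from the paper's proof. You scale the distance $L=2R+\tau$ so that every relevant witness disk has radius only $R+\Theta(\tau)$, thereby carrying over the logic of the unequal-radii construction by linearization. The paper instead places the $\D^0$-disks on an arc, each \emph{tangent} to $D^+_1$; this gives a slick, computation-free argument: a witness disk tangent to $D^+_1$ that strictly contained any other $\D^0$-disk would have to cut into $D^+_1$, a contradiction. For general $(i,j)$ the paper then uses a purely qualitative continuity/perturbation argument, whereas you push explicit asymptotics throughout. Both approaches work; the paper's is cleaner in that it needs no numerical estimates, while yours is more elementary and closer in form to the first lower bound.

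That said, a few spots need tightening before your version is airtight. First, your error term is misstated: writing $v\approx(0,y)$ with $|y|\approx |h_l|\le R/n$, the correction to $\sqrt{L^2+y^2}$ is $y^2/(4R)\le R/(4n^2)=\tau/4$, i.e.\ $O(h_l^2/R)$, not $O(\tau^2/R)$. The conclusion $|y^\pm-h_l|\le 5\tau/4 < 2\tau$ and $r < R+3\tau$ still holds, but you should fix the exponent so the margin is visibly sufficient. Second, the distinctness claim as stated is false: you perturb the $y$-coordinates of $c^-_i$ and $c^+_j$, so for every pair with $i=j$ the perpendicular bisector is exactly the line $x=0$; those pairs do \emph{not} produce distinct $x$-coordinates. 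The vertices are still distinct because the tangency radius (hence the solution $y^\pm$) depends on $(i-1)\omega$ and $(j-1)\omega$, but the justification must run through the $y$-coordinate, not the bisector line. (Relatedly, perturbing $x$-coordinates as in the paper, $c^-_i=(-L-(i-1)\omega,0)$, makes the bisectors genuinely vertical and the tangency analysis cleaner; you'd still face pairs with equal $j-i$ giving the same bisector line, so even then distinctness comes from differing witness radii, not differing bisectors.) Third, since the bisector in your $y$-perturbed setup is only $O(\omega)$-close to vertical, writing $v=(0,y)$ exactly is an approximation whose error you should control; this is easy given $\omega\ll\tau$ but should be stated. With these repairs the argument yields $\Omega(n^3)$ vertices as claimed.
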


\begin{figure*}
    \centering
    \begin{tabular}{ccc}

        \IncludeGraphics[width=0.45\textwidth]%
        {figs/same_radii_lower_bound_construction_1}
      &\hspace*{-5mm}&
                       \IncludeGraphics[width=0.45\textwidth]%
                       {figs/same_radii_lower_bound_construction_2} \\
      && \\
      \small (a) && \small (b)
    \end{tabular}
    \caption{(a) $\Omega(n^3)$ lower bound construction using disks of
       same radius with $m = 3$; only some disks are drawn. (b)
       Illustration of the proof.}
    \figlab{n:3:lower:bound:construction:same:radius}
\end{figure*}

\begin{proof}
    Assume that $n = 3m$ for some $m \in \mathbb{N}^+$. We choose two
    parameters $\theta = \frac{\pi}{2}\cdot \frac{1}{(m+1)}$, and a
    sufficiently small positive number $\omega$. We construct three
    families of disks: $\D^- = \{D^-_1,\ldots, D^-_m\}$,
    $\D^+ = \{D^+_1,\ldots, D^+_m\}$, and
    $\D^0 = \{D^0_1,\ldots, D^0_{m}\}$. Without loss of generality, we
    set the radius of all disks to~$1$. The centers of disks in
    $\D^- \cup \D^+$ lie on the $x$-axis, and the centers of disks in
    $\D^0$ lie in the first quadrant.  More precisely, for
    $1\leq i, j\leq m$, the center of $D^-_i$ is
    $c^-_i = (-2 - (i-1)\omega, 0)$ and the center of $D^+_j$ is
    $c^+_j = (2 + (j-1)\omega, 0)$, and for $1\leq k \leq m$, the
    center of $D^0_k$ is $(2 - 2\cos(k\theta), 2\sin(k\theta))$. See
    \figref{n:3:lower:bound:construction:same:radius}(a).

    We claim that for every triple $i, j, k$ with
    $1 \leq i, j, k\leq m$, there is a disk touching $D^-_i$ and
    $D^+_j$ from the outside and $D^0_k$ from the inside and not
    containing any disk of $\D^-\cup \D^+ \cup \D^0$ in its interior.

    First of all, this is true for $i = j = 1$ and $1\leq k \leq m$.
    Note that $D^-_1$ is centered at $(-2, 0)$, $D^+_1$ is centered at
    $(2, 0)$, and $D^0_k$ touches $D^+_1$ from the outside. Since the
    radius of $D^-_1$ and $D^+_1$ is the same, the locus $b_{11}$ of
    the centers of disks that simultaneously touch $D^-_1$ and $D^+_1$
    from the outside is the bisector of their centers, i.e., $b_{11}$
    is $y$-axis. Fix a value $k$. It is easy to see that the disk
    $D^*_{11k}$ centered at $(0, 2\tan(k\theta))$ with the radius
    $\frac{2}{\cos(k\theta)} - 1$ touches $D^-_1$ and $D^+_1$ from the
    outside and $D^0_k$ from the inside. Furthermore, we show that
    $D^*_{11k}$ does not contain disks in $\D^0$ in its interior
    (obvious for $\D^-\cup \D^+$). Since every disk in $\D^0$ touches
    $D^+_1$ from the outside, $D^*_{11k}$ containing a disk of $\D^0$
    in its interior would imply that $D^*_{11k}$ intersects the
    interior of $D^+_1$, a contradiction.

    Next, we show that it holds for $1 < i, j \leq m$ and
    $1 \leq k \leq m$. The key idea is that $D^-_i$ (resp. $D^+_j$)
    got placed by translating (``perturbing'') $D^-_1$ (resp. $D^+_1$)
    so little that the disk $D^*_{ijk}$ touching $D^-_i$ and $D^+_j$
    from the outside and $D^0_k$ from the inside does not contain any
    disk of $\D^-\cup \D^+ \cup \D^0$ in its interior as for the case
    when $i = j = 1$. We argue this using some elementary
    geometry. See
    \figref{n:3:lower:bound:construction:same:radius}(b). Let
    $v_{ijk}$ and $v'_{ijk}$ be the two intersection points of
    $\partial D^*_{ijk}$ and $\partial D^+_1$, for
    $1 \leq i, j, k \leq m$. Such two intersection points coincide
    with each other when $i = j = 1$, and furthermore, they always
    exist due to our construction. Note that $v_{11k}$ is also the
    intersection point of $\partial D^0_k$ and $\partial D^+_1$. It is
    trivial to see that as the parameter $\omega$ gets smaller,
    $v_{ijk}$ and $v'_{ijk}$ lie closer to $v_{11k}$ along
    $\partial D^+_1$. They all coincide with $v_{11k}$ when $\omega$
    becomes 0. Since $\omega$ is a sufficiently small positive number,
    we are assured that $v_{ijk}$ lies between $v_{11(k-1)}$ and
    $v_{11k}$ along $\partial D^+_1$, i.e., $D^*_{ijk}$ does not
    contain $D^0_{k-1}$, not to mention $D^0_1, \ldots, D^0_{k-2}$.
    Similarly, $D^*_{ijk}$ does not contain
    $D^0_{k+1}, \ldots, D^0_m$. Moreover, $D^*_{ijk}$ does not contain
    any disk of $\D^- \cup \D^+$. Hence, there is a disk touching
    $D^-_i$ and $D^+_j$ from the outside and $D^0_k$ from the inside
    and not containing any disk of $\D^-\cup \D^+ \cup \D^0$ in its
    interior, for $1 \leq i, j, k \leq m$.

    This proves our claim, and finishes our $\Omega(n^3)$ lower bound
    construction when the disks have the same radius.
\end{proof}

We prove a refined bound on the complexity of $\NZVD(\P)$ if the
uncertainty regions in $\D$ are pairwise-disjoint disks and the ratio
of the radii of the largest to the smallest disk is bounded by
$\spread$.

\begin{lemma}%
    \lemlab{rho}%
    If $\P = \brc{P_1, \ldots, P_n }$ is a set of $n$ uncertain points
    in $\Re^2$ whose uncertainty regions are pairwise-disjoint disks
    with radii in the range $[1,\spread]$, a pair of curves in
    $\Gamma$ intersects in $O(\spread)$ points.
\end{lemma}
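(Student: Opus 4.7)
The plan is to build on the proof of \thmref{continuous1}, where it was shown that each intersection point $v \in \gamma_i \cap \gamma_j$ corresponds to a disk $B(v)$, of radius $\delta_i(v) = \delta_j(v) = \Delta(v)$ centered at $v$, that is externally tangent to $D_i$ and $D_j$, internally tangent to some ``witness'' disk $D_k$ (the minimizer of $\Delta(v)$), and strictly contains no disk of $\D$. As already argued there, for each fixed witness index $k$ the two hyperbolic branches $\delta_i = \Delta_k$ and $\delta_j = \Delta_k$ meet in at most two points, so $k$ contributes at most two intersections. It therefore suffices to show that the set $K_{ij}$ of valid witness indices satisfies $|K_{ij}| = O(\spread)$.

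To count witnesses, I restrict to the curve $\chi_{ij} := \{v \in \Re^2 : \delta_i(v) = \delta_j(v)\}$, a hyperbolic branch (or line) outside $D_i \cup D_j$ that contains all intersection points. Parametrize $\chi_{ij}$ by arc length $s$, and set $r(s) := \delta_i(v(s))$, the radius of $B(v(s))$; this is a smooth convex function with a unique minimum at some $s_0$. On each of the two rays $\chi_{ij}^+$ and $\chi_{ij}^-$ on which $r$ is monotone I would bound the witnesses by $O(\spread)$; by symmetry it suffices to treat $\chi_{ij}^+$. As $s$ grows along $\chi_{ij}^+$, the disk $B(v(s))$ grows and sweeps toward one of the two common external tangent half-planes of $D_i$ and $D_j$. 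During this sweep, each disk $D_\ell$ either (a) never enters $B(v(s))$, (b) enters at some parameter and remains strictly inside for all later $s$ (when $D_\ell$ lies entirely in the limiting half-plane), or (c) enters and then exits again (when $D_\ell$ straddles the limiting tangent line). A witness event occurs exactly at a parameter $s$ at which some $D_k$ is internally tangent to $B(v(s))$ and no other disk is strictly inside $B(v(s))$.

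The main obstacle is a packing/sweep argument showing that only $O(\spread)$ witness events occur on $\chi_{ij}^+$. The intuition is that a type-(b) disk, once it enters $B(v(s))$, remains strictly inside and precludes any subsequent tangency from qualifying as a witness; hence at most one type-(b) disk (the first to enter) can contribute a witness. The remaining witnesses correspond to type-(c) disks, which straddle the limiting external tangent line of $D_i, D_j$. Because the disks are pairwise disjoint with radii in $[1,\spread]$, the straddling region has width $O(\spread)$ transverse to the tangent boundary, and only $O(\spread)$ disks of radius at least $1$ can fit there in a configuration where each successively contributes a valid witness (later-entering disks are blocked by still-interior earlier disks). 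Summing the $O(\spread)$ bound over the two rays and using that each witness contributes at most two intersection points then gives $|\gamma_i \cap \gamma_j| = O(\spread)$.
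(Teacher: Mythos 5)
Your overall strategy diverges from the paper's. The paper fixes a dyadic radius scale $t$, restricts attention to disks of $\DiskSet$ with radii in $[t,2t]$, first prunes the $O(\spread/t)$ disks within distance $O(t)$ of $D_2$ (or $D_1$) by a direct annulus packing, then for the remaining disks projects a shrunken copy $\Wtau$ of each witness disk onto a circle $\Circle_2$ concentric with $D_2$, shows the resulting arcs $\IntervalX{\diskA}$ each have length $\Omega(t)$ and are pairwise disjoint, and sums $\sum_t O(\spread/t) = O(\spread)$ over scales. Your sweep along $\chi_{ij}$ is a reasonable reparametrization of the same pencil of witness disks, and your observation that at most one ``type-(b)'' disk can supply a witness is correct. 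But those are the easy parts.

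The genuine gap is the claim that only $O(\spread)$ type-(c) disks (those straddling the limiting common tangent line) contribute witnesses. As you state it, the supporting reasoning is not an argument. First, ``the straddling region has width $O(\spread)$ \dots\ only $O(\spread)$ disks of radius at least $1$ can fit there'' is false on its face: the straddling region is an infinite strip, so arbitrarily many pairwise-disjoint unit disks fit in it. Second, the parenthetical fix---``later-entering disks are blocked by still-interior earlier disks''---is a purely combinatorial statement about the family of intervals $\{s : D_\ell \subseteq B(v(s))\}$. If those intervals happen to be pairwise disjoint (one disk exits before the next enters), no blocking occurs and your argument produces no bound whatsoever. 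Nothing in the proposal rules out that configuration geometrically; the only reason the count is in fact $O(\spread)$ is a non-trivial packing phenomenon, which is exactly what the paper's arc-disjointness argument (together with the shrink-by-$\tau$ device and the dyadic grouping by radius) establishes. That is the core of the lemma, and it is the step your proposal defers. There are also smaller issues---the parametrized radius $r(s)$ along a hyperbola branch need not be convex (though it does have a unique minimum), and your trichotomy omits disks already interior to $B$ at the vertex of $\chi_{ij}$---but the missing packing argument for type-(c) disks is the decisive gap.
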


\begin{proof}
    Fix a pair of curves $\gamma_1$ and $\gamma_2$ in $\Gamma$. Let
    $D_1$ and $D_2$ be the disks corresponding to $\gamma_1$ and
    $\gamma_2$, and let $c_1$ and $c_2$ be their centers,
    respectively. By applying rotation and translation to the plane,
    we can assume $D_1$ and $D_2$ are centered on the $x$-axis, with
    $D_1$ to the left of $D_2$.

    For a parameter $t$, $1 \leq t \leq \spread$, let $\DiskSet$
    denote the set of all the disks associated with $\P$, excluding
    $D_1$ and $D_2$, with radii between $t$ and $2t$.  An intersection
    point $\query \in \gamma_1 \cap \gamma_2$ corresponds to a
    \emphi{witness} disk $\diskW$ centered at~$\query$ that touches
    both $D_1$ and $D_2$ from the outside, touches exactly one other
    disk $\diskA \in \DiskSet$ from the inside, and properly contains
    no disks of~$\DiskSet$. The family of disks that touch both
    $\disk_1$ and $\disk_2$ from the outside is a \emphi{pencil},
    which sweeps over a portion of the plane as the tangency points
    with $D_1$ and $D_2$ move continuously and monotonically in the
    $y$-direction.
    A disk of $\DiskSet$ can contribute at most two intersection
    points to $\gamma_1\cap \gamma_2$, as its boundary gets swept over
    at most twice by the circles of the pencil.

    We break $\partial{\diskW}$ into two curves, \emphi{top} and
    \emphi{bottom}, at $\diskW$'s tangency points with $D_1$ and
    $D_2$. For a disk $\diskA \in \DiskSet$, if its tangency point
    with its witness disk $\diskW$ is on the top portion of $\diskW$,
    then it is a \emphi{top tangency event}, otherwise it is a
    \emphi{bottom tangency event}. See \figref{bounded_ratio}(a). Let
    $\DiskSet_1$ (resp. $\DiskSet_2$) be the set of disks in
    $\DiskSet$ that are closer to $D_1$ (resp. $D_2$).

    Below we show that the number of top tangency events involving
    disks in $\DiskSet_2$ is $O(\spread/t)$.  Other tangency events
    are handled by a symmetric argument.

    We remove from $\DiskSet_2$ all the disks within distance
    $ T = \constA t $ from $D_2$, where $\constA$ is a sufficiently
    large constant.  The ring with outer radius $\radiusX{D_2} + 4T$
    and inner radius $\radiusX{D_2}$ has area
    \begin{align*}
      \alpha = \pi \pth{ (\radiusX{D_2} + 4T)^2 - ( \radiusX{D_2})^2
      } = O( T \radiusX{D_2} + T^2 ) = O(t^2 + \spread t),
    \end{align*}
    as $\radiusX{D_2} \leq \spread$.  Disks removed from $\DiskSet_2$
    have the following properties:


    \begin{figure*}[t]
        \begin{minipage}[b]{1.0\textwidth}
            \centering
            \begin{tabular}{ccc}
              \IncludeGraphics[scale=0.91,page=1]{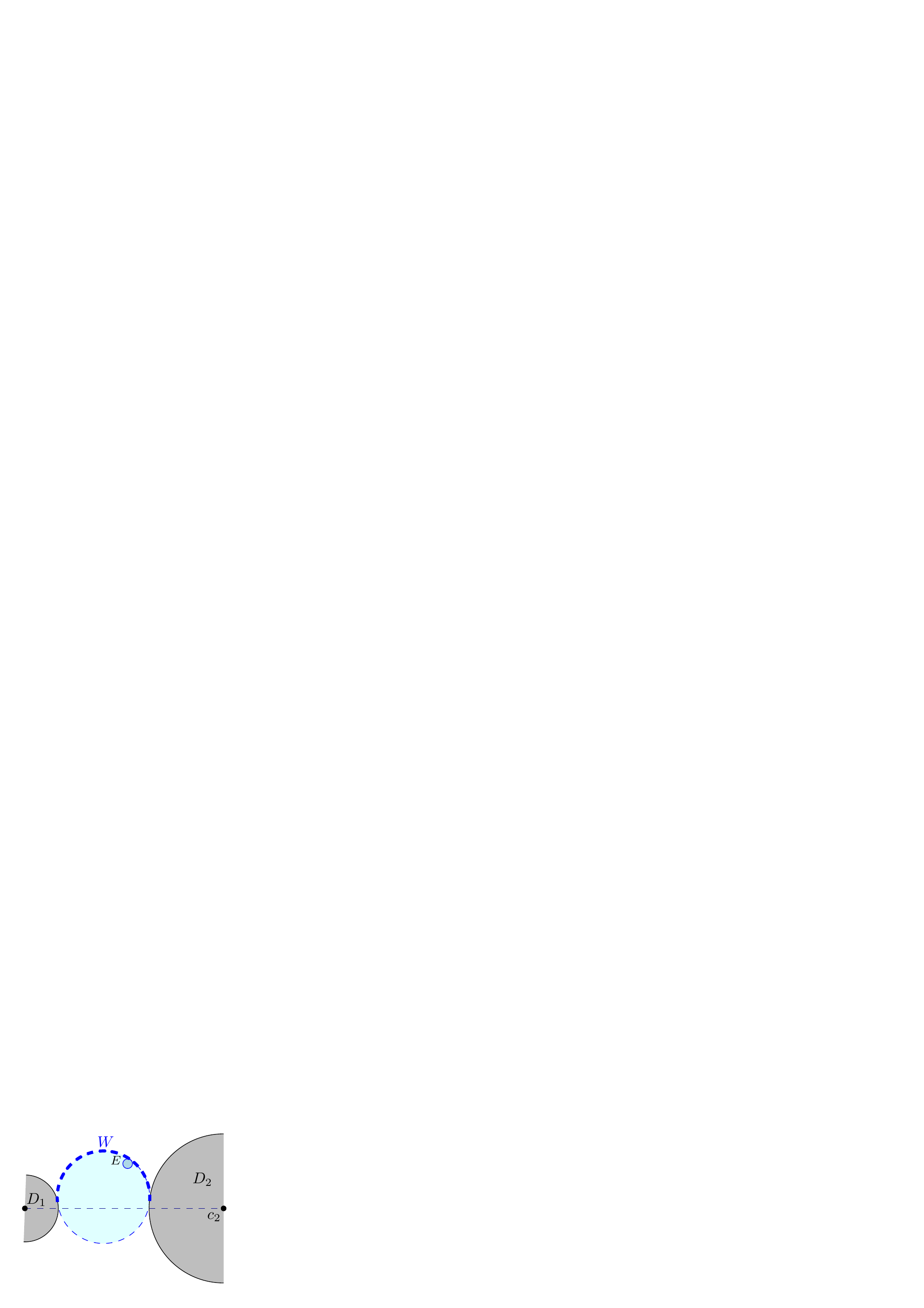}
              & \hspace*{-2mm} & %
                                 \IncludeGraphics[scale=0.91,page=3]{figs/tangency} \\
              (a) & & (b) \\
              \IncludeGraphics[scale=0.91,page=2]{figs/tangency} 
              & \hspace*{-2mm} & %
                                 \IncludeGraphics[scale=0.91,page=4]{figs/tangency} \\
              (c) & & (d) 
            \end{tabular}%
            \caption{An illustration for the proof of \lemref{rho}.}
            \figlab{bounded_ratio}
        \end{minipage}
    \end{figure*}

    \begin{compactenum}[\quad(i)]
        \item they are interior-disjoint,
        \item their radii lie in the interval $[t,2t]$,
        \item they are contained in the aforementioned ring, and
        \item the area of each such disk is at least $\pi t^2$.
    \end{compactenum}
    Hence, the number of removed disks is
    $ O( (t^2 + \spread t) / t^2 ) = O( \spread / t) $.

    Consider the circle $\Circle_2$ of radius $\radiusX{D_2} + T/2$
    centered at~$c_2$. Consider any disk $\diskA \in \DiskSet_2$ and
    its witness disk $\diskW$ touching both $D_1$ and $D_2$ from the
    outside. If $\diskA$ has not been removed from~$\DiskSet_2$, then
    $\radiusX{\diskW}\geq (T+2t)/2$; in particular it is larger than
    $T/2$ and the center of $W$ lies outside $\Circle_2$.  Let $\Wtau$
    be the disk concentric with $\diskW$ with radius
    $\radiusX{\diskW}-\tau$, where $\tau = 4t$.  The interior of
    $\Wtau$ is disjoint from all disks in $\DiskSet_2$, as $\diskA$
    touches $\diskW$ from inside and $\diskW$ does not fully contain
    any other disks from $\DiskSet_2$.  \vspace{-0.15cm} %
    \parpic[r]{\IncludeGraphics{figs/inner_arc}}%
    The witness disk $\diskW$ covers an arc of length at least $T/2$
    on~$\Circle_2$. Indeed, neither of these two disks contains the
    center of the other, and the inner distance between the two
    intersection arcs is $T/2$, see figure on the right. Similarly,
    let $\IntervalX{\diskA}$ be the arc $\Wtau \cap \Circle_2$.  By
    the same argument, we have that $\IntervalX{\diskA}$ is of length
    at least $T/2 - \tau = \Omega(t)$.

    The circumference of $\Circle_2$ is
    $2 \pi(\radiusX{D_2}+T/2) = O( \spread)$, so if the arcs
    $\IntervalX{\diskA}$, for $\diskA \in \DiskSet_2$, are pairwise
    disjoint, we are done, as this implies that there could be at most
    $\spread/(T/2 -\tau) = O( \spread / t)$ such arcs and thus the
    size of the original $\DiskSet_2$, including the disks that were
    deleted from $\DiskSet_2$ is $O( \spread / t)$. See
    \figref{bounded_ratio}(b).

    We now prove the claim that for any two disks
    $\diskA, \diskA' \in \DiskSet_2$ realizing a top tangency event,
    $\IntervalX{\diskA}$ and $\IntervalX{\diskA'}$ are disjoint.

    Let $\diskW$ (resp.\ $\diskW'$) be the witness disk that is
    tangent to $D_1, D_2$ and $\diskA$ (resp.\ $\diskA'$).  Assume
    that the tangency of $\diskW$ with $D_2$ is clockwise to the
    tangency of $\diskW'$ with $D_2$ (i.e., $\diskA$ is ``above''
    $\diskA'$).  If $\Wtau$ and $\Wtau'$ are disjoint then the
    corresponding arcs $\IntervalX{\diskA}$ and $\IntervalX{\diskA'}$
    are obviously disjoint, so assume that $\Wtau$ and $\Wtau'$
    intersect; see \figref{bounded_ratio}(c).
    
    Let $c'$ be the center of $\Wtau'$.  We define three circular arcs
    on $\partial\Wtau'$.  Let $\xi_1=\partial\Wtau' \cap \Wtau$, let
    $\xi_2$ be the portion of $\partial\Wtau'$ lying in the disk
    bounded by $\sigma_2$, and let $\xi_3$ be the portion of
    $\partial \Wtau'$ lying in the wedge formed by the rays $c'c_1$
    and $c'c_2$; see \figref{bounded_ratio}(d).  It can be verified
    that $\xi_1 \subset \xi_3$ and the right endpoint of $\xi_3$ lies
    inside $\sigma_2$ and thus on $\xi_2$.
    
    Next, let $\eta \in \partial\Wtau'$ be the intersection point of
    $\partial\Wtau'$ with the segment connecting $c'$ and the center
    of $\diskA'$; since $\diskA'$ lies in the exterior of $\Wtau'$,
    $\eta$ exists. Since $\diskA'$ realizes a top tangency event,
    $\eta \in \xi_3$. Furthermore, $\diskA'$ lies in the exterior of
    $\Wtau$ and $\diskA' \in \DiskSet_2$, which implies that
    $\eta \not\in \xi_1$ and it lies to the right of $\xi_1$.
    Similarly, $\diskA'$ lies in the exterior of $\sigma_2$ and the
    right endpoint of $\xi_3$ lies on $\xi_2$, therefore $\eta$ lies
    to the left of the arc $\xi_2$. In other words, $\eta$ separates
    $\xi_1$ and $\xi_2$, implying that $\xi_1\cap\xi_2 = \emptyset$,
    which in turn implies that the top endpoint of
    $\IntervalX{\diskA'}$ does not lie inside $\Wtau$. Hence,
    $\IntervalX{\diskA} \cap \IntervalX{\diskA'} = \emptyset$, as
    claimed.


    Finally, We repeat the above counting argument, for
    $t=1,2,4,\ldots, 2^m$, where $m=\ceil{\log_2 \spread}$, concluding
    that the number of intersection points between $\gamma_1$ and
    $\gamma_2$ is bounded by
    $\sum_{i=1}^m O( \spread/2^i) = O(\spread)$. This completes the
    proof of the lemma.
\end{proof}

\begin{theorem}%
    \thmlab{continuous:2}
    Let $\P = \{P_1, \ldots, P_n \}$ be a set of $n$ uncertain points
    in $\Re^2$ such that their uncertainty regions are
    pairwise-disjoint disks and that the ratio of the largest and the
    smallest radii of the disks is at most $\spread$. Then, the
    complexity of $\NZVD(\P)$ is $O(\spread n^2)$, and it can be
    computed in $O(n^2\log n + \mu)$ expected time, where $\mu$ is the
    complexity of $\NZVD(\P)$. Furthermore, there exists such a set
    $\P$ of uncertain points for which $\NZVD(\P)$ has $\Omega(n^2)$
    complexity.
\end{theorem}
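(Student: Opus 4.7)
The proof of \thmref{continuous:2} splits naturally into three parts: the upper bound on the complexity, the construction algorithm, and a matching lower bound. I address each in turn.

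For the upper bound, my plan is to combine \lemref{breakpoints} with \lemref{rho}. By \corref{NZVD}, the planar subdivision $\NZVD(\P)$ coincides with $\A(\Gamma)$, so every vertex is either a breakpoint of some $\gamma_i$ or an intersection point of two curves $\gamma_i$ and $\gamma_j$. By \lemref{breakpoints}, each $\gamma_i$ carries $O(n)$ breakpoints, contributing $O(n^2)$ vertices in total. By \lemref{rho}, each pair $\gamma_i,\gamma_j$ produces $O(\spread)$ intersection points, contributing $O(\binom{n}{2}\spread) = O(\spread\, n^2)$ additional vertices. After the same generic perturbation used in the proof of \thmref{continuous1}, every vertex has constant degree, so the numbers of edges and faces are proportional to the number of vertices, yielding a total complexity of $O(\spread\, n^2)$ for $\NZVD(\P)$.

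For the construction algorithm, I would mirror the strategy of \thmref{continuous1}. Each curve $\gamma_i$ is produced in $O(n\log n)$ time using \lemref{breakpoints}, so all of $\Gamma$ is available in $O(n^2\log n)$ time. The arrangement $\A(\Gamma)$ is then built by randomized incremental insertion of the arcs making up the curves, using the framework of \cite{as-aa-00}, in the claimed $O(n^2\log n + \mu)$ expected time, where $\mu$ is the final complexity of $\NZVD(\P)$.

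For the lower bound, the plan is to exhibit an explicit family of $n$ pairwise-disjoint disks of bounded radius ratio whose nonzero Voronoi diagram has $\Omega(n^2)$ vertices. I would adapt the three-family template used for the $\Omega(n^3)$ bounds, but shrink the third family to constant size so that the set of witness disks comes from a one-parameter family rather than a two-parameter one. Concretely, I place $n/2$ pairwise-disjoint unit disks $\D^-$ in a horizontal strip on the left and another $n/2$ in a horizontal strip on the right, with $O(1)$ ``obstacle'' disks of slightly larger but still constant radius in between, and argue that each of the $\Theta(n^2)$ pairs $(D^-_i, D^+_j)$ yields a distinct witness disk tangent to $D^-_i, D^+_j$ from the outside and to some obstacle from the inside while containing no other disk of $\P$ in its interior. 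The main obstacle will be the verification step: with all radii within a constant factor of one another, the positions must be chosen very carefully to guarantee that witness disks for distinct pairs are themselves distinct and that none of them engulfs any other disk of $\P$ --- a task that is technically more delicate than the analogous verification in the $\Omega(n^3)$ constructions above, where the huge outer radius $R = 8n^2$ provided substantial room to maneuver.
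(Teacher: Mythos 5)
Your treatment of the upper bound and the construction algorithm matches the paper: the paper likewise derives the $O(\spread n^2)$ bound by feeding \lemref{rho} into the vertex-counting argument of \thmref{continuous1} (breakpoints give $O(n^2)$ vertices, pairwise intersections give $O(\spread n^2)$), and uses the same $O(n^2\log n + \mu)$ randomized-incremental construction.

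The lower bound is where you diverge, and where your plan has a genuine gap. The paper's construction does \emph{not} shrink a separate obstacle family to constant size; it eliminates the separate obstacle family entirely. It places $n = 2m$ \emph{collinear} unit disks at $c_i = (4(i-m)-2, 0)$, and for each pair $(P_i, P_j)$ with $j-i\ge 2$ the ``obstacle'' is simply the disk $D_k$ of $\P$ itself with $k \approx (i+j)/2$, sitting midway between $D_i$ and $D_j$. The witness disk is centered at $(2(i+j-2m-1),\,\pm((j-i)^2-1))$ with radius $(j-i)^2$; a two-line computation shows it is tangent to $D_i$ and $D_j$ from outside, to $D_k$ from inside, and strictly avoids every other $D_\ell$ because $\|v-c_\ell\|^2 = 4(i+j-2\ell)^2 + ((j-i)^2-1)^2 > (j-i)^4$ whenever $\ell\neq k$. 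Crucially the obstacle index $k$ \emph{varies} with the pair; far-apart pairs get a far-away obstacle, and the witness disk is pushed high above the axis where it only grazes the row of disks. Your plan with $O(1)$ fixed obstacles has to produce $\Theta(n^2)$ witness disks all tangent from inside to one of those fixed obstacles. For a pair $(D^-_i, D^+_j)$ that are far apart, the witness disk must simultaneously reach the fixed obstacle and be tangent to both $D^-_i$ and $D^+_j$ from outside, so it is necessarily of radius comparable to the separation; with bounded-spread (hence bounded-radius) input disks packed along two strips, such a large witness disk will generically engulf intermediate disks of $\D^-\cup\D^+$, and there is no slack analogous to the $R = 8n^2$ cushion you correctly note was essential in the $\Omega(n^3)$ constructions. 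You flag the verification as ``delicate,'' but I do not see how to rescue it without, in effect, letting the obstacle move with the pair --- which is precisely what the paper's collinear, single-family construction does. I would recommend adopting that construction rather than trying to repair the three-family template.
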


\begin{figure}[t]
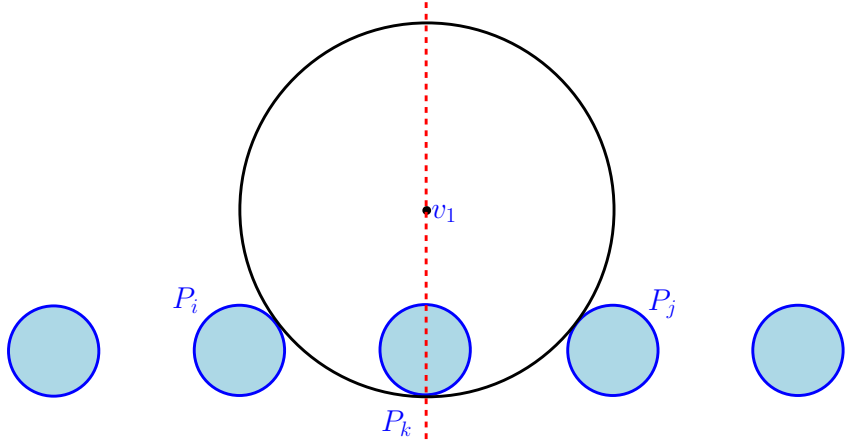

    \centering
    \IncludeGraphics[width=0.60\textwidth]%
    {figs/lb_dd_eq_r}
    \caption{Any pair $(P_i, P_j)$ satisfying $j-i \geq 2$ determines
       2 vertices of $\NZVD$. Only the vertex $v_1$ is shown.}
    \figlab{l:b:d:d:s:r}
\end{figure}

\begin{proof}
    The upper bound on the complexity of $\NZVD(\P)$ follows from
    \lemref{rho}.  By the same argument as in the proof of
    \thmref{continuous1}, $\NZVD(\P)$ can be computed in
    $O(n^2\log n + \mu)$ time, where $\mu$ is the number of vertices
    in $\NZVD(\P)$.

    Next we show that there exists a set $\P$ of $n$ uncertain points
    in $\mathbb{R}^2$ such that $\NZVD(\P)$ has $\Omega(n^2)$
    vertices. Assume that $n = 2m$ for some positive integer $m$. All
    the disks $D_i$ have the same radius 1, centered at
    $c_i = (4(i-m)-2, 0)$, for $1 \leq i \leq 2m$. Any pair
    $(P_i, P_j)$ satisfying that $j - i \geq 2$ and $j+i$ is even
    determines 2 vertices: $v_1 = (2(i+j-2m-1), (j-i)^2 - 1)$, and
    $v_2 = (2(i+j-2m-1), 1-(j-i)^2)$, of $\NZVD$ (realized with $P_k$,
    $k = \frac{j+i}{2}$) (\figref{l:b:d:d:s:r}). Any pair $(P_i, P_j)$
    satisfying that $j - i \geq 2$ and $j+i$ is odd determines 2
    vertices: $v_1 = (2(i+j-2m-1), (j-i)\sqrt{(j-i)^2-4})$, and
    $v_2 = (2(i+j-2m-1), (i-j)\sqrt{(j-i)^2-4})$, of $\NZVD$ (realized
    with $P_k$, $k = \lfloor\frac{j+i}{2}\rfloor$ or
    $k = \lceil\frac{j+i}{2}\rceil$). One can verify that
    $\delta_i(v) = \delta_j(v) = \Delta_k(v) \leq \Delta_{l}(v)$, for
    $1\leq l \leq n$, $v \in \{v_1, v_2\}$. Hence, we obtain a lower
    bound of $\Omega(n^2)$ for the complexity of $\NZVD$.
\end{proof}

\paragraph{Remarks.}%
We note that the proof of \lemref{rho} is essentially a packing
argument, and therefore can be extended to the case when each
uncertainty region is a convex $\alpha$-fat semialgebraic set of
constant description complexity.  A convex set $C$ is called
\emphi{$\alpha$-fat}, if there exist two concentric disks $D$ and $D'$
so that $D \subseteq C \subseteq D'$ and the ratio between the radii
of $D'$ and $D$ is at most $\alpha$. The constant of proportionality
also depends on $\alpha$ and the description complexity of the sets
defining the uncertainty regions. This in turn implies that
$\NZVD(\P)$ has $O(\lambda n^2)$ complexity if the uncertainty regions
of $\P$ are pairwise-disjoint convex $\alpha$-fat sets, for some
constant $\alpha \ge 1$, and the ratio of the size of the largest to
the smallest region is bounded by $\lambda$.  Extension of the proof
of \lemref{rho} to this case, however, is even more technical, so we
have decided not to state this generalized result as a theorem,
especially since, in practice, a fat convex set can be approximated by
a circular disk.

\paragraph{Storing $\P_\phi$'s for $\NZVD(\P)$.}%
We store the index $i$ of each uncertain point $P_i$ instead of $P_i$
itself. If we store $\P_\phi$ for each cell $\phi$ of $\NZVD(\P)$
explicitly, the size increases by a factor of $n$. However, we observe
that for two adjacent cells $\phi$, $\phi'$ of $\NZVD(\P)$, i.e., two
cells that share a common edge, $|\P_\phi \oplus \P_{\phi'}| = 1$,
where $\oplus$ denotes the symmetric difference of two sets.
Therefore, using a persistent data structure~\cite{dsst-mdsp-89}, we
can store $\P_\phi$ for all cells of $\NZVD(\P)$ in $O(\mu)$ space,
where $\mu$ is the complexity of $\NZVD(\P)$, so that for any cell
$\phi$, $\P_\phi$ can be retrieved in $O(\log n + |\P_\phi|)$
time.\footnote{If the curves of $\Gamma$ intersect transversally at
   every vertex, it suffices to store $\P_\phi$ for each cell of
   $\NZVD(\P)$. Otherwise one may have to store $\P_\phi$ for edges
   and vertices of $\NZVD(\P)$. This does not affect the asymptotic
   performance of the data structure.}  By combining this with a
planar point-location data structure \cite{bcko-cgaa-08}, we obtain
the following:
\begin{theorem}%
    \thmlab{NZVD_continuous_query}%
    Let $\P$ be a set of $n$ uncertain points in $\Re^2$, and let
    $\mu$ be the complexity of $\NZVD(\P)$. Then, $\NZVD(\P)$ can be
    preprocessed in $O(\mu\log \mu)$ time into $\anindex$ of size
    $O(\mu)$ so that, for a query point $q\in \Re^2$, $\NZNN(q, \P)$
    can be computed in $O(\log n + t)$ time, where $t$ is the output
    size.
\end{theorem}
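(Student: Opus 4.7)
The plan has three ingredients: (i) apply a standard point-location structure to $\NZVD(\P)$; (ii) build the subdivision once, traverse its dual graph, and record a pointer to a persistent representation of $\P_\phi$ for each face $\phi$; (iii) combine the two so that, given $q$, we locate its face and then enumerate $\P_\phi$.

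By the algorithmic parts of \thmref{continuous1} and \thmref{continuous-semi}, the subdivision $\NZVD(\P)$ is already available (its construction is not what is being preprocessed here). A classical Kirkpatrick-style planar point-location data structure (see, e.g., \cite{bcko-cgaa-08}) takes a planar subdivision of complexity $\mu$ and, after $O(\mu\log \mu)$ preprocessing and $O(\mu)$ space, answers point location in $O(\log \mu)=O(\log n)$ time, since $\mu=O(n^3)$. This handles locating the cell $\phi$ containing the query point $q$.

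The nontrivial part is representing the sets $\P_\phi$ without blowing the space up by a factor of $n$. The plan is to exploit the invariant recorded in the paragraph preceding the theorem: if $\phi$ and $\phi'$ are two cells of $\NZVD(\P)$ sharing an edge $e$, then, by the very definition of $\Gamma$, crossing $e$ toggles the status of exactly one uncertain point, i.e. $|\P_\phi \oplus \P_{\phi'}| = 1$. Build a spanning tree of the dual graph of $\NZVD(\P)$ and perform an Euler-tour traversal; at each step we move between adjacent cells and so must insert or delete a single index $i\in\{1,\dots,n\}$ into a maintained set. Feeding these updates into a persistent balanced search tree \cite{dsst-mdsp-89} produces, at each visited cell $\phi$, a root handle into a persistent version whose in-order traversal enumerates $\P_\phi$. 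Each update costs $O(\log n)$ time and $O(1)$ space in the persistent structure, so over the $O(\mu)$ dual edges this phase adds $O(\mu\log n)=O(\mu\log\mu)$ preprocessing time and $O(\mu)$ space; each cell is annotated with its $O(1)$-size handle.

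A query point $q$ is first located in $O(\log n)$ time to find its cell $\phi$, and then $\P_\phi=\NZNN(q,\P)$ is reported by traversing the persistent tree at the stored handle in $O(\log n + |\P_\phi|) = O(\log n + t)$ time, giving the claimed bounds.

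The main obstacle I anticipate is the nongeneric case in which more than two curves of $\Gamma$ meet at a vertex or an edge is incident to cells whose sets differ by more than one point; here the one-point-difference invariant on edges may fail at vertices. As the footnote in the paper already hints, the fix is purely bookkeeping: either perturb symbolically so that intersections are transversal (so that crossing any edge toggles exactly one index), or store the persistent handle separately on the low-dimensional faces as well. In either case each face contributes only $O(\log n)$ updates to the persistent structure, so the asymptotic bounds for both space and query time are unaffected.
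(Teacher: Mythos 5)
Your proof is correct and takes essentially the same route as the paper: combine a standard planar point-location structure \cite{bcko-cgaa-08} with a persistent data structure \cite{dsst-mdsp-89} that exploits the observation $|\P_\phi \oplus \P_{\phi'}| = 1$ for adjacent cells, storing $\P_\phi$ for all cells in $O(\mu)$ space. Your anticipation of the degenerate case (nontransversal intersections) matches the footnote in the paper, and the Euler-tour-of-the-dual-graph description is just a slightly more explicit account of what the persistence machinery requires.
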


\subsection{Discrete case} %
\seclab{VD:discrete}
We now analyze the complexity of $\NZVD(\P)$ when the distribution of
each point $P_i$ in $\P$ is discrete. Let
$P_i = \{p_{i1}, \ldots, p_{ik}\}$. For $1\leq j \leq k$, let
$w_{ij} = \Pr[P_i \text{ is } p_{ij}]$.  As in the previous section,
for a point $x$, let
\begin{align*}
  \Delta_i(q) = \max_{1 \leq j \leq k} \dist(q, p_{ij}) \quad \mbox{and}\quad
\delta_i(q) = \min_{1 \leq j \leq k} \dist(q, p_{ij}).
\end{align*}
Note that the projection of the graph of $\Delta_i$ (resp.\
$\delta_i$) onto the $xy$-plane is the farthest-point (resp.\
nearest-point) Voronoi diagram of $P_i$. Let
$\Delta(q) = \min_{1\leq i \leq n} \Delta_i(q)$.  For each $i$, let
$\gamma_i = \{ x \in \Re^2 \mid \delta_i(x) = \Delta(x)\}$, and set
$\Gamma = \{\gamma_1, \ldots, \gamma_n\}$. Then $\NZVD(\P)$ is the
planar subdivision $\EuScript{A}(\Gamma)$ induced by $\Gamma$ (cf.\
\corref{NZVD}).

We define a few functions that will help analyze the structure of
$\NZVD(\P)$.  We first define a function
$f\colon \Re^2\times \Re^2 \rightarrow \Re$ as
\begin{equation}
    \eqlab{linear:function}%
    f(x,p) = \dist^2 (x, p) - \|x\|^2 = \|p\|^2 - 2\langle x,
    p\rangle.
\end{equation}
For $1\leq i \leq n$, define
\begin{align*}
  \varphi_i(x) = \min_{1\leq j \leq k} f(x, p_{ij}) \quad \mbox{and}\quad
  \Phi_i(x) = \max_{1\leq j \leq k} f(x, p_{ij}).
\end{align*}
Finally, we define
\[ \Phi(x) = \min_{1\le i\le n} \Phi_i(x) .\] The following lemma is
straightforward.
\begin{lemma}%
    \lemlab{linearize}%
    For any $i \le n$ and for any $q \in \reals^2$, $\delta_i(q)=r$ if
    and only if $\varphi_i(q)=r^2-\|q\|^2$.
\end{lemma}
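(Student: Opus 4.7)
The proof is essentially a one-line unpacking of definitions, so the plan is straightforward. The key identity is the one built into the definition of $f$ in \Eqref{linear:function}, namely $f(x,p) + \|x\|^2 = \dist^2(x,p)$ for any $x,p \in \reals^2$.

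First I would fix $q \in \reals^2$ and add $\|q\|^2$ to both sides of the definition of $\varphi_i(q)$. Since $\|q\|^2$ is a constant independent of $j$, it can be pulled inside the minimum, giving
\begin{align*}
  \varphi_i(q) + \|q\|^2
  = \min_{1\le j\le k} \bigl( f(q, p_{ij}) + \|q\|^2 \bigr)
  = \min_{1\le j\le k} \dist^2(q, p_{ij}).
\end{align*}
Next, because $\dist(q,p_{ij}) \ge 0$, taking the squaring is monotone on the nonnegative reals, so
\begin{math}
   \min_j \dist^2(q,p_{ij}) = \bigl( \min_j \dist(q,p_{ij}) \bigr)^2 = \delta_i(q)^2.
\end{math}
Combining the two displays, $\varphi_i(q) = \delta_i(q)^2 - \|q\|^2$, an identity that holds for every $q$.

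The claimed equivalence now follows immediately: if $\delta_i(q) = r$ then $\varphi_i(q) = r^2 - \|q\|^2$, and conversely if $\varphi_i(q) = r^2 - \|q\|^2$ then $\delta_i(q)^2 = r^2$, which together with $\delta_i(q) \ge 0$ forces $\delta_i(q) = r$ (this last step implicitly uses $r \ge 0$, which is the relevant case since $\delta_i$ is a distance). There is no real obstacle here; the whole point of introducing $f$ in \Eqref{linear:function} is that the map $p \mapsto f(x,p) = \|p\|^2 - 2\langle x,p\rangle$ is \emph{affine} in $x$, which ``linearizes'' squared-distance minimization and will presumably be exploited in the subsequent analysis of $\NZVD(\P)$ for discrete distributions.
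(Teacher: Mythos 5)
Your proof is correct, and since the paper simply declares this lemma ``straightforward'' and gives no proof, your unpacking of the definitions---using monotonicity of squaring on nonnegative reals to exchange $\min$ with squaring, and noting the implicit $r\ge 0$ needed for the converse---is exactly the argument the authors intend the reader to supply.
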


\begin{lemma}%
    \lemlab{some:lemma}%
    For any pair $i,j$, $1 \leq i \neq j \leq n$, let
    $\gamma_{ij} = \{ x \in \Re^2 \mid \delta_i(x) = \Delta_j(x)\}$,
    then $\gamma_{ij}$ is a convex polygonal curve with $O(k)$
    vertices.
\end{lemma}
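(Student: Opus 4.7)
The plan is to recast $\gamma_{ij}$ through the linearization $f$ and then recognize it as the boundary of a convex planar region, from which both the polygonal structure and the vertex bound will follow. First, I would apply \lemref{linearize} to both $\delta_i$ and $\Delta_j$: the same identity $f(x,p)=\dist^2(x,p)-\|x\|^2$ preserves order in $\dist^2(x,\cdot)$ for fixed $x$, so taking $\max$ instead of $\min$ yields $\Delta_j(x)^2-\|x\|^2=\Phi_j(x)$ in complete analogy. Since $\delta_i,\Delta_j\ge 0$, the equation $\delta_i(x)=\Delta_j(x)$ is equivalent to $\delta_i(x)^2=\Delta_j(x)^2$, which in turn is equivalent to $\varphi_i(x)=\Phi_j(x)$. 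Thus $\gamma_{ij}=\{x\in\Re^2\mid h(x)=0\}$, where $h:=\varphi_i-\Phi_j$.

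Next, I would exploit concavity. The function $\varphi_i$ is the lower envelope of $k$ affine functions of $x$ (by \Eqref{linear:function}), hence concave and piecewise linear, with maximal linearity pieces supported on the cells of the nearest-point Voronoi diagram of $P_i$. Dually, $\Phi_j$ is the upper envelope of $k$ affine functions, so it is convex and piecewise linear with pieces supported on the farthest-point Voronoi diagram of $P_j$. Therefore $h=\varphi_i-\Phi_j$ is concave piecewise linear, and the super-level set $C:=\{x\in\Re^2\mid h(x)\ge 0\}$ is convex. Consequently $\gamma_{ij}=\partial C$ is a connected polygonal convex curve (possibly unbounded), and each of its edges lies along a line of the form $\{f(x,p_{is})=f(x,p_{jt})\}$, i.e., a perpendicular bisector of some pair $p_{is}p_{jt}$.

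To bound the number of vertices by $O(k)$, I would use a crossing argument. A vertex of $\gamma_{ij}$ is a point where the index achieving the $\min$ in $\varphi_i$ changes, or where the index achieving the $\max$ in $\Phi_j$ changes; equivalently, it is a point at which $\gamma_{ij}$ crosses an edge of the nearest-point Voronoi diagram of $P_i$ or an edge of the farthest-point Voronoi diagram of $P_j$. Each such Voronoi edge lies on a single line, and the boundary of a convex planar set meets any line in at most two points. Each of the two planar Voronoi diagrams has only $O(k)$ edges, so the total number of crossings, and hence the total number of vertices of $\gamma_{ij}$, is $O(k)$.

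The main delicacies I anticipate are handling the degenerate cases: $C$ could be empty, a halfplane, or all of $\Re^2$, in which cases $\gamma_{ij}$ is empty or a line and the lemma is trivial; and a vertex of $\gamma_{ij}$ might coincide with a vertex of one of the Voronoi diagrams so that both the $s$- and $t$-index change simultaneously. Neither of these changes the $O(k)$ asymptotic count, but they need to be acknowledged so that the convexity argument and the two-crossings-per-line bound remain valid.
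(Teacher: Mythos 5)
Your proof is correct and uses the same central device as the paper: the linearization $f(x,p)=\dist^2(x,p)-\|x\|^2$ turns $\delta_i$ and $\Delta_j$ into the concave piecewise-linear $\varphi_i$ (lower envelope) and convex piecewise-linear $\Phi_j$ (upper envelope), so that $\gamma_{ij}$ becomes the boundary of the convex region $\{\Phi_j-\varphi_i\le 0\}$, hence a convex polygonal curve. The only place you diverge from the paper is in the final vertex count. The paper lifts to three dimensions: $\gamma_{ij}$ is the vertical projection of the intersection curve of the graphs of $\varphi_i$ and $\Phi_j$, each the boundary of an unbounded convex polyhedron with at most $k$ facets, and the intersection of two such boundaries has $O(k)$ complexity. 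You instead stay in the plane and argue that each vertex of $\gamma_{ij}$ occurs where the curve crosses an edge of the nearest-point Voronoi diagram of $P_i$ or the farthest-point Voronoi diagram of $P_j$, and that a convex curve crosses each of the $O(k)$ edges at most twice, giving $O(k)$ crossings. Both counting arguments are valid; the paper's is a one-line citation of a standard convex-polytope-intersection bound, while yours is a more elementary, self-contained packing/crossing argument in the plane. Your acknowledgment of the degenerate cases (empty or unbounded $C$, simultaneous index changes) is appropriate and does not affect the asymptotics.
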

\begin{proof}
    By \lemref{linearize}, for any pair $i, j$ and for any
    $x\in\Re^2$, $\delta_i(x) = \Delta_j(x)$ if and only if
    $\varphi_i(x) = \Phi_j(x)$. Hence, $\gamma_{ij}$ is also the zero
    set of the function $\Phi_j(x) - \varphi_i(x)$.

    $\Phi_j$ is the upper envelope of $k$ linear functions, and thus
    is a piecewise-linear convex function. Similarly, $\varphi_i$, the
    lower envelope of $k$ linear functions, is a piecewise-linear
    concave function. Hence, $\Phi_j(x)-\varphi_i(x)$ is a
    piecewise-linear convex function, which implies that
    $\gamma_{ij} = \{ x \in \Re^2 \mid \Phi_j(x) = \varphi_i(x)\}$ is
    a convex polygonal curve. Since $\gamma_{ij}$ is the projection of
    the intersection curve of the graphs of $\Phi_j$ and $\varphi_i$,
    each of which is the surface of an unbounded convex polyhedron
    with at most $k$ faces, $\gamma_{ij}$ has $O(k)$ vertices.
\end{proof}

\begin{theorem}
    Let $\P = \{P_1, \ldots, P_n \}$ be a set of $n$ uncertain points
    in $\Re^2$, where each $P_i$ has a discrete distribution of size
    at most $k$. The complexity of $\NZVD(\P)$ is $\mu = O(k n^3)$ ,
    and it can be computed in $O(n^2\log n + \mu)$ expected
    time. Furthermore, it can be preprocessed in additional $O(\mu)$
    time into $\anindex$ of size $O(\mu)$ so that an $\NZNN(q)$ query
    can be answered in $O(\log \mu + t)$, where $t$ is the output
    size.
\end{theorem}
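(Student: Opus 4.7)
My plan is to parallel the proof of Theorem~\thmref{continuous1}, substituting the convex polygonal curves guaranteed by Lemma~\lemref{some:lemma} for the hyperbolic arcs of the continuous case. Set $\Gamma = \{\gamma_1, \ldots, \gamma_n\}$ with $\gamma_i = \{x : \delta_i(x) = \Delta(x)\}$. The structural argument leading up to Corollary~\corref{NZVD} only uses that $\NZNN(\cdot, \P)$ is constant on each face of $\A(\Gamma)$ and changes by exactly one element across each edge; it transfers verbatim to the discrete setting and gives $\NZVD(\P) = \A(\Gamma)$. What remains is to bound $|\A(\Gamma)|$ and to build it efficiently.

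For the combinatorial bound I would classify each vertex $v$ of $\A(\Gamma)$ according to which equalities hold at~$v$, into three types: (a) $v$ is a kink of a single constituent curve $\gamma_{ij}$; (b) $v$ lies on $\gamma_i$ and marks a transition at which $\Delta(v) = \Delta_j(v) = \Delta_{j'}(v)$, equivalently $v \in \gamma_{ij} \cap \gamma_{ij'}$; (c) $v = \gamma_i \cap \gamma_j$ for $i \ne j$, which forces $\delta_i(v) = \delta_j(v) = \Delta_m(v) = \Delta(v)$ for some $m$, so $v \in \gamma_{im} \cap \gamma_{jm}$. Lemma~\lemref{some:lemma} immediately bounds type~(a) by $O(kn^2)$. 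For types (b) and (c), the key observation is that each $\gamma_{ab}$ is the boundary of the convex sub-level set $R_{ab} = \{x : \Phi_b(x) \le \varphi_a(x)\}$ of the piecewise-linear convex function $\Phi_b - \varphi_a$, so $\partial R_{ab}$ consists of $O(k)$ line segments and two such boundaries intersect in $O(k)$ points because each edge of one crosses the other at most twice. This gives $O(k)$ vertices per triple and $\mu = O(kn^3)$ overall.

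For the algorithm I would first compute each of the $O(n^2)$ convex polygons $R_{ij}$ in $O(k \log k)$ time using standard envelope / halfplane-intersection algorithms on $\Phi_j$ and $\varphi_i$, and then feed their boundaries into the randomized incremental construction framework of~\cite{as-aa-00}; a linear-time labeling pass identifies which arcs of the resulting arrangement lie on some $\gamma_i$, producing $\A(\Gamma) = \NZVD(\P)$. On $O(n^2)$ convex polygonal input curves the RIC takes $O(n^2 \log n + \mu)$ expected time. For the query structure I would import the construction of Theorem~\thmref{NZVD_continuous_query}: since $|\P_\phi \oplus \P_{\phi'}| = 1$ across every edge of $\NZVD(\P)$, a persistent dictionary stores all the sets $\P_\phi$ in $O(\mu)$ total space, and combining it with planar point location~\cite{bcko-cgaa-08} answers $\NZNN(q)$ in $O(\log \mu + t)$ time.

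The step I expect to demand the most care is the ``$O(k)$ intersections per triple'' estimate in items (b) and (c): the curves $\gamma_{ab}$ can be unbounded and may meet in degenerate configurations, so rather than invoking a black-box statement about two convex polygons I would run the edge-counting argument directly on the $O(k)$ linear pieces of $\Phi_b - \varphi_a$ and $\Phi_{b'} - \varphi_a$. A secondary technicality is verifying that the RIC analysis of~\cite{as-aa-00} yields the stated expected bound when the input consists of piecewise-linear, possibly unbounded convex curves rather than pseudo-segments; if needed I would first decompose each $\partial R_{ij}$ into its constituent segments and then observe that the total intersection count of the segment arrangement is still $O(\mu)$ by the same triple counting.
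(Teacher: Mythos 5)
Your combinatorial bound is proved the same way the paper does it: identify each vertex of $\A(\Gamma)$ with a triple of indices, reduce to the intersection of two convex polygonal curves with $O(k)$ edges each (via \lemref{some:lemma}), and conclude $O(k)$ vertices per triple, hence $\mu = O(kn^3)$. Your explicit three-way classification of vertices is a cleaner bookkeeping of the same count, and the persistent-dictionary point-location layer at the end is exactly the construction used in \thmref{NZVD_continuous_query}.

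The algorithm, however, has a genuine gap. You propose to compute all $O(n^2)$ boundaries $\gamma_{ij} = \partial R_{ij}$ and feed this entire family into the randomized incremental construction of \cite{as-aa-00}, then recover $\A(\Gamma)$ by a labeling pass. But the RIC builds the arrangement of whatever curves it is handed, and the arrangement of the full family $\{\gamma_{ij} : i\ne j\}$ is much larger than $\A(\Gamma)$: any two curves $\gamma_{ij}$ and $\gamma_{i'j'}$ (over all four distinct indices) may cross in $\Theta(k)$ points, and there are $\Theta(n^4)$ such pairs, so this arrangement can have $\Theta(kn^4)$ complexity. The RIC would then take $\Omega(kn^4)$ expected time before your labeling pass even starts. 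Your remark that ``the total intersection count of the segment arrangement is still $O(\mu)$ by the same triple counting'' conflates two different objects: the triple count bounds the vertices of $\A(\Gamma)$ (the arrangement of the $n$ lower-envelope curves $\gamma_i$), not the vertices of the arrangement of all $O(n^2)$ curves $\gamma_{ij}$, most of whose intersection points do not lie on any $\gamma_i$.

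The repair is to do what the paper does (mirroring the disk case in \thmref{continuous1}): first compute the $n$ curves $\gamma_1, \ldots, \gamma_n$ themselves, and then run the RIC on $\Gamma = \{\gamma_1, \ldots, \gamma_n\}$. Since $\gamma_i = \partial \bigcap_{j\ne i} R_{ij}$ is the boundary of an intersection of $n-1$ convex polygons with $O(k)$ edges each, it has $O(nk)$ complexity and can be computed directly (e.g.\ by divide-and-conquer intersection of convex polygons), rather than extracted from a global arrangement. With only $n$ input curves the RIC produces precisely $\A(\Gamma) = \NZVD(\P)$ in $O(n\log n + \mu)$ expected time, and the total is dominated by computing $\Gamma$ plus $\mu$, as claimed.
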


\begin{proof}
    We follow the same argument as in the proof of
    \thmref{continuous1}. We need to bound the number of intersection
    points between a pair of curves $\gamma_i$ and $\gamma_j$. Fix an
    index $u$. Let
    $\gamma_{iu} = \{ x \in \Re^2 \mid \delta_i(x) = \Delta_u(x)\}$
    and
    $\gamma_{ju} = \{ x \in \Re^2 \mid \delta_j(x) = \Delta_u(x)\}$.
    By \lemref{some:lemma}, each of $\gamma_{iu}$ and $\gamma_{ju}$ is
    a convex polygonal curve in $\Re^2$ with $O(k)$ vertices. Since
    two convex polygonal curves in general position with $n_1$ and
    $n_2$ vertices intersect in at most $n_1 + n_2$ points,
    $\gamma_{iu}$ and $\gamma_{ju}$ intersect at $O(k)$ points. Hence,
    $\gamma_i$ and $\gamma_j$ intersect at $O(kn)$ points, implying
    that $\NZVD(\P)$ has $O(k n^3)$ vertices. The running time follows
    from the proof of \thmref{continuous1}.
\end{proof}

\section{Data structures for $\NZNN$ queries}                         %
\seclab{indexing_schemes}

With the maximum size of $\NZVD$ being $\Theta(n^3)$, we present
$O(n\polylog(n))$-size data structures that circumvent the need for
constructing $\NZVD(\P)$ and answer $\NZNN$ queries in
poly-logarithmic or sublinear time. They rely on geometric data
structures for answering range-searching queries and their variants;
see \cite{a-rs-16} for a recent survey.

An $\NZNN(q)$ query is answered in two stages. The first stage
computes $\Delta(q)$, and the second stage computes all points
$P_i \in \P$ for which $\delta_i(q) < \Delta(q)$. We build a separate
$\index$ for each stage. We first describe the one for the continuous
case and then for the discrete case.

   \paragraph{Continuous case.} %
   We assume that the uncertainty region of each point $P_i$ is a disk
   $D_i$ of radius $r_i$ centered at $c_i$. Recall from \secref{NZVD}
   that the projection of the graph of the function $\Delta$ onto the
   $xy$-plane, a planar subdivision $\mathbb{M}$, is the
   (additive-weighted) Voronoi diagram of the points
   $c_1, \ldots, c_n$, and it has linear complexity. Hence
   $\mathbb{M}$ can be preprocessed in $O(n\log n)$ time into
   $\anindex$ of size $O(n)$ so that for a query point $q \in \Re^2$,
   $\Delta(q)$ can be computed in $O(\log n)$
   time~\cite{bcko-cgaa-08}.

   Next we wish to report all points $P_i \in \P$ for which
   $\delta_i(q) < \Delta(q)$, i.e., for which $D_i$ intersects the
   disk of radius $\Delta(q)$ centered at $q$. Note that the
   projection of the graph of the lower envelope of
   $\{\delta_1, \ldots, \delta_n\}$ is also an (additive-weighted)
   Voronoi diagram of the points $c_1, \ldots, c_n$ and has linear
   complexity. Recently~\cite{KMRSS} have described a data structure
   of size $O(n\polylog(n))$ that can answer the above query in
   $O(\log n + t)$ time, where $t$ is the output size.  It can be
   constructed in $O(n\polylog(n))$ randomized expected time.  We thus
   obtain the following:

\begin{theorem}%
    \thmlab{indexscheme_continuous} %
    Let $\P = \{P_1, \ldots, P_n\}$ be a set of $n$ uncertain points
    in $\Re^2$ so that the uncertainty region of each~$P_i$ is a
    disk. $\P$ can be preprocessed into $\anindex$ of size
    $O(n\polylog(n))$, so that an $\NZNN(q)$ query can be answered in
    $O(\log n + t)$ time, where $t$ is the output size.  The data
    structure can be constructed in $O(n\polylog(n))$ randomized
    expected time.
\end{theorem}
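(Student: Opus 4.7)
The plan is to answer an $\NZNN(q)$ query in two stages, exploiting the characterization $\NZNN(q,\P)=\brc{P_i \mid \delta_i(q) < \Delta(q)}$ from \Eqref{NZNN:obs}: first compute the scalar $\Delta(q)$, then report all $P_i$ whose uncertainty disk $D_i$ meets the disk $B(q,\Delta(q))$.

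For the first stage, I would simply exploit the fact noted right before \lemref{breakpoints}: the projection of the graph of $\Delta=\min_i \Delta_i$ to the plane is the additive-weighted Voronoi diagram $\mathbb{M}$ of the centers $c_1,\ldots,c_n$ with weights $r_1,\ldots,r_n$. This has linear complexity and can be built in $O(n\log n)$ time, and then preprocessed by any standard planar point-location structure (e.g., Kirkpatrick's) into an $O(n)$-size structure that, given $q$, locates the cell containing $q$ in $O(\log n)$ time and thus returns $\Delta(q)$.

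For the second stage, the key observation is that $\delta_i(q)<\Delta(q)$ is exactly the condition that the disk $D_i$ of radius $r_i$ centered at $c_i$ intersects the query disk $B(q,\Delta(q))$. This is a disk-disk intersection-reporting problem, which is precisely what the data structure of~\cite{KMRSS} was designed for: it supports such queries in $O(\log n+t)$ time using $O(n\polylog n)$ space, built in $O(n\polylog n)$ randomized expected time. I would simply invoke this structure on $\brc{D_1,\ldots,D_n}$. Combining the two stages, a query takes $O(\log n)+O(\log n+t)=O(\log n+t)$ time, and the total space and construction time are dominated by the second-stage structure, giving the stated bounds.

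The main obstacle is really subsumed by the black-box invocation of~\cite{KMRSS}; the rest is routine. A minor conceptual point to verify is that the strict inequality $\delta_i(q)<\Delta(q)$ in \Eqref{NZNN:obs} matches what the disk-intersection reporter returns (intersection of interiors vs.\ closed disks), but under the standard general-position assumption used throughout \secref{NZVD} this distinction has measure zero and does not affect correctness.
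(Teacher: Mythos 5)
Your proof follows the paper's argument exactly: both perform the same two-stage reduction (a planar point-location query on the additive-weighted Voronoi diagram $\mathbb{M}$ to compute $\Delta(q)$, followed by a black-box invocation of the data structure of~\cite{KMRSS} to report the disks $D_i$ intersecting $B(q,\Delta(q))$, i.e., with $\delta_i(q)<\Delta(q)$). The remark on strict versus non-strict inequality is a reasonable sanity check but does not change the substance.
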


\paragraph{Remarks.}%
(i) Note that \thmref{indexscheme_continuous} gives a better result
than \thmref{NZVD_continuous_query} but the $\index$ based on
$\NZVD(\P)$ is simpler and more practical.

(ii) If we use $L_1$ or $L_\infty$ metric to compute the distance
between points and use disks in $L_1$ or $L_\infty$ metric (i.e., a
diamond or a square), then an $\NZNN(q)$ query can be answered in
$O(\log^2 n + t)$ time using $O(n\log^2 n)$ space: the first stage
remains the same and the second stage reduces to reporting a set of
axis-aligned squares that intersect a query axis-aligned
square~\cite{a-rs-16}.

\paragraph{Discrete case.} %
Next, we consider the case when each $P_i$ has a discrete distribution
of size at most $k$; set $N=nk$. The functions $\Delta_i$ and
$\delta_i$ are now more complex and thus the $\index$ for $\NZNN(q)$
queries is more involved. As in \secref{VD:discrete}, instead of
working with the functions $\delta_i$ and $\Delta_i$, we work with
$\varphi_i$ and $\Phi_i$. By \lemref{linearize}, the problem of
reporting all points $P_i$ with $\delta_i(q) < \Delta(q)$ is
equivalent to returning the points with $\varphi_i(q) < \Phi(q)$. As
for the continuous case, we construct two data structures---the first
one computes $\Phi(q)$ for a query point $q \in \reals^2$ and the
second one reports all $P_i$'s with $\varphi_i(q) < \Phi(q)$.  Note
that $\varphi_i(q) < \Phi(q)$ if and only if the point
$\widehat{q}=(q,\Phi(q)) \in \Re^3$ lies above the graph of
$\varphi_i$.  By triangulating each face of $\varphi_i$ and $\Phi_i$
if necessary, we can assume that each $\varphi_i$ is a triangulated
concave surface and each $\Phi_i$ is a triangulated convex surface.

We now describe the data structure for computing $\Phi(q)$. Note that
$\Phi(q)=\Phi_j(q)$ if $\Phi_j$ is the first surface in the set
$\{\Phi_1, \ldots, \Phi_n\}$ intersected by $\ell_q$, the vertical
line passing through $q$, in the $(+z)$-direction. We first construct
a $3$-level partition tree \cite{a-rs-16} on the set of triangles in
the graphs of $\Phi_1, \ldots, \Phi_n$, denoted by $\Sigma$, so that
the triangles of $\Sigma$ intersected by $\ell_q$, for a query point
$q\in\Re^2$, can be reported efficiently.  The partition tree stores a
family of \emph{canonical} subsets of triangles in $\Sigma$ so that
for any query point $q$, the triangles of $\Sigma$ intersected by
$\ell_q$ can be reported as the union of $O(\sqrt{N}\log^2 n)$
canonical subsets in $O(\sqrt{N}\log^2 n)$ time. Let $F_q$ denote the
family of canonical subsets reported by the query procedure. The size
of the data structure is $O(N\log^2 n)$, and it can be constructed in
$O(N\log^2 N)$ randomized expected time~\cite{a-rs-16}.  Next, for
each canonical subset $C$, let $C^*$ be the set of planes supporting
the triangles in $C$. We construct the lower envelope $L_C$ of $C^*$
(by regarding each plane in $C^*$ as the graph of a linear function),
which has size $O(|C|)$, and preprocess $L_C$ into an $O(|C|)$-size
data structure so that for a query point $q\in \reals^2$, $L_C(q)$ can
be computed in $O(\log |C|)$ time~\cite{sa-dsstg-95}. Summing over all
canonical subsets of the partition tree, the overall size of the data
structure is $O(N\log^2 N)$ and it can be constructed in
$O(N\log^3 N)$ randomized expected time.

Given a query point $q \in \Re^2$, we first query the partition tree
and compute the family $F_q$ of canonical subsets. For each canonical
set $C\in F_q$, we compute $L_C(q)$ and return the minimum among them
as $\Phi(q)$.  Since, the procedure spends $O(\log N)$ time for each
canonical subset, the overall query time is $O(\sqrt{N}\log^3 N)$.
The correctness of the procedure follows from the following
observation: $\ell_q$ intersects all triangles of a canonical subset
$C\in F_q$, so for each triangle $\tau \in C$ and its supporting plane
$\tau^*$, $\ell_q\cap\tau = \ell_q \cap \tau^*$. Therefore $L_C(q)$ is
the same as the (height of the) first intersection point of $\ell_q$
with a triangle of $C$, and $\Phi(q) = \min_{C\in F_q} L_C(q)$.

Next, we describe the data structure for reporting the points $P_i$
with $\varphi_i(q) < \Phi(q)$. It is very similar to the one just
described, except one twist.  First, as above, we construct a 3-level
partition tree on the triangles in the graphs of
$\varphi_1, \ldots, \varphi_n$. Let $C$ be a canonical subset
constructed by the partition tree, and let $C^*$ be the set of planes
supporting $C$.  Using a result by \cite{ac-ohrrt-09} (see
also~\cite{a-rs-16}), we preprocess $C^*$, in $O(|C|\log |C|)$
randomized expected time, into a data structure of size $O(|C^*|)$ so
that for a query point $\widehat{q} = (q, \Phi(q))$, all $t_C$ planes
of $C^*$ lying below $\hat{q}$ can be reported in $O(\log N+t_C)$
time.
Summing over all canonical subsets of the partition tree, the overall
size of the data structure is $O(N\log^2 N)$, and it can be
constructed in $O(N\log^3 N)$ randomized expected time.

Given a query point $q \in \Re^2$, we first query the partition tree
and compute the family $F_q$ of canonical subsets. For each canonical
set $C\in F_q$, we next report all planes of $C^*$ lying below
$\hat{q}$. The overall query time is $O(\sqrt{N}\log^3 N+t)$, where
$t$ is the output size.\footnote{We note that if $\ell_q$ passes
   through the boundary of a triangle of some $\varphi_i$, then $P_i$
   may be reported multiple times. If the points of $P_i$ are in
   general position, then the degree of each vertex of $\varphi_i$ is
   constant, so $P_i$ will be reported $O(1)$ times. However if points
   in $P_i$ are in a degenerate position, then additional care is
   needed, using standard techniques such as symbolic perturbation, to
   ensure that $P_i$ is reported only $O(1)$ times.}  The correctness
of the procedure follows from the same argument as above, namely,
since $\ell_q$ intersects all triangles of a canonical subset
$C\in F_q$, a triangle in $C$ lies below $\widehat{q}$ if and only if
the plane supporting it lies below $\widehat{q}$.

Putting everything together, we can construct, in $O(N\log^3 N)$
randomized expected time, a data structure of $O(N\log^2 N)$ size that
can answer an $\NZNN$ query in $O(\sqrt{N}\log^3 N)$ time.

Finally, we remark that the $3$-level partition tree can be replaced
by a multi-level data structure of size $O(N^2\log^2 N)$ so that the
set of triangles intersected by $\ell_q$ can be returned as the union
of $O(\log^3 N)$ canonical subsets~\cite{a-rs-16}. Using this data
structure, we can answer an $\NZNN$ query in $O(\log^4 N)$ time using
$O(N^2\log^2 N)$ space. We thus obtain the following:

\begin{theorem}
    Let $\P$ be a set of $n$ uncertain points in~$\Re^2$, each with a
    discrete distribution of size at most $k$; set $N=nk$. $\P$ can be
    preprocessed into $\anindex$ of size $O(N\log^3 N)$ so that an
    $\NZNN(q)$ query can be answered in $O(\sqrt{N}\log^3 N + t)$
    time, or into $\anindex$ of size $O(N^2\log^2 N)$ with
    $O(\log^4 N + t)$ query time, where $t$ is the output size. The
    expected preprocessing times are $O(N\log^3 N)$ and
    $O(N^2\log^3 N)$ time, respectively.
\end{theorem}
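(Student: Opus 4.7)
The plan is to follow the roadmap already laid out in the paragraphs preceding the statement and just verify the bounds advertised there. By \lemref{linearize}, the query $\NZNN(q)$ is equivalent to first computing $\Phi(q)=\min_i\Phi_i(q)$ and then reporting all $i$ for which $\varphi_i(q)<\Phi(q)$, where each graph is made into a triangulated convex / concave surface of total complexity $O(N)$. So I would build two structures, one for each stage, and compose them.

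For the first stage (computing $\Phi(q)$), I would set up a $3$-level partition tree on the $O(N)$ triangles in the graphs of $\Phi_1,\ldots,\Phi_n$, so that the triangles stabbed by the vertical line $\ell_q$ are expressed as a disjoint union of $O(\sqrt{N}\log^2 N)$ canonical subsets $F_q$ in $O(\sqrt{N}\log^2 N)$ time; this costs $O(N\log^2 N)$ space and $O(N\log^3 N)$ randomized preprocessing by~\cite{a-rs-16}. For each canonical subset $C$ I would precompute the lower envelope of the planes supporting its triangles together with Sariel--Sharir style point-location~\cite{sa-dsstg-95}, giving $O(\log|C|)$ evaluation of $L_C$. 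The key observation is that, since every triangle in $C$ crosses $\ell_q$, its intersection with $\ell_q$ lies on its supporting plane, so $\Phi(q)=\min_{C\in F_q}L_C(q)$; evaluating this over $F_q$ takes $O(\sqrt{N}\log^3 N)$ time.

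For the second stage, I would build an analogous $3$-level partition tree on the triangles of $\varphi_1,\ldots,\varphi_n$, but replace the envelope structure at each canonical subset $C$ with the output-sensitive halfspace reporting structure of~\cite{ac-ohrrt-09}: at a query point $\widehat q=(q,\Phi(q))$, it lists the $t_C$ planes of $C^*$ lying below $\widehat q$ in $O(\log N+t_C)$ time, with $O(|C|)$ space and $O(|C|\log|C|)$ expected construction. The same linearity-on-each-triangle argument shows that a triangle in $C$ lies below $\widehat q$ iff its supporting plane does, so summing over $F_q$ gives $O(\sqrt{N}\log^3 N+t)$ query time; adding the second stage's space and preprocessing does not change the $O(N\log^2 N)$ / $O(N\log^3 N)$ bounds. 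To derive the second trade-off, swap the partition tree for the multi-level simplex-range data structure of~\cite{a-rs-16} of size $O(N^2\log^2 N)$ that returns $F_q$ as $O(\log^3 N)$ canonical subsets; the same inner structures at the canonical sets then yield $O(\log^4 N+t)$ query time, $O(N^2\log^2 N)$ space, and $O(N^2\log^3 N)$ expected preprocessing.

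The only non-routine issue is avoiding spurious duplicates: if $\ell_q$ runs along the boundary of a triangle of some $\varphi_i$, the index $i$ could be emitted from several canonical subsets. Under general position each vertex of $\varphi_i$ has constant degree and each point is reported $O(1)$ times, which is absorbed in the $+t$ term; in degenerate cases one uses symbolic perturbation, as noted in the paper's footnote. Everything else is mechanical assembly of partition trees, envelopes, and halfspace reporting, with the bounds falling out by direct summation.
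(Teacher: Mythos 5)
Your proposal reproduces the paper's construction essentially verbatim: the same two-stage reduction via \lemref{linearize}, the same triangulation of $\Phi_i$ and $\varphi_i$, a $3$-level partition tree at the top with lower-envelope point location at the canonical nodes for evaluating $\Phi(q)$ and the halfspace-reporting structure of~\cite{ac-ohrrt-09} for reporting planes below $\widehat{q}$, the same linearity-on-a-stabbed-triangle observation, the same trade-off by substituting the multi-level simplex structure, and the same treatment of degeneracies via symbolic perturbation. This matches the paper's proof in substance and in the derivation of all bounds.
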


\section{Quantification probabilities}
\seclab{sec:quanprob}

We now turn our attention to the second part of answering
probabilistic NN queries, namely, returning the quantification
probabilities that are positive.  We begin with a data structure for
computing quantification probabilities exactly for the case when each
uncertain point has a discrete distribution of size at most $k$. Since
computing these quantities exactly is quite expensive and they are
small for most of the points, we focus on computing quantification
probabilities approximately.

\subsection{The exact algorithm}
\seclab{exact}

Assuming each point in $\P$ has a discrete distribution of size at
most $k$, we build the \emphi{probabilistic Voronoi diagram}
$\PVD(\P)$ that decomposes $\Re^2$ into a set of cells, so that any
point $q$ in a cell has the same $\pi_i(q)$ value for all
$P_i \in \P$; that is, for any point $q$ in this cell, we know exactly
the probability of each point $P \in \P$ being the $\NN$ of $q$.

\begin{lemma}%
    \lemlab{pvd}%
    Let $\P$ be a set of $n$ uncertain points in $\Re^2$, each with a
    discrete distribution of size at most $k$; set $N=nk$. The
    complexity of $\PVD(\P)$ is $O(N^4)$.  Moreover, there exists a
    set $\P$ of $n$ uncertain points in $\mathbb{R}^2$ with $k = 2$
    such that $\PVD(\P)$ has size $\Omega(n^4)$.
\end{lemma}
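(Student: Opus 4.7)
My plan is to reduce the problem to analyzing an arrangement of bisectors between instance points. For the upper bound, I would observe that by \Eqref{definition2}, $\pi_i(q)$ depends on $q$ only through the quantities $G_{q,j}(\dist(p_{is},q))$, and each such quantity is a locally-constant sum of weights $w_{jt}$ that changes precisely when $q$ crosses the perpendicular bisector $b(p_{is},p_{jt})$ of some pair of instances with $i\ne j$. There are at most $\binom{n}{2}k^2=O(N^2)$ such bisectors, all of them straight lines in $\Re^2$. Within any face of their arrangement the analytic expression of every $\pi_i$ is fixed, so $\PVD(\P)$ is a coarsening of this line arrangement, whose combinatorial complexity is $O(N^4)$. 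This yields the claimed upper bound.

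For the lower bound with $k=2$, I would place $2n$ instance points in sufficiently general position in the plane, group them arbitrarily into $n$ uncertain points of size $2$ each, and assign generic location probabilities $w_{is}$. Of the $\binom{2n}{2}-n=\Theta(n^2)$ inter-group bisectors, general position ensures that no two are parallel and no three are concurrent, so the induced line arrangement has $\Theta(n^4)$ vertices, edges, and faces.

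It then remains to verify that each of these $\Theta(n^2)$ bisectors survives as an edge of $\PVD(\P)$, i.e., that $\pi_i$ has different analytic forms on its two sides for some index $i$. Crossing $b(p_{is},p_{jt})$ toggles whether $w_{jt}$ is included in $G_{q,j}(\dist(p_{is},q))$, and inspecting \Eqref{definition2} shows that the resulting jump in $\pi_i(q)$ has magnitude
\begin{align*}
    w_{is}\,w_{jt}\prod_{j'\notin\{i,j\}}\bigl(1-G_{q,j'}(\dist(p_{is},q))\bigr),
\end{align*}
evaluated along the bisector. A generic (say, algebraically independent) choice of the weights $w_{is}$ guarantees that none of the factors $1-G_{q,j'}(\dist(p_{is},q))$ vanishes identically on the bisector, so the jump is nonzero and the bisector contributes a genuine edge of $\PVD(\P)$. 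Consequently every vertex of the line arrangement is a vertex of $\PVD(\P)$, and we obtain the $\Omega(n^4)$ lower bound.

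The main obstacle I anticipate is this last step: making the genericity argument airtight, so that none of the polynomial factors appearing in a jump can cancel anywhere along a bisector and so that no pair of adjacent line-arrangement faces accidentally get merged into a single PVD cell. This can be handled by a standard symbolic perturbation of the weights (and, if needed, of the point positions), tracking the finitely many polynomial non-vanishing conditions introduced by the $\Theta(n^2)$ bisector crossings.
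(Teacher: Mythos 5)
Your upper bound argument is essentially the paper's: reduce to the arrangement of bisectors of pairs of instance locations, observe that each $\pi_i(q)$ is constant on faces, and bound the arrangement by $O(N^4)$. (The paper uses all $\binom{N}{2}$ bisectors, you restrict to inter-group ones, but the asymptotics and the refinement idea are the same.) No issue there.

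The lower bound, however, has a genuine gap, and the flaw is precisely at the step you flag as the "main obstacle." You claim that a generic choice of weights ensures that none of the factors $1-G_{q,j'}(\dist(p_{is},q))$ vanishes identically on a bisector. This is false for structural, not generic, reasons: with $k=2$ the weights satisfy $w_{j'1}+w_{j'2}=1$, so $1-G_{q,j'}(\dist(p_{is},q))=0$ exactly when \emph{both} instances of $P_{j'}$ are closer to $q$ than $p_{is}$, and that is a purely geometric event that no perturbation of the $w$'s can eliminate. In other words, the vanishing set of this factor is determined by the point positions, and for a ``generic'' placement of $2n$ points a bisector $b(p_{is},p_{jt})$ can easily lie entirely, or along a substantial portion, in the region where some $P_{j'}$ has both instances closer, which makes the jump zero and kills that edge of the arrangement. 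Perturbing point positions is also not obviously sufficient, since the set of bad configurations need not have measure zero; you would need to prove that a suitable open set of $q$ on each bisector avoids all such vanishing, simultaneously for $\Theta(n^2)$ bisectors, and the proposal does not do this.

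The paper sidesteps this entirely with a deliberate (non-generic) construction: the first instance $p_i$ of each $P_i$ is placed inside a small disk $D$, and the second instance $p'_i$ is sent far away (all to $\bp=(100,0)$), so that for every query $q\in D$ and every $i$, $\dist(q,p_i)<\dist(q,p'_i)$. This guarantees that for $q\in D$ every factor $1-G_{q,j'}(\dist(p_i,q))$ is either $1$ or $1/2$, never $0$, and makes the quantification probabilities explicit ($\pi_i(q)=0.5^{r+1}+0.5^n$ where $r$ is the rank of $p_i$ among the near instances). The arrangement of the $\binom{n}{2}$ bisectors of the near instances then has $\Theta(n^4)$ faces inside $D$, and adjacent faces are shown to have distinct probability vectors by a direct computation, with no appeal to genericity. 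Your proposal would need to be repaired by adopting a comparable geometric control on the instance positions; generic weights alone cannot do the job.
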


\begin{proof}
    We first prove the upper bound. There are $N$ possible
    locations. Each pair of possible locations determines a bisector,
    resulting in $O(N^2)$ bisectors. These bisectors partition the
    plane into $O(N^4)$ convex cells so that the order of all
    distances to each of the $nk$ possible locations, and thus by
    \Eqref{definition2} also all the quantification probabilities, are
    preserved within each cell. Therefore, the resulting planar
    subdivision is a refinement of $\PVD(\P)$, and thus $O(n^4k^4)$ is
    an upper bound on the complexity of $\PVD(\P)$.

    Next, we show that there exists a set $\P$ of $n$ uncertain points
    in $\mathbb{R}^2$ with $k = 2$ such that $\PVD(\P)$ has size
    $\Omega(n^4)$.  For simplicity, we describe a degenerate
    configuration of points, but the argument can be generalized to a
    non-degenerate configuration as well, by being more careful. For
    every $1 \leq i \leq n$, $P_i \in \P$ has two possible locations
    $p_{i}$ and $p'_{i}$, each with probability $0.5$.  Let $D$ be the
    unit disk centered at the origin. We choose $p_1, \ldots, p_n$
    inside $D$ so that the bisector $b_{ij}$ of every pair $p_i, p_j$,
    for $i < j$, is a distinct line and all pairs of bisectors
    intersect inside $D$. We place all $p'_i$'s at the same location
    far away from $D$, say, at $\bp = (100, 0)$. Note that the
    bisector of $\bp$ and $p_i$, for any $i \leq n$, does not
    intersect $D$, so for any point $q \in D$,
    $\dist(p_i, q) < \dist(\bp, q)$.

    \begin{figure}[t]
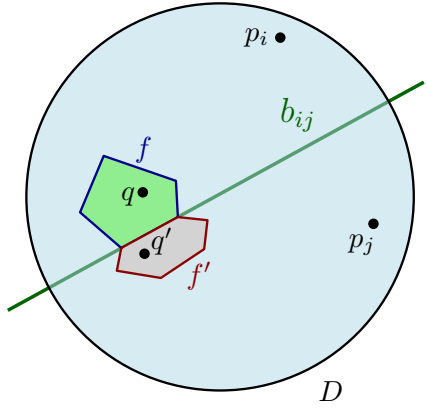

        \centering
        \IncludeGraphics[width=0.30\textwidth]%
        {./figs/pvd_n_4_lower_bound}
        \caption{An illustration of the proof. Inside the unit disk
           $D$, two adjacent faces $f$ and $f'$ share a portion of the
           bisector $b_{ij}$ defined by $p_i$ and $p_j$.}
        \figlab{pvd_n_4_lower_bound}
    \end{figure}

    Let $\EuScript A$ be the arrangement of the bisectors
    $\{ b_{ij} \mid 1 \leq i < j \leq n \}$. Since all pairs of
    bisectors intersect inside $D$, $\EuScript A \cap D$ has
    $\Theta(n^4)$ faces. Let $f, f'$ be any two adjacent faces of $\A$
    inside $D$, let $b_{ij}$ be the bisector separating $f$ and $f'$,
    and let $q, q'$ be arbitrary points in the interior of $f, f'$,
    respectively.  Without loss of generality, assume that
    $\dist(p_i, q) < \dist(p_j, q)$, then
    $\dist(p_i, q') > \dist(p_j, q')$. See
    \figref{pvd_n_4_lower_bound}.  Suppose there are $r$,
    $0 \leq r < n - 1$, points of $\{p_1, \ldots, p_n\}$ that are
    closer to $q$ than $p_i$, i.e., $p_i$ (resp. $p_j$) is the
    $(r+1)$-st $\NN$ of $q$ (resp. $q'$) among $\{p_1, \ldots,
    p_n\}$. Then, by \Eqref{definition2},
    \begin{align*}
      \pi_i(q) &= 0.5\cdot (1 - 0.5)^{r} + 0.5\cdot (1 - 0.5)^{n-1}
                 =
                 0.5^{r+1} + 0.5^n, \text{ and}\\
      \pi_j(q) &= 0.5\cdot (1 - 0.5)^{r+1} + 0.5\cdot (1 -
                 0.5)^{n-1} = 0.5^{r+2} + 0.5^n.
    \end{align*}
    Symmetrically, $\pi_i(q') = 0.5^{r+2} + 0.5^n$ and
    $\pi_j(q') = 0.5^{r+1} + 0.5^n$. In particular
    $\pi_i(q) \neq \pi_i(q')$ and $\pi_j(q) \neq \pi_j(q')$.  In other
    words, any two adjacent faces of $\EuScript A$ inside $D$ have
    distinct quantification probability vectors, implying that
    $\PVD(\P)$ has $\Omega(n^4)$ complexity.
\end{proof}

As in \secref{VD:continuous}, we can store the quantification
probabilities for all faces of $\PVD(\P)$ by using $O(1)$ storage per
face. Hence, by preprocessing $\PVD(\P)$ for point-location queries,
for a query point $q$, we can report all $t$ quantification
probabilities that are positive in $O(\log N+t)$ time.

\begin{theorem}
    \thmlab{qp-query} Let $\P$ be a set of $n$ uncertain points in
    $\Re^2$, each with a discrete distribution of size at most $k$;
    set $N=nk$. $\P$ can be preprocessed in time $O(N^4\log N)$ time
    into a data structure of size $O(N^4)$ that can report all $t$
    positive quantification probabilities of a query point in time
    $O(\log N+t)$.
\end{theorem}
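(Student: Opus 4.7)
The plan is to realize $\PVD(\P)$ explicitly as (a refinement of) the arrangement $\A$ of the $O(N^2)$ bisectors of pairs of possible locations, already used in the proof of \lemref{pvd}. Since $\A$ is an arrangement of lines, it has complexity $O(N^4)$ and can be built in $O(N^4)$ time by standard incremental construction, after which a planar point-location structure \cite{bcko-cgaa-08} is erected on top of it in $O(N^4\log N)$ time and $O(N^4)$ space, giving $O(\log N)$ point-location queries. This takes care of the space bound and the $O(\log N)$ term in the query time.

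The real content is storing, for every cell $\phi$ of $\A$, the vector of positive quantification probabilities so that they can be retrieved in $O(t)$ time once $\phi$ is known. First I would compute the entire vector $(\pi_1, \ldots, \pi_n)$ from scratch for one seed cell $\phi_0$, in $O(N^2)$ time using \Eqref{definition2}. Then I would traverse the dual graph of $\A$ and update the vector across each edge. Crossing an edge of $\A$ means swapping the relative order of exactly one pair of possible locations $p_{is}$ and $p_{jt}$; by inspecting \Eqref{definition2}, only the probabilities $\pi_i$ and $\pi_j$ change, and each of them changes by a single term corresponding to that swap, so the update costs $O(1)$ arithmetic operations. The total update work over the whole traversal is therefore $O(N^4)$, which fits in the preprocessing budget.

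To avoid blowing the space up to $\Theta(n N^4)$ by storing a full vector per cell, I would store the updates in a persistent data structure \cite{dsst-mdsp-89} keyed by cell, exactly as in the remark preceding \thmref{NZVD_continuous_query}: each cell stores only $O(1)$ new information on top of its neighbor, so the total storage is $O(N^4)$. In parallel, I would maintain, per cell, a list of the indices $i$ for which $\pi_i > 0$ (again updated in $O(1)$ per edge of $\A$, since $\pi_i$ becomes positive or zero only when the affected $i$ or $j$ crosses the threshold of all its support being dominated). Answering a query $q$ then consists of the point-location call to find the cell $\phi$ containing $q$, followed by traversing the stored list of positive indices at $\phi$ and reading off each corresponding $\pi_i$ from the persistent structure in $O(1)$ amortized time per entry, for a total of $O(\log N + t)$.

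The main obstacle is making the update-across-an-edge argument airtight in the presence of degeneracies (several bisectors meeting at a common vertex, or a query point on a bisector): in such cases several pairs of locations swap order simultaneously, and one must either apply a standard symbolic perturbation to keep the per-edge update cost $O(1)$, or bound the total degenerate work globally using the fact that $\A$ still has only $O(N^4)$ vertices of constant degree after perturbation. Once that is handled, gluing together the arrangement construction, the persistent-storage traversal, and the point-location structure yields the claimed $O(N^4\log N)$ preprocessing time, $O(N^4)$ space, and $O(\log N + t)$ query time.
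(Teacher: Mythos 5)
Your proposal takes essentially the same route as the paper: realize $\PVD(\P)$ via the arrangement of the $O(N^2)$ bisectors (exactly the structure from the proof of Lemma~\ref{lemma:pvd}), observe that crossing an edge perturbs only $O(1)$ entries of the probability vector, store these deltas persistently with $O(1)$ storage per face, and wrap the whole thing in a point-location structure. The paper states this in one sentence by referencing the persistent-storage trick from Section~\ref{sec:VD:continuous}; you unpack the same argument, including the seed-cell initialization and the list of positive indices needed for the $O(t)$ reporting term. One small caveat worth noting, which both you and the paper leave implicit: to realize the $O(1)$-per-edge update one must persistently maintain the per-location contributions $\eta(p_{is};q)$ (or the individual factors $1 - G_{q,j}(\dist(p_{is},q))$) rather than only the aggregates $\pi_i$, since the change is multiplicative, and one needs a little bookkeeping (e.g.\ tracking the count of zero factors separately from the product of nonzero factors) to avoid a division-by-zero when some factor vanishes; this is routine but not entirely automatic.
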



\subsection{A Monte-Carlo algorithm} %
\seclab{Monte:Carlo}

In this section we describe a simple Monte-Carlo approach to build
$\anindex$ for quickly computing $ \wpi_i(q)$ for all $P_i$ for any
query point $q$, which approximates the quantification probability
$\pi_i(q)$.  For a fixed value $s$, to be specified later, the
preprocessing step works in $s$ rounds.  In the $j$-\th round the
algorithm creates a sample
$R_j = \{r_{j1}, r_{j2}, \ldots, r_{jn}\}\subseteq \Re^2$ by choosing
each $r_{ji}$ using the distribution of $P_i$.  For each
$j \in \{1, \ldots, s\}$, we construct the Voronoi diagram $\Vor(R_j)$
in $O(n \log n)$ time and preprocess it for point-location queries in
additional $O(n \log n)$ time.

To estimate quantification probabilities of a query~$q$, we initialize
a counter $c_i =0$ for each point $P_i$. For each
$j \in \{1, \ldots, s\}$, we find the point $r_{ji}\in R_j$ whose cell
in $\Vor(R_j)$ contains the query point $q$, and increment $c_i$ by
$1$.  Finally we estimate $\wpi_i(q) = c_i/s$.  Note that at most $s$
distinct $c_i$'s have nonzero values, so we can implicitly set the
remaining $\wpi_i(q)$'s to $0$.

\paragraph{Discrete case.}%
If each $P_i \in \P$ has a discrete distribution of size $k$, then
this algorithm can be implemented very efficiently.  Each $r_{ji}$ can
be selected in $O(\log k)$ time after preprocessing each $P_i$, in
$O(k)$ time, into a balanced binary tree~\cite{mr-ra-95}.  Thus, total
preprocessing takes
$O(s (n (\log n + \log k)) + nk) = O(nk+ sn\log (nk))$ time and
$O(sn)$ space, and each query takes $O(s \log n)$ time.

It remains to determine the value of $s$ so that
$|\pi_i(q) - \wpi_i(q)| \leq \eps$ for all $P_i$ and all queries $q$,
with probability at least $1-\delta$.  For fixed $q$, $P_i$, and
instantiation $R_j$, let $\ranX_{ji}$ be the random indicator
variable, which is 1 if $r_{ji}$ is the $\NN$ of $q$ and 0 otherwise.
Since $\E[\ranX_{ji}] = \pi_i(q)$ and $\ranX_i \in \{0,1\}$, applying
the Chernoff-Hoeffding bound~\cite{mr-ra-95} to
   \begin{align*}
     \wpi_i(q) = \frac{c_i}{s} = \frac{1}{s} \sum_{j=1}^s \ranX_{ji},
   \end{align*}
 we obtain that
\begin{align}
  \eqlab{chernoff} \Pr\left[\strut| \wpi_i(q) - \pi_i(q)| \geq
  \eps\right] \leq 2 \exp(-2\eps^2 s).
\end{align}

For each cell of $\PVD(\P)$, we choose one point, and let $Q$ be the
resulting set of points. If $| \wpi_i(q) - \pi_i(q)| \leq \eps$ for
every point $q \in Q$, then $| \wpi_i(q) - \pi_i(q)| \leq \eps$ for
every point $q \in \Re^2$. Since there are $n$ different values of
$i$, by applying the union bound to \Eqref{chernoff}, the probability
that there exist a point $q\in\Re^2$ and an index
$i \in \{1, \ldots, n\}$ with $| \wpi_i(q) - \pi_i(q)| \geq \eps$ is
at most $2n|Q|\exp(-2\eps^2s)$. Hence, by setting
\begin{align*}
  s = \frac{1}{2\eps^2}\ln \frac{2n|Q|}{\delta},
\end{align*}
$| \wpi_i(q) - \pi_i(q)| \leq \eps$ for all $q \in \Re^2$ and for all
$i \in \{1, \ldots, n\}$, with probability at least $1 - \delta$.
By~\lemref{pvd}, $|Q| = O(n^4k^4)$, so we obtain the following result.
\begin{theorem}
    Let $\P$ be a set of $n$ uncertain points in~$\Re^2$, each with a
    discrete distribution of size $k$, and let
    $\eps,\delta \in (0, 1)$ be two parameters. $\P$ can be
    preprocessed, in
    \[O(nk + (n/\eps^2) \log(nk) \log(nk/\delta))\] time, into
    $\anindex$ of size $O((n/\eps^2) \log(nk/\delta))$, which
    computes, for any query point $q \in \Re^2$, in
    $O((1/\eps^2) \log(nk/\delta) \log n)$ time, a value $ \wpi_i(q)$
    for every $P_i$ such that $|\pi_i(q) - \wpi_i(q)| \leq \eps$ for
    all $i$ with probability at least $1-\delta$.
\end{theorem}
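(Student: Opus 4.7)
The plan is to follow the analysis sketched in \secref{Monte:Carlo} almost verbatim, and verify that choosing the sample size $s$ of the form given in the preprocessing bound makes the Chernoff-Hoeffding tail work out uniformly over all queries and all points $P_i$. The algorithm is the obvious Monte-Carlo scheme: independently generate $s$ instantiations $R_j=\{r_{j1},\dots,r_{jn}\}$ where $r_{ji}\sim P_i$, build $\Vor(R_j)$ with a point-location structure for each $j$, and at query time answer $s$ point-location queries and set $\wpi_i(q)=c_i/s$, where $c_i$ counts how often $r_{ji}$ was returned as the nearest point.

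The probabilistic analysis has two parts. First, for a fixed $q\in\Re^2$ and a fixed index $i$, the indicators $\ranX_{ji}=\1[r_{ji}\text{ is the NN of }q\text{ in }R_j]$ are i.i.d.\ Bernoulli with mean $\pi_i(q)$, since the $R_j$'s are independent and each $R_j$ is sampled from the joint distribution of $\P$. Hoeffding's inequality then yields \Eqref{chernoff}. Second, to upgrade this pointwise bound to a simultaneous guarantee over all $q\in\Re^2$, I will use \lemref{pvd}: the function $q\mapsto\pi_i(q)$ is constant on each cell of $\PVD(\P)$, and the number of cells is $O(n^4 k^4)$. Picking one representative $q_C$ per cell produces a set $Q$ of size $|Q|=O(n^4k^4)$. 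A union bound over the $n|Q|$ events $|\wpi_i(q_C)-\pi_i(q_C)|\ge\eps$ forces the bad probability below $\delta$ provided $2n|Q|\exp(-2\eps^2 s)\le\delta$, which is achieved by $s=(1/(2\eps^2))\ln(2n|Q|/\delta)=O((1/\eps^2)\log(nk/\delta))$.

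The resource bounds follow by bookkeeping. Each discrete $P_i$ is preprocessed once, in $O(k)$ time per point, into a balanced tree supporting $O(\log k)$-time sampling; this costs $O(nk)$ total. Each of the $s$ rounds draws a full instantiation in $O(n\log k)$ time and constructs $\Vor(R_j)$ with point-location in $O(n\log n)$ time, giving the stated preprocessing time $O(nk+(n/\eps^2)\log(nk)\log(nk/\delta))$ and total storage $O(sn)=O((n/\eps^2)\log(nk/\delta))$. A query performs $s$ point locations at cost $O(\log n)$ each, for total query time $O((1/\eps^2)\log(nk/\delta)\log n)$; at most $s$ counters are touched, so the implicit zeros for the other $P_i$'s cost nothing.

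The only step that requires any care is the uniform-over-$q$ union bound: strictly speaking $\wpi_i(q)$ is not constant on cells of $\PVD(\P)$, only on cells of the (random) overlay of $\PVD(\P)$ with $\Vor(R_1),\dots,\Vor(R_s)$. This is the main obstacle to a literal reading of the quoted argument. I will fix it by taking the set of representatives from this overlay instead: the overlay still has complexity polynomial in $n,k,s$, so $\log|Q|$ is absorbed (up to constants) into the $\log(nk/\delta)$ factor after re-solving for $s$, which does not change the asymptotic form of the bound. With this refinement the guarantee $|\pi_i(q)-\wpi_i(q)|\le\eps$ holds simultaneously for every $q\in\Re^2$ and every $i$ with probability at least $1-\delta$, yielding the theorem.
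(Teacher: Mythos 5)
Your proposal follows the paper's proof essentially verbatim: the same Monte-Carlo sampling scheme, the same Chernoff--Hoeffding bound, and the same union bound over one representative query per cell, with the same choice of $s$. The resource bookkeeping also matches.

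You correctly spot the one soft spot in the paper's write-up: $\wpi_i(\cdot)$ is a random step function determined by the Voronoi diagrams $\Vor(R_1),\dots,\Vor(R_s)$, so it need not be constant on the cells of $\PVD(\P)$, and the sentence ``if the bound holds on one representative per cell of $\PVD(\P)$ then it holds everywhere'' does not follow literally. However, your proposed fix --- taking representatives from the \emph{random} overlay of $\PVD(\P)$ with the $\Vor(R_j)$'s --- is the wrong repair, because then the representative set $Q$ depends on the very randomness you are taking a union bound over, which is exactly the kind of adaptivity a union bound does not license; you would need a separate uniform-convergence argument, not just a re-solve for $s$. The clean fix, and in fact what the $O(N^4)$ bound in \lemref{pvd} is actually counting, is to take $Q$ to be one point per cell of the \emph{deterministic} arrangement of all $O(N^2)$ bisectors of pairs of possible locations in $S=\bigcup_i P_i$. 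Within each such cell the full distance order to all $N$ candidate locations is fixed, so for \emph{every} realization of the samples, both $\pi_i$ and $\wpi_i$ are constant on each cell (each $\Vor(R_j)$ is coarser than this arrangement since $R_j\subseteq S$). This gives a deterministic $Q$ of size $O(N^4)$, the union bound applies as stated, and the value of $s$ and all other bounds are unchanged. With that replacement your argument is complete and matches the intended proof.
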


\paragraph{Continuous case.} %
There are two technical issues in extending this technique and
analysis to continuous distributions.  First, how we instantiate a
certain point $r_i$ from each $P_i$.  Herein we assume the
representation of the \pdf is such that this can be done in constant
time for each $P_i$.

Second, we need to bound the number of distinct queries that need to
be considered to apply the union bound as we did above.  Since
$\pi_i(q)$ may vary continuously with the query location, unlike the
discrete case, we cannot hope for a bounded number of distinct
results.  However, we just need to define a finite set $\QQ$ of query
points so that for any point $q \in \Re^2$, there is a point
$q' \in \QQ$ such that $\max_i |\pi_i(q) - \pi_i(q')| \leq \eps/2$.
Then, we can choose $s$ large enough so that it permits at most
$\eps/2$ error on each query in $\QQ$.  Specifically, choosing
$s = O((1/\eps^2) \log (n |\QQ|/\delta))$ is sufficient, so all that
remains is to bound $|\QQ|$.

To choose $\QQ$, we show that each \pdf of $P_i$ can be approximated
with a discrete distribution of size $O((n^2/\eps^2)\log(n/\delta))$,
and then reduce the problem to the discrete case.

For parameters $\alpha > 0$ and $\delta' \in (0,1)$, set
\begin{align*}
  k(\alpha) = \frac{c}{\alpha^2}\log\frac{1}{\delta'},
\end{align*}
where $c$ is a constant. For each $i \in \{1, \ldots, n\}$, we choose
a random sample $\barX{P}_i \subset P_i$ of size $k(\alpha)$,
according to the distribution defined by the location $\pdf$ $f_i$ of
$P_i$. We regard $\barX{P}_i$ as an uncertain point with uniform
location probability. Set
$\barX{\P} = \brc{\barX{P}_1, \ldots, \barX{P}_n }$.

For a point $q \in \Re^2$, let $\barX{G}_{q, i}$ denote the $\cdf$ of
the distance between $q$ and $\barX{P}_i$, i.e.,
$\barX{G}_{q, i}(r) = \Pr[\dist(q, \barX{P}_i) \leq r]$, or
equivalently, it is the probability of $\barX{P}_i$ lying in the disk
of radius $r$ centered at $q$.  A well-known result in the theory of
random sampling~\cite{lls-ibscl-01,vc-ucrfe-71} implies that for all
$q \in \Re^2$ and $r \geq 0$,
\begin{align}
  \eqlab{sampling:theory} \left|G_{q,i}(r) -
  \barX{G}_{q,i}(r)\right| \leq \alpha,
\end{align}
with probability at least $1-\delta'$, provided that the constant $c$
in $k(\alpha)$ is chosen sufficiently large.

Let $\barX{\pi}_i(q)$ denote the probability of $\barX{P}_i$ being the
$\NN$ of $q$ in $\barX{\P}$.  We prove the following:

\begin{lemma}
    For any $q \in \Re^2$ and for any fixed $i \in \{1, \ldots, n\}$,
    \begin{align*}
      |\pi_i(q) - \barX{\pi}_i(q)| \leq \alpha n,
    \end{align*}
    with probability at least $1 - \delta'$.
\end{lemma}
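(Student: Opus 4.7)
The plan is to first apply \Eqref{sampling:theory} simultaneously to all $n$ of the cdf's $G_{q,j}$: by a union bound (absorbing a factor of $n$ into the choice of $\delta'$), with probability at least $1 - \delta'$ the uniform estimate $|G_{q,j}(r) - \barX{G}_{q,j}(r)| \leq \alpha$ holds for every $j \in \{1,\ldots,n\}$ and every $r \geq 0$. Conditional on this event, I will establish the bound $|\pi_i(q) - \barX{\pi}_i(q)| \leq \alpha n$ deterministically using the integral representation \Eqref{definition}.

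The key algebraic step is to split the difference by adding and subtracting an intermediate term. Writing
\begin{align*}
  \pi_i(q) - \barX{\pi}_i(q)
  &= \int_0^\infty d(G_{q,i} - \barX{G}_{q,i})(r) \prod_{j\neq i}(1 - G_{q,j}(r)) \\
  &\quad + \int_0^\infty d\barX{G}_{q,i}(r) \biggl[ \prod_{j\neq i}(1 - G_{q,j}(r)) - \prod_{j\neq i}(1 - \barX{G}_{q,j}(r)) \biggr],
\end{align*}
I will bound each of the two integrals by roughly $\alpha$ and $(n-1)\alpha$ respectively. For the second integral, a standard telescoping argument shows that since every factor lies in $[0,1]$ and differs by at most $\alpha$, the difference of the two products of $n-1$ factors is at most $(n-1)\alpha$ in absolute value; since $\int_0^\infty d\barX{G}_{q,i}(r) = 1$, this contributes at most $(n-1)\alpha$.

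For the first integral, I would apply Stieltjes integration by parts with $u(r) = \prod_{j \neq i}(1 - G_{q,j}(r))$ and $v(r) = G_{q,i}(r) - \barX{G}_{q,i}(r)$. The boundary contributions $[uv]_0^\infty$ vanish because $v(0) = 0$ (continuous distances give $G_{q,j}(0) = 0$) and $v(\infty) = 0$ (both cdf's tend to $1$, as the uncertainty regions are bounded). Thus the first integral equals $-\int_0^\infty v(r) \, du(r)$. The crucial observation is that $u$ is a product of non-increasing functions and is therefore itself monotonically non-increasing from $u(0)=1$ to $u(\infty) = 0$, so its total variation is exactly $1$. Combined with $|v(r)| \leq \alpha$, this bounds the first integral by $\alpha$. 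Summing gives $|\pi_i(q) - \barX{\pi}_i(q)| \leq \alpha + (n-1)\alpha = \alpha n$, as required.

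The main technical subtlety I expect is the justification of the integration by parts and the vanishing of boundary terms; everything else reduces to the elementary telescoping identity and the monotonicity of $u$. This is essentially a bookkeeping issue since both $G_{q,i}$ and $u$ are functions of bounded variation on $[0,\infty)$, but it needs to be handled carefully (for instance by truncating at $R$ and letting $R \to \infty$, using the fact that the uncertainty regions are assumed bounded so $G_{q,i}$ reaches $1$ in finite time).
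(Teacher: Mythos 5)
Your proof is correct, and it follows the same overall strategy as the paper's (telescoping plus an integration-by-parts/Fubini step to relate $g_{q,i}$ to $G_{q,i}$), but with a cleaner execution. The paper first replaces $G_{q,j}$ with $\barX{G}_{q,j}$ for every $j\ne i$ inside the product, incurring error $(n-1)\alpha$ by the telescoping bound; it then rewrites $\prod_{j\ne i}(1-\barX{G}_{q,j}(r))$ as $\int_r^\infty h_{q,i}(\theta)\,\dir\theta$, where $h_{q,i}$ is the density of the distance from $q$ to the nearest point of $\barX{\P}\setminus\{\barX{P}_i\}$, applies Fubini to convert the $g_{q,i}$-integral into $\int_0^\infty h_{q,i}(\theta)G_{q,i}(\theta)\,\dir\theta$, and finally replaces $G_{q,i}$ by $\barX{G}_{q,i}$, incurring an additional $\alpha$. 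Your add-and-subtract decomposition accomplishes the same splitting in a single step, and your Stieltjes integration by parts plays exactly the role of the paper's Fubini swap, with the monotonicity and unit total variation of $u(r)=\prod_{j\ne i}(1-G_{q,j}(r))$ standing in for the fact that $\int h_{q,i}=1$. This avoids introducing the auxiliary density $h_{q,i}$ altogether, which is arguably cleaner since $\barX{P}_j$ is a finite sample and $\prod_{j\ne i}(1-\barX{G}_{q,j})$ is a step function rather than an absolutely continuous survival function. You are also more explicit about a point the paper glosses over: applying \Eqref{sampling:theory} to all $n$ samples simultaneously requires a union bound over $j$, so the failure probability for a single $i$ is really $n\delta'$ rather than $\delta'$; you correctly note that this factor can be absorbed into the choice of $\delta'$ (equivalently, into the constant $c$ in $k(\alpha)$).
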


\begin{proof}
    Recall that by \Eqref{definition},
    \begin{equation*}
        \pi_i(q) = \int_0^{\infty} g_{q, i}(r) \prod_{j\neq i} (1 -
        G_{q,j}(r)) \dir r.
    \end{equation*}
    Using \Eqref{sampling:theory}, and the fact that
    $G_{q, j}(r), \barX{G}_{q, j}(r)\in [0, 1]$ for all $j$, we obtain
    \begin{equation*}
        \pi_i(q) \leq \int_0^{\infty} g_{q, i}(r) \prod_{j\neq i} (1 -
        \barX{G}_{q,j}(r)) \dir r + (n-1)\alpha.
    \end{equation*}

    Note that $\prod_{j\neq i} (1-\barX{G}_{q,j}(r))$ is the
    probability that the closest point of $q$ in
    $\barX{\P}\setminus\brc{\barX{P}_i}$ is at least distance $r$ away
    from~$q$. Let $h_{q, i}$ be the $\pdf$ of the distance between $q$
    and its closest point in $\barX{\P} \setminus \brc{
       \barX{P}_i}$. Then
    \begin{equation*}
        \prod_{j\neq i} (1-\barX{G}_{q,j}(r)) = \int_r^\infty h_{q,
           i}(\theta)\dir\theta.
    \end{equation*}

    Therefore
    \begin{equation*}
        \pi_i(q) \leq \int_0^{\infty}\int_r^\infty g_{q, i}(r) h_{q,
           i}(\theta)\dir\theta\dir r + (n-1)\alpha.
    \end{equation*}
    By reversing the order of integration, we obtain
    \begin{align*}
      \pi_i(q) %
      & \leq%
        \int_0^{\infty}\int_0^\theta h_{q, i}(\theta)g_{q, i}(r)\dir
        r\dir\theta + (n-1)\alpha%
        = %
        \int_0^{\infty} h_{q, i}(\theta) G_{q, i}(\theta)\dir\theta +
        (n-1)\alpha%
      \\&%
          \leq%
          \int_0^{\infty} h_{q, i}(\theta) (\barX{G}_{q, i}(\theta) +
          \alpha)\dir\theta + (n-1)\alpha
      \qquad \text{(using \Eqref{sampling:theory})}\\
      & =%
        \int_0^{\infty} h_{q, i}(\theta) \barX{G}_{q,
        i}(\theta)\dir\theta + n\alpha
        =%
        \barX{\pi}_i(q) + n\alpha.
    \end{align*}

    A similar argument shows that
    $ \pi_i(q) \geq \barX{\pi}_i(q) - n\alpha$.  This completes the
    proof of the lemma.
\end{proof}

Thus, by setting $\alpha = \eps/(2n)$, a random sample $\barX{P}_i$ of
size $O((n^2/\eps^2) \log(n/\delta))$ from each $P_i$ ensures that
\begin{align}
  \eqlab{continous} |\pi_i(q) - \barX{\pi}_i(q)| \leq \eps/2
\end{align}
for all queries. By choosing $\delta' = \delta/(2n)$,
\Eqref{continous} holds for all $i \in \{1, \ldots, n\}$ with
probability at least $1 - \delta/2$.

We consider $\PVD(\barX{\P})$, choose one point from each of its
cells, and set $\QQ$ to be the resulting set of points. For a point
$q \in \Re^2$, let $\qq \in \QQ$ be the representative point of the
cell of $\PVD(\barX{\P})$ that contains $q$. Then,
$|\pi_i(q) - \barX{\pi}_i(\qq)| < \eps/2$ for all points $q \in \Re^2$
and $i \in \{1, \ldots, n\}$, with probability at least
$1 - \delta/2$.

Now applying the analysis for the discrete case to the point set
$\barX{\P}$, if we choose
\begin{align*}
  s = O\Bigl(\frac{1}{\eps^2} \log \frac{n|\QQ|}{\delta}\Bigr),
\end{align*}
then $\cardin{\barX{\pi}_i(q) - \wpi_i(q)} < \eps$ for all points
$q \in \Re^2$ and for all $i \in \{1, \ldots, n\}$ with probability at
least $1 - \delta/2$.  Since
\begin{align*}
  \cardin{\barX{P}_i}%
  =%
  k\Bigl(\frac{\eps}{2n}\Bigr) = O\Bigl(\frac{n^2}{\eps^2}\log
  \frac{n}{\delta}\Bigr),
\end{align*}
by~\lemref{pvd},
\begin{align*}
  |\QQ| = O\Bigl(n^4
  \Bigl(k\Bigl(\frac{\eps}{2n}\Bigr)\Bigr)^4\Bigr) =
  O\Bigl(\frac{n^{12}}{\eps^8} \log^4 \frac{n}{\delta}\Bigr).
\end{align*}

Putting everything together, we obtain the following.

\begin{theorem}
    Let $\P= \{P_1, \ldots, P_n\}$ be a set of $n$ uncertain points in
    $\Re^2$ so a random instantiation of $P_i$ can be performed in
    $O(1)$ time, and let $\eps,\delta \in (0,1)$ be two parameters.
    $\P$ can be preprocessed in
    $O((n/\eps^2) \log (n/\eps\delta) \log n)$ time into $\anindex$ of
    size $O((n/\eps^2) \log(n/\eps\delta))$ that computes for any
    query point $q \in \Re^2$, in
    $O((1/\eps^2) \log(n/\eps\delta) \log n)$ time, a value
    $\wpi_i(q)$ for every $P_i$ such that
    $|\pi_i(q) - \wpi_i(q)| \leq \eps$ for all $i$ with probability at
    least $1-\delta$.
\end{theorem}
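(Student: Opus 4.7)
The plan is to reduce the continuous case to the discrete case via the sampling lemma proved just before the theorem, and then plug in the discrete Monte-Carlo bound with appropriate parameter choices. Essentially, the approach proceeds in two stages: first, replace each continuous $P_i$ by a discrete surrogate $\barX{P}_i$ obtained by drawing $k(\alpha) = O((1/\alpha^2)\log(1/\delta'))$ i.i.d.\ samples from $f_i$, with parameters $\alpha$ and $\delta'$ to be fixed; second, run the discrete Monte-Carlo scheme on $\barX{\P} = \{\barX{P}_1,\ldots,\barX{P}_n\}$ with $s$ sampled instantiations, where $s$ will be tuned to make the overall additive error at most $\eps$ with failure probability at most $\delta$.

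First I would set $\alpha = \eps/(2n)$ and $\delta' = \delta/(2n)$; by the preceding lemma, combined with a union bound over the $n$ values of $i$, we get $|\pi_i(q) - \barX{\pi}_i(q)| \le \eps/2$ for every $q \in \Re^2$ and every $i$, with probability at least $1-\delta/2$. Each $|\barX{P}_i| = O((n^2/\eps^2)\log(n/\delta))$. Next I would apply the discrete-case analysis to $\barX{\P}$: the finite ``witness'' set $\QQ$ of queries is obtained by picking one representative point from each cell of $\PVD(\barX{\P})$, and by \lemref{pvd} its size is $|\QQ| = O(n^4 \cdot (n^2/\eps^2 \cdot \log(n/\delta))^4) = \mathrm{poly}(n/(\eps\delta))$. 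A Chernoff--Hoeffding bound on the indicator variables $\ranX_{ji}$ (whether $r_{ji}$ is the \NN of a fixed $q \in \QQ$), together with a union bound over $n \cdot |\QQ|$ events, shows that choosing $s = O((1/\eps^2)\log(n|\QQ|/\delta))$ guarantees $|\barX{\pi}_i(q) - \wpi_i(q)| < \eps/2$ for all $i$ and all $q\in\QQ$ with probability at least $1-\delta/2$. Since $\log|\QQ| = O(\log(n/(\eps\delta)))$, this gives $s = O((1/\eps^2)\log(n/(\eps\delta)))$.

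The two error bounds combine via the triangle inequality: $|\pi_i(q) - \wpi_i(q)| \le |\pi_i(q) - \barX{\pi}_i(q)| + |\barX{\pi}_i(q) - \wpi_i(q)| \le \eps$, holding simultaneously for all $i$ and all $q\in\Re^2$ with probability at least $1-\delta$. For the resource bounds I would essentially copy the discrete analysis: each of the $s$ rounds instantiates $n$ points in $O(n)$ time (by hypothesis on $f_i$) and builds a Voronoi diagram with point location in $O(n\log n)$, giving preprocessing time $O(sn\log n) = O((n/\eps^2)\log(n/(\eps\delta))\log n)$ and total space $O(sn) = O((n/\eps^2)\log(n/(\eps\delta)))$; each query does $s$ point-location lookups in $O(\log n)$ time each, for a total of $O((1/\eps^2)\log(n/(\eps\delta))\log n)$.

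The only non-routine step is confirming that the $O(\log n)$ factor inside $\log|\QQ|$ is absorbed correctly — since $|\QQ|$ is polynomial in $n/(\eps\delta)$, its logarithm is $O(\log(n/(\eps\delta)))$, which is what the theorem statement claims. Everything else is bookkeeping once the two-stage reduction is in place; the conceptual content of the proof is entirely carried by the sampling lemma (which bounds how much the quantification probabilities can shift when the continuous \pdf is replaced by a discrete sample) and by the discrete Monte-Carlo argument (which is a direct Chernoff plus union bound over $\PVD$ cells). No additional geometric argument is needed beyond invoking \lemref{pvd}.
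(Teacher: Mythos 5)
Your proposal follows the paper's proof essentially verbatim: discretize each \pdf via the preceding sampling lemma with $\alpha=\eps/(2n)$ and $\delta'=\delta/(2n)$, take $\QQ$ to be one representative per cell of $\PVD(\barX{\P})$ so that $|\QQ|=\mathrm{poly}(n/(\eps\delta))$ by \lemref{pvd}, and run the discrete Chernoff-plus-union-bound analysis with $s=O((1/\eps^2)\log(n|\QQ|/\delta))$. Your explicit $\eps/2+\eps/2$ split via the triangle inequality is, if anything, slightly cleaner bookkeeping than the paper's.
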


\subsection{Spiral search algorithm}
\seclab{SpiralSearch}

If the distribution of each point in $\P$ is discrete, then there is
an alternative approach to approximate the quantification
probabilities for a given query $q$: set a parameter $m > 1$, choose
the $m$ points of $S = \bigcup_{i = 1}^n P_i$ that are closest to $q$,
and use only these $m$ points to estimate $\pi_i(q)$ for each
$P_i$. We show that this works for a small value of $m$ when, for each
$P_i$, each location is approximately equally likely, but is not
efficient if the location probabilities vary significantly.

   Recall that $w_{ij}$ is the location probability of a point
   $p_{ij} \in P_i$.  Set $S = \bigcup_{i = 1}^n P_i$ to be the set of
   all possible locations of points in $\P$.  We define the quantity
   \begin{align}
     \eqlab{max:ratio} \rho = \frac{\max_{i, j} w_{ij}}{\min_{i, j}
     w_{ij}},
   \end{align}
   the ratio of the largest to the smallest location probability over
   all points of $S$, as the \emphi{spread} of location
   probabilities. Set
   \begin{align*}
     m(\rho, \eps) = \rho k\ln (1/\eps) + k-1.
   \end{align*}

   Fix a query point $q \in \Re^2$. Let $\barX{S} \subseteq S$ be the
   $m(\rho, \eps)$ nearest neighbors of $q$ in $S$,
   $\barX{P}_{i} = \barX{S} \cap P_i$, and
   $\barX{\P} = \{\barX{P}_{1}, \ldots, \barX{P}_{n}\}.$ Note that
   $\barX{w}_i = \sum_{p_{i, a} \in \barX{P}_{i}} w_{i, a}$ is not
   necessarily equal to 1, so we cannot regard $\barX{P}_i$ as an
   uncertain point in our model, but still it will be useful to think
   of $\barX{P}_{i}$ as an uncertain point that does not exist with
   probability $1 - \barX{w}_i$.

   For a set $Y$ of points and another point $\xi \in \Re^2$, let
   \begin{align*}
     Y[\xi] = \Set{\bigl. p \in Y}{ \dist(q, p) \leq \dist(q, \xi)}.
   \end{align*}
   For a point $p := p_{i, a} \in P_i$, the probability that $p$ is
   the $\NN$ of $q$ in $\P$, denoted by $\eta(p; q)$, is
   \begin{align}
     \eqlab{prob}%
     \eta(p; q) = w_{i,a} \prod_{j\neq i} \Bigl(1 - \sum_{p_{j, \ell} \in
     P_j[p]} w_{j, \ell} \Bigr).
   \end{align}

   Moreover,
   \begin{align}
     \eqlab{prb:2}%
     \pi_i(q) = \sum_{p_{i,a} \in P_i} \eta(p_{i,a}; q).
   \end{align}

   \def\weta{\widehat{\eta}}

   For each $i \leq n$, $q \in \Re^2$, and $p_{i,a}\in \barX{P}_i$, we
   analogously define the quantities $\weta(p_{i,a};q)$ and
   $\wpi_{i}(q)$ using \Eqref{prob} and \Eqref{prb:2} but replacing
   $P_{j}$ with $\barX{P}_{j}$ for every $j \in \{1, \ldots, n\}$.
   Intuitively, if $\barX{\P}$ were a family of uncertain points, then
   $\wpi_i(q)$ would be the probability of $\barX{P}_{i}$ being the
   $\NN$ of $q$ in $\barX{\P}$.

\begin{lemma}
    For all $i \in \{1, \ldots, n\}$,
    \begin{equation*}
        \wpi_i(q) \le  \pi_i(q) \le  \wpi_i(q) + \eps.
    \end{equation*}
\end{lemma}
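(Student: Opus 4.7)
The plan is to handle the two inequalities separately. For the lower bound $\wpi_i(q) \le \pi_i(q)$, I would first make the key observation that for any $p = p_{i,a} \in \barX{P}_i$ and any $j$, we have $P_j[p] = \barX{P}_j[p]$. Indeed, if $p$ is among the $m(\rho,\eps)$ nearest neighbors of $q$ in $S$, then every $p' \in P_j$ with $\dist(q,p') \le \dist(q,p)$ is also among those $m(\rho,\eps)$ nearest neighbors, hence lies in $\barX{P}_j$. Plugging this into the formulas, $\eta(p;q) = \weta(p;q)$ whenever $p \in \barX{P}_i$, so
\begin{equation*}
\wpi_i(q) = \sum_{p \in \barX{P}_i} \weta(p;q) = \sum_{p \in \barX{P}_i} \eta(p;q) \le \sum_{p \in P_i} \eta(p;q) = \pi_i(q).
\end{equation*}

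For the upper bound, I would show that every point $p = p_{i,a} \in P_i \setminus \barX{P}_i$ contributes at most $w_{i,a}\eps$ to $\pi_i(q)$, so that the total drop $\pi_i(q) - \wpi_i(q)$ is at most $\eps \sum_a w_{i,a} = \eps$. Fix such a $p$; since $p$ is not among the $m(\rho,\eps)$ nearest neighbors of $q$ in $S$, at least $m(\rho,\eps)$ points of $S$ are strictly closer to $q$. At most $k-1$ of those can lie in $P_i \setminus \{p\}$, so
\begin{equation*}
\sum_{j \neq i} |P_j[p]| \;\ge\; m(\rho,\eps) - (k-1) \;=\; \rho k \ln(1/\eps).
\end{equation*}

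The next step is to turn this into a bound on $\prod_{j \neq i}(1 - W_j)$, where $W_j = \sum_{p_{j,\ell}\in P_j[p]} w_{j,\ell}$. For each $j$, the location probabilities inside $P_j$ sum to $1$ and there are at most $k$ of them, so the largest is $\ge 1/k$; combined with the definition of the spread $\rho$ this gives $w_{\min} \ge 1/(\rho k)$ globally. Therefore
\begin{equation*}
\sum_{j \neq i} W_j \;\ge\; w_{\min} \sum_{j\neq i} |P_j[p]| \;\ge\; \frac{1}{\rho k}\cdot \rho k \ln(1/\eps) \;=\; \ln(1/\eps),
\end{equation*}
and using $1-x \le e^{-x}$ we obtain $\prod_{j\neq i}(1-W_j) \le e^{-\ln(1/\eps)} = \eps$. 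Hence $\eta(p;q) \le w_{i,a}\eps$, and summing over $p \in P_i \setminus \barX{P}_i$ gives $\pi_i(q) - \wpi_i(q) \le \eps$.

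The main obstacle, such as it is, is pinning down the bookkeeping on the number of points in $S \setminus P_i$ that are closer to $q$ than $p$: one has to subtract off the $k-1$ possible other locations in $P_i$, which is exactly why the definition of $m(\rho,\eps)$ carries the additive $k-1$ term. The rest is a routine application of $\prod(1-x_j)\le \exp(-\sum x_j)$ once the lower bound $w_{\min}\ge 1/(\rho k)$ is extracted from the definition of $\rho$.
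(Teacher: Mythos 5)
Your proposal is correct and follows essentially the same route as the paper's proof: the identity $\eta(p;q)=\weta(p;q)$ for $p\in\barX{P}_i$ via $P_j[p]=\barX{P}_j[p]$ gives the lower bound, and the count $\sum_{j\ne i}|P_j[p]|\ge m(\rho,\eps)-(k-1)=\rho k\ln(1/\eps)$ together with $w_{\min}\ge 1/(\rho k)$ and $\prod(1-W_j)\le\exp(-\sum W_j)$ gives the $w_{i,a}\eps$ bound per excluded location. The only (harmless) cosmetic difference is that you sum the $W_j$ before exponentiating, whereas the paper bounds each factor separately; you also spell out why $w_{\min}\ge 1/(\rho k)$, which the paper merely asserts.
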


\begin{proof}
    Fix a point $p \in P_i$.  If $p \in \barX{P}_i$, then for all
    $j\ne i$, $\barX{P}_j[p] = P_j[p]$, therefore by \Eqref{prob},
    $\eta(p; q) = \weta(p;q)$.

    Hence, by \Eqref{prb:2},
    \begin{align}
      \pi_i(q) 
      &=%
        \sum_{p \in \barX{P}_i} \eta(p;q) + 
        \sum_{p \in P_i\setminus\barX{P}_i} \eta(p; q) %
        =%
          \sum_{p \in \barX{P}_i} \weta(p;q) + 
              \sum_{p \in P_i\setminus\barX{P}_i} \eta(p; q)
      = \wpi_i(q) +
        \sum_{p \in P_i\setminus\barX{P}_i} \eta(p; q) \eqlab{pi-bound}.
    \end{align}
    Therefore $\wpi_i(q) \le \pi_i(q)$. Next, we bound the second term
    in the right hand side of \Eqref{pi-bound}.  Let
    $p \in P_i \setminus \barX{P}_i$.  Set $x_j = |P_j[p]|$, for
    $j \ne i$, and $m' = \sum_{j\neq i} x_j$. Since
    $P_i\setminus\barX{P}_i \ne \emptyset$, $|\barX{P}_i| \le k-1$ and
    $m' = |\barX{S}|-|\barX{P}_i| \ge \rho k\ln(1/\eps)$.  Note that
    each $w_{j,a} \ge 1/\rho k$. Therefore,
    \begin{align*}%
      \eta(p;q)%
      &=%
        w_{i,a} \prod_{\substack{j \neq i}} \Bigl(1 -
        \sum_{p_\ell \in P_{j}[p]}w_{j,\ell}\Bigr) 
        \leq%
        w_{i,a} \prod_{\substack{j \neq i}} \Bigl(1 -
        \frac{x_j}{\rho k} \Bigr)%
        \leq
        w_{i,a} \prod_{\substack{j \neq i}} \exp \left
        (-x_j/\rho k \right ) 
        \\&
            = w_{i,a} \exp \left (-m'/\rho k \right) \le w_{i,a}\eps.
    \end{align*}
    Consequently,
    \begin{equation}
        \eqlab{tail-bound}
        \sum_{p \in  P_i \setminus \barX{P}_i} \eta(p; q) \le 
        \sum_{p \in  P_i \setminus \barX{P}_i} \eps w_{i,a} \le \eps. 
    \end{equation}
    Plugging \Eqref{tail-bound} in \Eqref{pi-bound}, we obtain
    $\pi_i(q) \le \wpi_i(q)+\eps$, as claimed. This completes the
    proof of the lemma.
\end{proof}

For any $i$, if $P_i\cap \barX{S}(q) = \emptyset$, then we can
implicitly set $\wpi_i(q)$ to $0$.  Using the data structure by
\cite{ac-ohrrt-09}, $S$ can be preprocessed in $O(N\log N)$ randomized
expected time into a data structure of $O(N)$ size so that
$m := m(\rho,\eps)$ nearest neighbors of a query point can be reported
in $O(m + \log N)$ time.  We thus obtain the following result.

\begin{theorem}
    Let $\P$ be a set of $n$ uncertain points in $\Re^2$, each with a
    discrete distribution of size at most $k$, let $\rho$ be the
    spread of the location probabilities, and let $N=nk$.  $\P$ can be
    preprocessed in $O(N \log N)$ expected time into $\anindex$ of
    size $O(N)$, so that for a query point $q \in \Re^2$ and a
    parameter $\eps \in (0, 1)$, it can compute, in time
    $O(\rho k \log(1/\eps) + \log N)$, values $\wpi_i(q)$ for all
    $P_i \in \P$ such that $| \pi_i(q) - \wpi_i(q) | \leq \eps$ for
    all $i \in \{1, \ldots, n\}$.
\end{theorem}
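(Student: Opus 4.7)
The plan is to combine the preceding lemma with an efficient data structure for reporting sorted nearest neighbors. Let $N = nk$ and $m = m(\rho, \eps) = \rho k \ln(1/\eps) + k - 1$. First, I would preprocess the set $S = \bigcup_{i=1}^n P_i$ of all possible locations using the data structure of \cite{ac-ohrrt-09}, which takes $O(N \log N)$ randomized expected time and $O(N)$ space and supports reporting the $m$ nearest neighbors of a query point, sorted by distance, in $O(m + \log N)$ time.

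Given a query $q$, the algorithm would retrieve $\barX{S}$, the sorted list of its $m$ nearest neighbors, and then compute $\wpi_i(q)$ for each $P_i$ with $\barX{P}_i = \barX{S} \cap P_i \neq \emptyset$, implicitly setting $\wpi_i(q) = 0$ for all other $i$. Correctness is then immediate from the preceding lemma, which guarantees $|\pi_i(q) - \wpi_i(q)| \le \eps$.

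The main obstacle is computing the at most $m$ contributions $\widehat{\eta}(p_{i,a}; q)$ fast enough, since a naive evaluation of $\widehat{\eta}(p_{i,a}; q) = w_{i,a} \prod_{j \neq i}(1 - W_j)$ would cost $\Theta(n)$ per point and blow up the query time. My plan is to sweep through $\barX{S}$ in order of increasing distance from $q$, maintaining for each $j$ the partial sum $W_j$ of weights of points of $\barX{P}_j$ processed so far, together with the running product $\Pi = \prod_{j=1}^n (1 - W_j)$. When the sweep reaches $p_{i,a}$, the needed quantity equals
\begin{align*}
\widehat{\eta}(p_{i,a}; q) = w_{i,a}\, \prod_{j \neq i}(1 - W_j) = w_{i,a} \cdot \frac{\Pi}{1 - W_i},
\end{align*}
which I add to $\wpi_i(q)$; I then update $\Pi \leftarrow \Pi \cdot (1 - W_i - w_{i,a})/(1 - W_i)$ followed by $W_i \leftarrow W_i + w_{i,a}$. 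Each update is $O(1)$, and division by zero cannot occur: at the moment $p_{i,a}$ is processed we have $W_i + w_{i,a} \le \barX{w}_i \le 1$ with $w_{i,a} > 0$, so $W_i < 1$ strictly.

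Putting everything together, the query cost is $O(m + \log N) = O(\rho k \log(1/\eps) + \log N)$, preprocessing takes $O(N \log N)$ expected time, the space is $O(N)$, and the approximation bound is supplied by the preceding lemma.
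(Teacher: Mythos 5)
Your proposal takes the same route as the paper: preprocess $S$ with the Afshani--Chan halfspace range reporting structure to retrieve the $m(\rho,\eps)$ nearest neighbors of $q$ in $O(m+\log N)$ time, and invoke the preceding lemma for the $\eps$-error guarantee. The paper leaves the computation of the estimates $\wpi_i(q)$ from the retrieved points implicit; your explicit $O(m)$-time sweep (maintaining the running product $\Pi$ and the partial sums $W_j$, with the observation that $1-W_i>0$ at the moment $p_{i,a}$ is processed) is a correct and useful filling-in of that detail, modulo the standard general-position assumption that no two retrieved points are equidistant from $q$.
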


\paragraph{Remarks.} %
(i) This approach is not efficient when the spread of location
probabilities is unbounded.  In this case, one may have to retrieve
$\Omega(n)$ points. Another approach may be to ignore points with
weight smaller than $\eps/k$, since even $k$ such weights from a
single uncertain point $P_i$ cannot contribute more than $\eps$ to
$\pi_i(q)$.  However, the union of all such points may distort other
probabilities.

Consider the following example.  Let $p_1 \in P_1 \in \P$ be the
closest point to the query point $q$. Let $w(p_1) = 3\eps$. Let the
next $n/2$ closest points $p_3, \ldots, p_{n/2+2}$ be from different
uncertain points $P_3, \ldots, P_{n/2+2}$ and each have weights
$w(p) = 2/n \ll \eps/k$. Let the next closest point
$p_2 \in P_2 \in \P$ have weight $w(p_2) = 5\eps $. With probability
$\pi_{p_1}(q) = 3\eps$ the nearest neighbor is~$p_1$.  The probability
that $p_2$ is the nearest neighbor is
$\pi_{p_2}(q) = (5\eps)(1-3\eps)(1-2/n)^{n/2} < (5\eps)(1 -
3\eps)(1/e) < 2\eps$.  Thus, $p_1$ is more likely to be the nearest
neighbor than $p_2$. However, if we ignore points
$p_3, \ldots, p_{n/2+2}$ because they have small weights, then we
calculate $p_2$ has probability
$\wpi_{p_2}(q) = (1 - 3\eps)(5\eps) > 4\eps$ for being the nearest
neighbor (assuming that $\eps$ is small enough).  So $\pi_2(q)$ will
be off by more than $2 \eps$ and it would incorrectly appear that
$p_2$ is more likely to be the nearest neighbor than~$p_1$.

(ii) Though the the data structure by \cite{ac-ohrrt-09} is optimal
theoretically, it is too complex to be implemented.  Instead, one may
use the order-$m$ Voronoi diagram to retrieve the $m$ closest points
(in unsorted order) to $q$.  This would yield $\anindex$ with
$O(m(nk-m))$ space and $O(m(nk - m)\log(nk) + nk\log^3(nk))$ expected
preprocessing time~\cite{abms-claho-98}, while preserving the query
time $O(\log(nk) + m)$, where $m = O(\rho
k\log(\rho/\eps))$. Alternatively, one may use quad-trees and a
branch-and-bound algorithm to retrieve $m$ points of $S$ closest to
$q$~\cite{h-gaa-11}.

\section{Conclusions}
In this paper, we investigated $\NN$ queries in a probabilistic
framework in which the location of each input point is specified as a
probability distribution function. We presented efficient methods for
returning all points with non-zero probability of being the nearest
neighbor, estimating the quantification probabilities and using it for
threshold $\NN$ queries. We conclude by mentioning two open problems:
\smallskip%
\begin{compactenum}[(i)]
    \item The lower-bound constructions for the complexity of
    $\NZVD(\P)$ are created very carefully, and these configurations
    are unlikely to occur in practice.  A natural question is to
    characterize the sets of uncertain points for which the complexity
    of $\NZVD(\P)$ is near linear.
    
    \item Are there simple and practical linear-size data structures
    for answering $\NZNN$ queries in sublinear time?
    
    \item Can we extend the spiral search method to continuous
    distributions (at least for some simple, well-behaved
    distributions, such as Gaussian), so that the query time is always
    sublinear?
\end{compactenum}

\hypersetup{allcolors=black}%
\newcommand{\etalchar}[1]{$^{#1}$}
 \providecommand{\CNFX}[1]{ {\em{\textrm{(#1)}}}}
  \providecommand{\tildegen}{{\protect\raisebox{-0.1cm}{\symbol{'176}\hspace{-0.03cm}}}}
  \providecommand{\SarielWWWPapersAddr}{http://sarielhp.org/p/}
  \providecommand{\SarielWWWPapers}{http://sarielhp.org/p/}
  \providecommand{\urlSarielPaper}[1]{\href{\SarielWWWPapersAddr/#1}{\SarielWWWPapers{}/#1}}
  \providecommand{\Badoiu}{B\u{a}doiu}
  \providecommand{\Barany}{B{\'a}r{\'a}ny}
  \providecommand{\Bronimman}{Br{\"o}nnimann}  \providecommand{\Erdos}{Erd{\H
  o}s}  \providecommand{\Gartner}{G{\"a}rtner}
  \providecommand{\Matousek}{Matou{\v s}ek}
  \providecommand{\Merigot}{M{\'{}e}rigot}
  \providecommand{\CNFSoCG}{\CNFX{SoCG}}
  \providecommand{\CNFCCCG}{\CNFX{CCCG}}
  \providecommand{\CNFFOCS}{\CNFX{FOCS}}
  \providecommand{\CNFSODA}{\CNFX{SODA}}
  \providecommand{\CNFSTOC}{\CNFX{STOC}}
  \providecommand{\CNFBROADNETS}{\CNFX{BROADNETS}}
  \providecommand{\CNFESA}{\CNFX{ESA}}
  \providecommand{\CNFFSTTCS}{\CNFX{FSTTCS}}
  \providecommand{\CNFIJCAI}{\CNFX{IJCAI}}
  \providecommand{\CNFINFOCOM}{\CNFX{INFOCOM}}
  \providecommand{\CNFIPCO}{\CNFX{IPCO}}
  \providecommand{\CNFISAAC}{\CNFX{ISAAC}}
  \providecommand{\CNFLICS}{\CNFX{LICS}}
  \providecommand{\CNFPODS}{\CNFX{PODS}}
  \providecommand{\CNFSWAT}{\CNFX{SWAT}}
  \providecommand{\CNFWADS}{\CNFX{WADS}}


\end{document}